\documentclass[a4paper,preprintnumbers,floatfix,superscriptaddress,notitlepage,nofootinbib]{revtex4-2}

\usepackage{amsmath,amsthm,amsfonts,amssymb,times,bbm,graphicx,dsfont,stmaryrd}

\usepackage{physics}
\usepackage{hyperref}

\pdfsuppresswarningpagegroup=1

\makeatletter

\renewcommand*{\p@subsection}{}

\renewcommand*{\p@subsubsection}{}
\makeatother

\newtheorem{theorem}{Theorem}

\newtheorem{lemma}[theorem]{Lemma}
\newtheorem{corollary}[theorem]{Corollary}
\newtheorem{definition}[theorem]{Definition}
\newtheorem{problem}{Problem}

\def\NN{\mathbbm{N}}

\def\Acal{\mathcal{A}}
\def\Bcal{\mathcal{B}}
\def\Ccal{\mathcal{C}}
\def\Dcal{\mathcal{D}}
\def\Ecal{\mathcal{E}}
\def\Fcal{\mathcal{F}}
\def\Gcal{\mathcal{G}}
\def\Rcal{\mathcal{R}}
\def\Ycal{\mathcal{Y}}
\def\Id{\mathds{I}}
\def\algtens{\otimes_{\text{alg}}}

\DeclareMathOperator{\Sym}{Sym}

\graphicspath{{Figures/}}

\begin{document}

\title{A convergent inflation hierarchy for quantum causal structures}

\author{Laurens T.\ Ligthart}
\email{ligthart@thp.uni-koeln.de}
\affiliation{Institute for Theoretical Physics, University of Cologne, Germany}
\author{Mariami Gachechiladze}
\affiliation{Department of Computer Science, Technical University of Darmstadt, Germany}
\affiliation{Institute for Theoretical Physics, University of Cologne, Germany}
\author{David Gross}
\affiliation{Institute for Theoretical Physics, University of Cologne, Germany}

\date{\today}

\begin{abstract} 
A \emph{causal structure} is a description of the functional dependencies between random variables. 
A distribution is \emph{compatible} with a given causal structure if it can be realized by a process respecting these dependencies. 
Deciding whether a distribution is compatible with a structure is a practically and fundamentally relevant, yet very difficult problem. 
Only recently has a general class of algorithms been proposed:
These so-called \emph{inflation techniques} associate to any causal structure a hierarchy of increasingly strict compatibility tests, where each test can be formulated as a computationally efficient convex optimization problem.
Remarkably, it has been shown that in the classical case, this hierarchy is \emph{complete} in the sense that each non-compatible distribution will be detected at some level of the hierarchy.
An inflation hierarchy has also been formulated for causal structures that allow for the observed classical random variables to arise from measurements on quantum states -- however, no proof of completeness of this \emph{quantum inflation hierarchy} has been supplied.
In this paper, we construct a first version of the quantum inflation hierarchy that is provably convergent.
It takes an additional parameter, $r$, which can be interpreted as an upper bound on the Schmidt rank of the observables involved.
For each $r$, it provides a family of increasingly strict and ultimately complete compatibility tests for correlations that are compatible with a given causal structure under this Schmidt rank constraint.
From a technical point of view, convergence proofs are built on \emph{de Finetti Theorems}, which show that certain \emph{symmetries} (which can be imposed in convex optimization problems) imply \emph{independence} of random variables (which is not directly a convex constraint). 
A main technical ingredient to our proof is a Quantum de Finetti Theorem that holds for general tensor products of $C^*$-algebras, generalizing previous work that was restricted to minimal tensor products.
\end{abstract}

\maketitle

\tableofcontents

\section{Introduction}\label{sec:introduction}

\subsection{Classical causal models}

Gaining information about causal relationships between variables from observational data is an important problem in empirical science  \cite{balke1997bounds,angrist1996identification,koller2009probabilistic}.
In the formalization laid out in Ref.~\cite{pearl2009causality}, causal relationships between classical random variables are modeled using  \emph{directed acyclic graphs} (DAGs; also: \emph{Bayesian networks} or \emph{causal structures}). 
Each vertex corresponds to a random variable.
Arrows denote causal relationships, in the sense that each variable is taken to be a function of its parents in the graph and independent randomness.
Causal structures that we consider here may have two types of vertices: variables that can be directly observed and variables that are not accessible, known as \emph{latent} or \emph{hidden} variables. 
In the graphical notation using DAGs, we will use circles to indicate latent variables and squares for observed ones (see Fig.~\ref{fig:triangle} for an example of a classical and a quantum causal structure).

We are interested in the following causal hypothesis testing problem:
Given a joint distribution over the observed random variables and a candidate causal structure,
can the distribution be realized in a model that is compatible with the structure?
One reason why this question is difficult is that we allow for \emph{arbitrary} functional relationships between parent and child variables, and also for the unobserved variables to take values in arbitrary sets.
(We do, however, restrict attention to the case where the observed variables take values in finite alphabets.)

Because there is an infinite set of possible functional relationships, it is a priori not obvious that the causal hypothesis testing problem is even algorithmically decidable.
Two computational solutions have recently been developed, though.
The first makes use of the realization that arguments based on Caratheodory's Theorem can be used to upper-bound the size of the sets in which the unobserved variables take values \cite{rosset2017universal}.
For finite-sized alphabets \emph{quantifier elimination} algorithms can in principle be used to decide compatibility (see, e.g.\ \cite{geiger2013quantifier,lee2017causal}).
The runtime of these solutions, however, renders these approaches impractical even for the smallest non-trivial scenarios.

The second solution is based on linear programming (LP) hierarchies \cite{wolfe2019inflation, navascues2020inflation}. 
Given a joint distribution $P$ and a \emph{relaxation level} $n$, 
these approaches construct an LP that runs in time polynomial in $n$ and the number of parameters describing $P$.
If the data fails the LP test at a level $k$, the distribution is not compatible with the causal structure.
If the data passes, no conclusions can be made.
As $n$ increases, more and more incompatible distributions will be recognized.
Such hierarchies are called \emph{complete} if every incompatible distribution will be rejected from some level $n$ onward.
Geometrically, the set of distributions accepted at relaxation level $n$ is an \emph{outer approximation} to the set of distributions compatible with a given structure.
A complete hierarchy thus corresponds to a sequence of outer approximations that converge to the true set.
In a break-through development, a convergent LP hierarchy for the classical causal hypothesis testing problem has been developed under the name of \textit{inflation technique}~\cite{navascues2020inflation}.

The high-level idea behind the inflation technique is to check for the existence of certain \emph{symmetric extensions}.
Indeed, assume that a distribution is compatible with a candidate causal structure.
One can then define an \emph{inflated} model that involves $n$ independent copies of the hidden variables.
This larger model has a number of symmetries:
One can exchange a hidden variable from one of the copies with the same hidden variable from another copy, without affecting the distribution (see Fig.~\ref{fig:triangle_inflation}).
The $n$-th level of the hierarchy tests whether such an $n$-fold inflated model exhibiting all these symmetries exists.

From a technical point of view, one difficulty of the causal compatibility problem lies in the fact that the distribution of the variables must not show any dependencies other than those that are explicitly modeled by the graph.
The set of independent probability distributions is not convex as a subset of all multivariate distributions.
Therefore, there is no direct way of enforcing the independence conditions that are part of the definition of a causal model in a convex optimization problem.
The inflation hierarchy circumvents this problem by imposing the symmetry constraints that follow from independence.
These symmetries are linear relations, which can easily be incorporated into a convex formulation.
Fortunately, it has long been realized that in an asymptotic sense, symmetries conversely often do imply independence.
Such results are known as \emph{de Finetti Theorems} \cite{de1974theory, raggio1989quantum, caves2002unknown, renner2007symmetry, brandao2017quantum, navascues2020inflation}. 
It is therefore unsurprising that the completeness proof of the classical inflation technique 
builds on a tailor-made de Finetti Theorem
\cite{navascues2020inflation},
and that a generalized Quantum de Finetti Theorem is at the heart of our own argument.

\begin{figure}
\centering
    (a)
    \includegraphics[width=.3\textwidth]{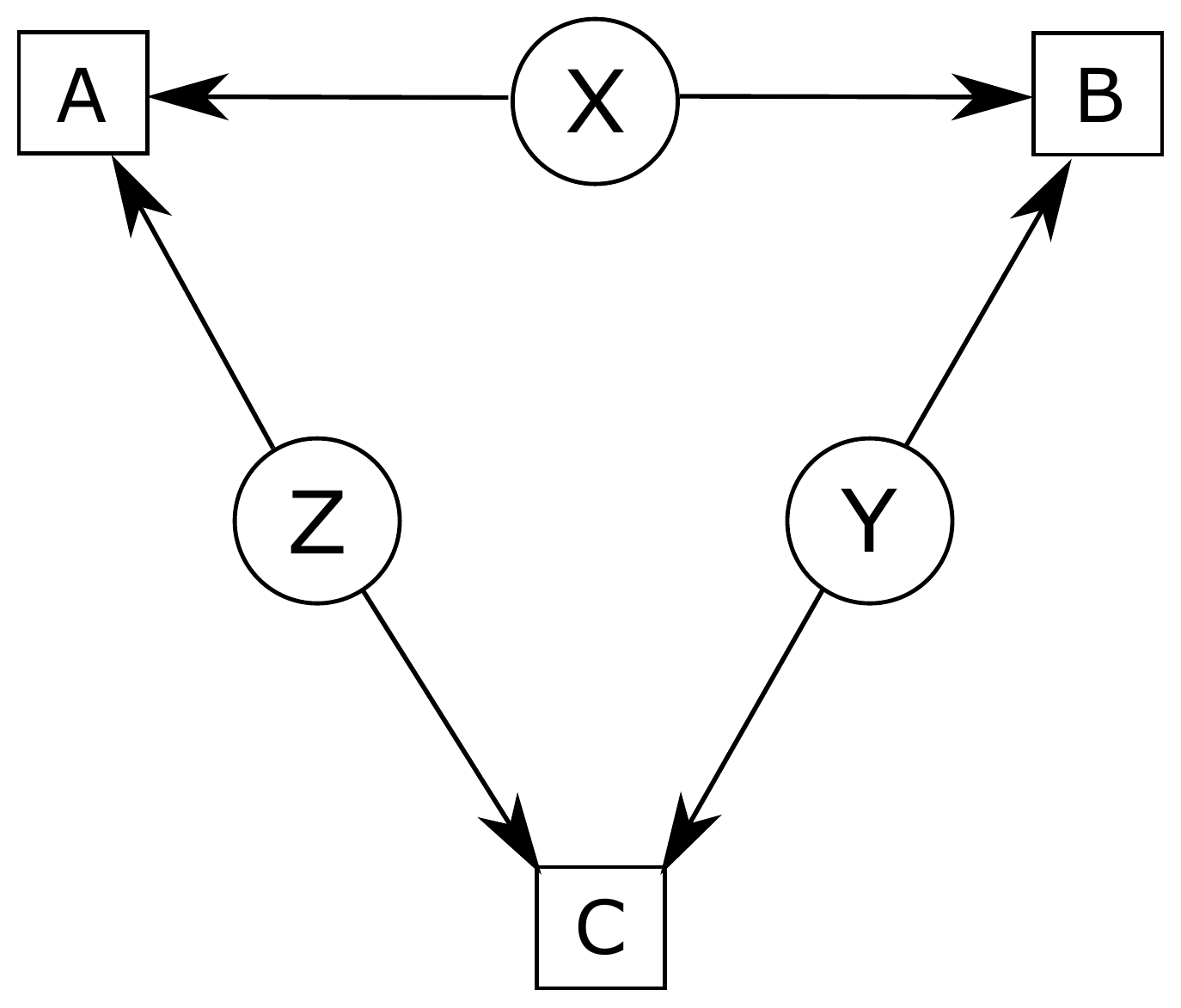}
    \hspace{2cm}
    (b)
    \includegraphics[width=.3\textwidth]{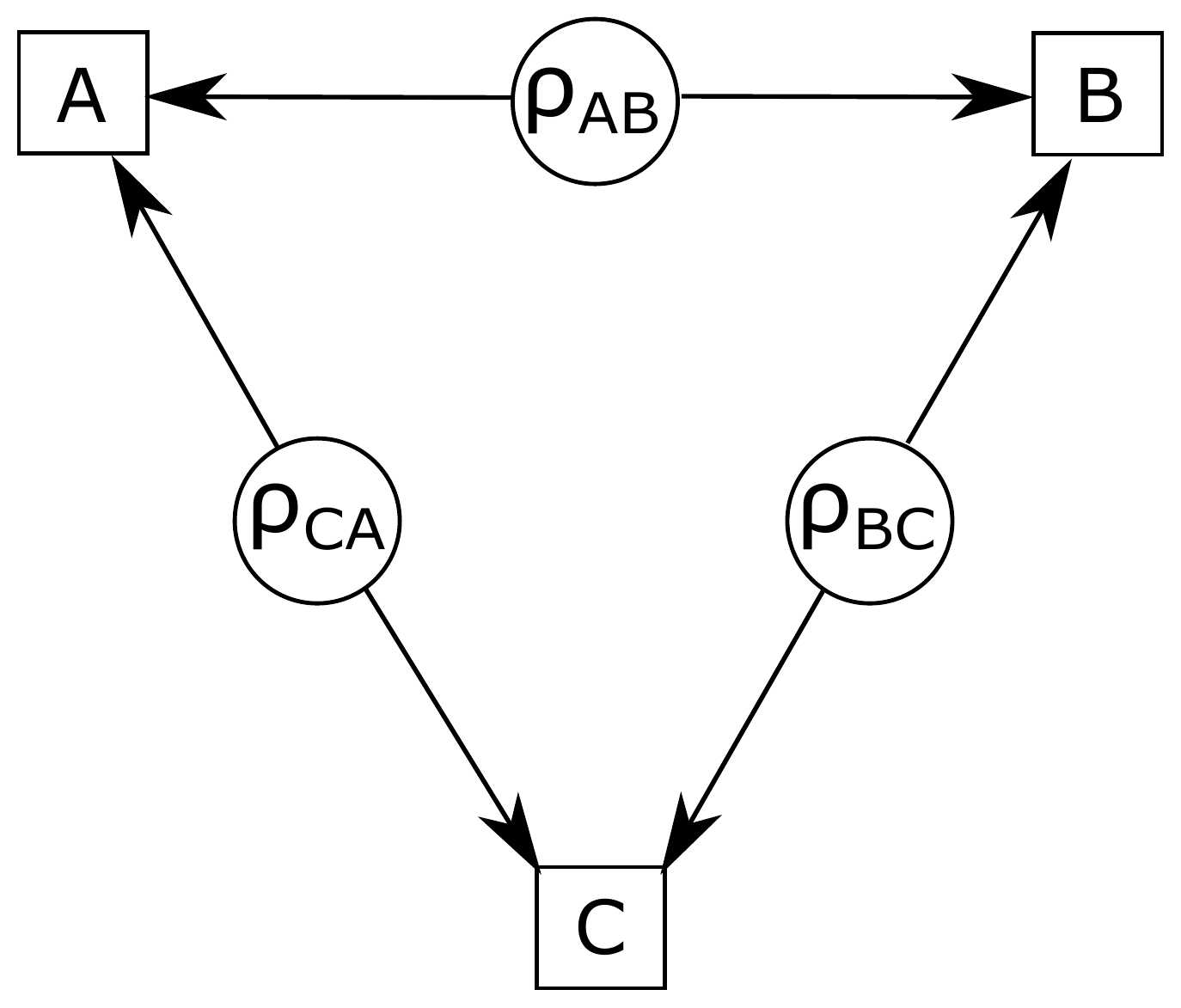}
    \caption{The \emph{triangle scenario} -- a conceptually simple, but mathematically non-trivial causal structure that serves as the guiding example in this paper.
	Round vertices denote \emph{latent variables} that are not directly accessible, while observed variables are written in squares.
	Arrows represent causal relations. 
	In the classical case depicted in Fig.~(a), the graph denotes the hypothesis that the observed variables $A, B, C$ arise in a process where 
	(1) the latent variables $X, Y, Z$ are chosen from a product distribution, and 
	(2) each observed variable $A, B, C$ is computed as a function of its graph-theoretical parents and additional independent randomness.
	Given a joint distribution of the observed variables and a candidate causal structure, we aim to decide whether the distribution is compatible with a process as outlined above.
	For example, take $A,B,C$ to be binary random variables.
	It is easy to see that a joint distribution where the outcomes are random but perfectly correlated is not compatible with the triangle scenario.
	In the quantum case, shown in Fig.~(b), each of the round nodes represents a bi-partite quantum state.
	One subsystem is distributed along each outgoing arrow.
	At each square vertex, a bi-partite measurement is performed on the two incoming quantum systems, and the result is assigned to an observed classical random variable.
}
        \label{fig:triangle} 
\end{figure}

\subsection{Quantum causal structures}
\label{sec:quantum causal}

It is natural to generalize the causal hypothesis testing problem to \emph{quantum causal structures} \cite{chaves2015information,wolfe2021quantum,costa2016quantum,barrett2019quantum,allen2017quantum}.
A conceptual difference to classical causal models is that, due to the no-broadcasting theorem, quantum states cannot be both measured and also serve as an input for further processing.
A general framework for a graphical notation for quantum causal structures is beyond the scope of this document (a detailed discussion can be found e.g.\ in Refs.~\cite{chaves2015information,allen2017quantum,barrett2019quantum}).
Here, we mainly focus on the subset of quantum causal structures known as \emph{correlation scenarios} \cite{fritz2012beyond}.
These comprise one layer of hidden nodes and one layer of observed nodes, with arrows pointing from hidden to observed ones (more general causal structures are discussed in Section \ref{sec:arbitrary_quantum_causal_structures}).

The input to the causal hypothesis test for correlation scenarios is a bipartite directed graph and a joint probability distribution with one classical variable corresponding to every observed node (see Fig.~\ref{fig:triangle}(b) and Fig.~\ref{fig:triangle_inflation}(b) for examples).
The problem is then to decide whether the classical distribution could have arisen from the following process:
\begin{enumerate}
\item
For each hidden node, prepare a quantum state on as many systems as there are outgoing arrows from that node. 
The quantum state can be entangled among the subsystems, but the states associated with different latent nodes must be independent.
Then, distribute the subsystems along the arrows to the observed nodes.
\item
At each observed node, perform a global measurement on all incoming quantum systems.
Assign the result to the observed random variable.
\end{enumerate}
This is the \emph{quantum causal hypothesis testing problem} we are concerned with in the present paper. 

To give an example, we again take the triangle scenario, which is depicted for the quantum case in Fig.~\ref{fig:triangle}(b). 
The latent variables are now quantum systems with quantum states that are labeled according to the observer to which they are sent. 
For example, $\rho_{AB}$ is a bipartite quantum state, of which the first part is sent to Alice and the second part to Bob. 
The arrows indicate independent quantum channels and whenever an arrow ends in a classical observable node, a measurement is performed at that node. 
Nodes that do not have any incoming edges are called root nodes and are assumed to be prepared in
independent initial states. 
Note the abuse of notation that is commonly used in the graphical representation of quantum causal structures: The hidden nodes are labeled by quantum states, as opposed to the quantum systems on which they live. This is in contrast to the classical case, where the hidden nodes are labeled by the random variables and not by their probability distributions. Here we have opted to adopt this commonly used abuse of notation, as the quantum state is generally considered as the more central object.

Recently, a hierarchy of semidefinite programming tests generalizing the inflation technique to the quantum case has been proposed \cite{wolfe2021quantum}. 
However, it is an open problem to decide whether it is \emph{complete}, i.e.\ whether every distribution that is incompatible with a given quantum causal structure will be rejected at some level of the hierarchy.

\subsubsection{Related problems}

In order to get a feeling for the causal hypothesis testing  problem, it is instructive to consider a few related examples. 

A closely related and particularly well-studied case is the \emph{Bell scenario} (Fig.~\ref{fig:Bell_scenario}). 
In a Bell scenario, observable classical variables come in pairs:
A parent variable that is interpreted as a party's choice of which measurement to perform,
and a child variable that is interpreted as the output of the measurement.
There is also one hidden variable that connects to every child variable.
The celebrated result of Bell says that the set of joint distributions observable in such a scenario differs between the case where the hidden variable is taken to be classical, and the case where it is allowed to be quantum. 
The causal hypothesis testing problem for Bell scenarios is well-understood, both classically and quantum mechanically.
In either case, the set of possible distributions of the measurement results conditioned on the input forms a convex set~\cite{brunner2014bell}.

If the hidden variable is classical, this convex set is a \emph{polytope}.
As such, it is defined by a finite number of linear inequalities known as \emph{Bell inequalities}.
This immediately gives rise to a finite algorithm for deciding the causal hypothesis test: A distribution is compatible with the causal structure if and only if it satisfies all Bell inequalities for the scenario.

Characterising the set of correlations that can arise from a shared quantum state is considerably harder.
In certain low-dimensional situations, complete analytical formulas are known~\cite{cirel1980quantum, masanes2005extremal}.
For the general case, the most explicit known numerical tool is the \emph{Navascués-Pironio-Acín} (NPA) hierarchy \cite{navascues2007bounding, navascues2015characterizing,pironio2010convergent},
a convergent\footnote{
	Convergence assumes that local subsystems are modeled as commuting observable algebras, see Sec.~\ref{sec:subsystems} for discussion.	
}
SDP hierarchy.

With the result presented in this paper, the situations for Bell scenarios and correlation scenarios are now analogous:
If the hidden variables are classical, then compatibility with both scenarios can be decided using linear programming;
if they are quantum, then there exist convergent SDP hierarchies.

\begin{figure}
    \centering
    \includegraphics[width=0.28\linewidth]{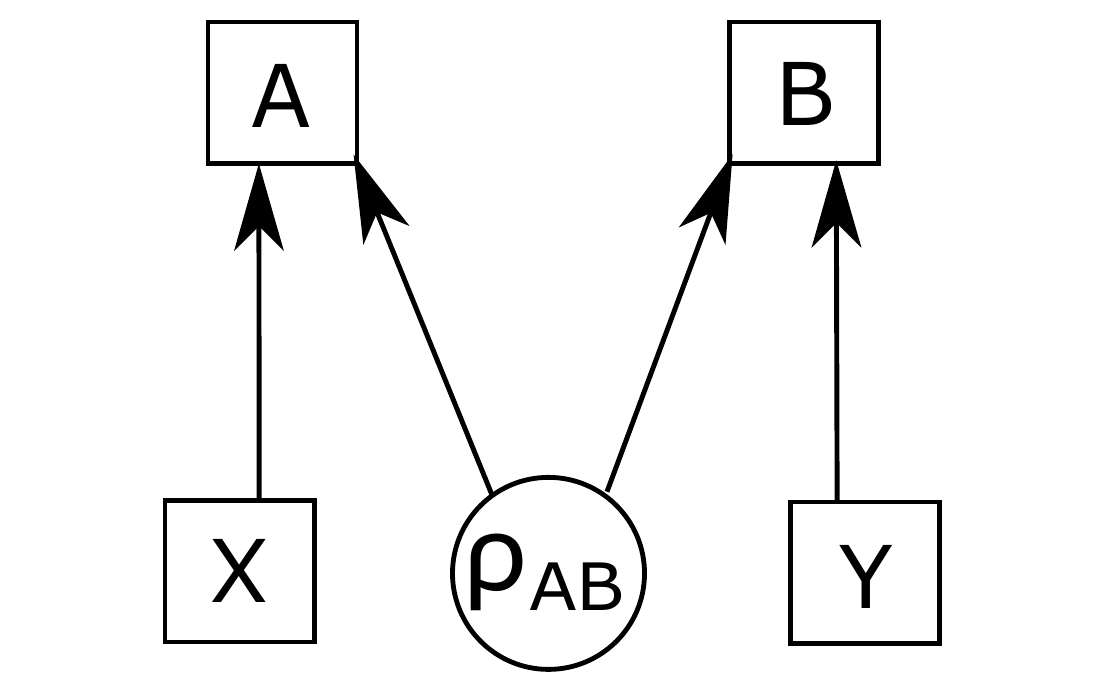}
    \caption{The Bell scenario, where the latent variable is a quantum state. The set of correlations that can arise from this quantum version of the Bell scenario is larger than its classical counterpart. This set of quantum correlations can be categorized by the NPA hierarchy, a converging hierarchy of semidefinite programs.}
    \label{fig:Bell_scenario}
\end{figure}

Finally, we remark that mathematical methods different from convex optimization hierarchies have been developed for the quantum causal compatibility problem.
They are usually computationally cheaper, but fail to be complete and might only be applicable to certain scenarios.
These include: 
polynomial Bell inequalities tailored for quantum networks~\cite{rosset2016nonlinear,chaves2016polynomial,tavakoli2021bell}; 
methods using entropy inequalities \cite{chaves2015information,henson2014theory,weilenmann2017analysing,budroni2016indistinguishability};
covariance matrices \cite{kela2019semidefinite,aaberg2020semidefinite,kraft2021characterizing};
methods tailored to  bilocality~\cite{branciard2012bilocal} and star network scenarios~\cite{tavakoli2014nonlocal}; 
correlations obtained in the triangle, when each source produces the same pure entangled state of two qubits~\cite{renou2019genuine}.

\subsection{Results}

Our main motivation is the approximate quantum causal compatibility problem.

\begin{problem}[Approximate quantum causal compatibility] \label{prob:compatibility}
	Given $\epsilon \geq 0$, a causal structure and a probability distribution over observable variables $P$, determine whether there exists a distribution $\tilde{P}$ that can be produced by a quantum model compatible with the causal structure, such that $\|\tilde{P}-P\|_2^2 \leq \epsilon$. 
\end{problem}

Following Refs.~\cite{navascues2020inflation, wolfe2021quantum}, we will phrase it as a special case of a more general \emph{causal polynomial optimization problem}.
There, the goal is to minimize a polynomial function $f_0(\rho)$ over states $\rho$ that are compatible with the causal structure and in addition satisfy a set of polynomial constraints $f_i(\rho)=0$.

The notion of a ``polynomial function in states'' needs to be explained.
A quantum model associates with each party an algebra of observables.
We assume that these algebras possess a finite set of \emph{generators} $\Gcal = \{g_i\}_i$ (in the main text, we describe how to derive a suitable set of generators for each causal structure and any given number of outcomes per party).
For now, we treat the $g_i$ as (non-commutative) variables -- later we will optimize over all possible assignments of concrete operators to the $g_i$.
Let $\Fcal^{(k)}(\Gcal)$ be the vector space formed by complex linear combinations of words of length $k$ in the symbols $g_i$ and $g_i^*$.
Each element $x\in\Fcal^{(k)}$ defines a ``linear function $f$ on states'' in the following sense:
Choose an assignment of operators to the $g_i$.
Let $\Dcal$ be the resulting algebra of observables.
Then $x$ can be understood as an element of $\Dcal$.
Let $\rho$ be a state on $\Dcal$.
Then the linear function is just $f: \rho \mapsto \rho(x)$.
To define a degree-$g$ polynomial, start with an element $x$ in the $g$-fold tensor product $\Fcal^{(k)} \otimes \dots \otimes \Fcal^{(k)}$ and set
\begin{align}\label{eqn:intro polarization}
	f: \rho \mapsto \rho^{\otimes g}(x).
\end{align}
We say that $x$ is a \emph{polarization} of the polynomial $f$.
Note that polarizations $x$ are defined independently of any assignments of concrete operators to the generators -- so they can be used to specify polynomial objective functions for problems that optimize over such assignments.\footnote{
		In the context of quantum inflations, there are two distinct sources of ``polynomials'' that must not be confused.
		First, general operators arise as (non-commutative) polynomials in the generators.
		The word ``polynomial'' in the \emph{non-commutative polynomial optimization} (NPO) framework refers to this sense.
		But NPO objective functions are still linear in the state.
		In contrast, the term ``polynomial'' in the \emph{quantum causal polynomial optimization problem} -- both as treated here and in Ref.~\cite{navascues2020inflation, wolfe2021quantum} -- indicates that we are allowing for objective functions that are polynomials in the state (in the sense explained above).
		Using the notions introduced in the main text, the degree of polynomial expressions in the generators of the algebra is connected to the level $k$ of the NPO hierarchy, while the degree of polynomial functions of the states corresponds to the level $n$ of the inflation hierarchy.
}

\begin{problem}[Quantum causal polynomial optimization] \label{prob:optimization}
	Given a causal structure, 
    the number of possible measurement outcomes for each party,
	a polynomial function $f_0$ on quantum states (as defined above), and a countable set of polynomial functions $f_1, f_2, \dots$ that are non-negative on quantum states compatible with the causal structure.
	Find
	\begin{align*}
	\begin{split} 
		f^\star &= \min_{\rho} f_0(\rho)  \\
			\text{s. t.} \quad &f_i(\rho) = 0 \quad i\geq 1 \\
			&\rho \text{ \rm{is compatible with the causal structure}}.
	\end{split}
	\end{align*}
\end{problem}

Problem \ref{prob:compatibility} reduces to Problem \ref{prob:optimization} by choosing $f_0$ to be the $2$-norm distance between the observed data $P$ and the one produced by the state.
More general objective functions can be important in applications -- see \cite[Section~VII]{wolfe2021quantum} for examples.

The rigorous definition of a quantum causal structure depends on a mathematical model of the notion of a ``local subsystem''.
In this paper, we model local subsystems via commuting observable algebras of bounded operators.
One could also consider alternative models:
Instead of via commuting algebras, locality could be formalized in terms of tensor products of Hilbert spaces 
\cite{scholz2008tsirelson,junge2011connes,fritz2012tsirelson,ji2020mip}.
Even though the causal compatibility problem only involves bounded operators (the algebra generated by the POVM (\emph{positive operator-valued measure}) elements that give rise to the observed probabilities), one could allow for unbounded operators in the local algebras on which they act.
A detailed discussion of these modeling decisions is given in Sec.~\ref{sec:subsystems}.
As we argue there, while these distinctions are certainly of mathematical interest, it seems unlikely that they will be relevant for data that arises from physical scenarios.

The SDP hierarchy that we describe differs slightly from the original quantum inflation hierarchy of \cite{wolfe2021quantum}.
Most importantly, we add two new parameters: $r, C$, which are related to the \emph{Schmidt decomposition} of the measurement operators.
To define them, consider a node of a quantum causal structure, say the one that gives rise to the random variable $A$ in the triangle scenario.
Each possible outcome is associated with a  
POVM element $E$.
As there are two incoming arrows to this vertex, $E$ acts on two quantum systems. 
Call the observable algebras acting on the respective subsystems $\mathcal{A}_-, \mathcal{A}_+$.
For fixed values of $r, C$, we assume that $E$ is of the form
\begin{align*}
    E &= \sum_{\alpha=1}^r 
		e_-(\alpha) e_+(\alpha) 
\end{align*}
for suitable operators 
$e_-(\alpha) \in \Acal_-,
e_+(\alpha) \in \Acal_+$
such that
\begin{align*}
	\|e_-(\alpha)\|, \|e_+(\alpha)\| \leq C.
\end{align*}
Call models like this \emph{rank-constrained with parameters $r,C$}.

\begin{problem}[Rank-constrained quantum causal polynomial optimization] \label{prob:optimization rank}
	With the notation of Problem~\ref{prob:optimization}, find
	\begin{align*}
	\begin{split} 
		f^\star_{r,C} &= \min_{\rho} f_0(\rho)  \\
			\text{s.\ t.} \quad &f_i(\rho) = 0 \quad i\geq 1 \\
			&\rho \text{ \rm{
   is a state on a model that is rank-constrained with parameters $r, C$ }} \\
   &\rho \text{ \rm{is compatible with the causal structure.
}}
	\end{split}
	\end{align*}
\end{problem}

Here, we construct semi-definite programming relaxations for these problems:

\begin{theorem}
	Use the notation of Problem~\ref{prob:optimization} and Problem~\ref{prob:optimization rank}.
	For every $r,C$, there exists a hierarchy of semi-definite programs indexed by an \emph{inflation parameter} $n$ and an \emph{NPO parameter} $k$.
	Denote the optimal values by $f^\star_{r,C,n,k}$.

	The hierarchy is complete in the sense that, as $n,k\to\infty$, the $f^\star_{r,C,n,k}$ converge to $f^\star_{r,C}$ from below.
	What is more, as $r,C\to\infty$, the $f^\star_{r,C}$ converge to $f^\star$ from above.
\end{theorem}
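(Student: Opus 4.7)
The plan is to decouple the two convergence statements: treat the limit $(n,k)\to\infty$ at fixed rank parameters $(r,C)$ by combining a standard non-commutative polynomial optimization (NPO) convergence argument with the generalized Quantum de Finetti Theorem announced in the abstract, and treat the limit $(r,C)\to\infty$ by a truncated operator Schmidt expansion. The rank-constrained assumption furnishes for each POVM element $E$ a decomposition $E=\sum_{\alpha=1}^{r} e_-(\alpha) e_+(\alpha)$ with operators of norm at most $C$, so it is natural to take these ``Schmidt factors'' as the generators $\Gcal$ of the NPO framework. The inflation at level $n$ introduces $n$ independent copies of each latent source; the SDP at level $(n,k)$ optimizes over moment matrices, of words of length $\le k$ in the inflated generators, that are invariant under the natural $S_n$-action permuting copies. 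Monotonicity and the inequality $f^\star_{r,C,n,k}\le f^\star_{r,C}$ are immediate from the relaxation perspective.

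For the $k$-limit at fixed $(n,r,C)$, I would adapt the standard convergence argument for the NPA/Pironio--Navascu\'es--Ac\'in hierarchy: since the generators have norm at most $C$, the feasible moment matrices lie in a compact set, so an accumulation point exists as $k\to\infty$, and a GNS-type reconstruction produces an honest representation of the generators by bounded operators carrying an $S_n$-symmetric state. Call the resulting intermediate value $f^\star_{r,C,n}$. For the subsequent $n$-limit, I would invoke the Quantum de Finetti Theorem on arbitrary $C^*$-tensor products: any $S_n$-symmetric state attaining $f^\star_{r,C,n}$ has finite-party marginals that converge, along a subsequence, to a convex combination of product states on the relevant tensor product of local algebras. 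Because the objective $f_0$ is polynomial of fixed degree via the polarization~\eqref{eqn:intro polarization}, and each constraint $f_i=0$ involves only fixed numbers of parties, one can pass to the limit and extract a genuine product state compatible with the causal structure, so $f^\star_{r,C,n}\to f^\star_{r,C}$ from below.

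For the $(r,C)$-limit I would approximate an arbitrary compatible quantum model by rank-constrained ones. Each POVM element $E\in\Acal_-\otimes\Acal_+$ admits an operator Schmidt expansion whose truncation at the first $r$ terms, combined with a norm cutoff $C$, converges to $E$ in a sense strong enough to control the observed probabilities. Since those probabilities and the objective $f_0$ depend continuously on the POVM data, the truncated models give $f^\star_{r,C}\downarrow f^\star$ from above as $r,C\to\infty$.

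The hard part is the generalized de Finetti Theorem used in the $n$-limit. The standard quantum de Finetti Theorems of \cite{raggio1989quantum, caves2002unknown} are formulated on minimal tensor products, whereas the commuting-observables model of local subsystems that the paper adopts forces one to work on arbitrary $C^*$-tensor products, where the usual proof strategies via completely positive maps into the minimal tensor product break down. Establishing that permutation symmetry still implies approximate factorization in this more general setting, and doing so in a way that is compatible with the finite-rank Schmidt structure required by the inflation SDP, is where essentially all of the new technical content must sit.
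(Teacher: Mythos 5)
Your proposal follows essentially the same route as the paper: Schmidt factors with norm bound $C$ as NPO generators, the standard Pironio--Navascu\'es--Ac\'in argument (via compactness of the state space) for the $k$-limit, the generalized de Finetti Theorem on arbitrary $C^*$-tensor products for the $n$-limit, and a truncated operator Schmidt expansion (repaired into an exact POVM) for the $(r,C)$-limit, which is precisely the paper's Lemma~\ref{lem:two-norm}. The only details your sketch compresses are that the paper uses three \emph{independent} permutation actions $S_n^{\times 3}$ (one per latent source) together with non-diagonal permutations to turn the de Finetti product structure across inflation copies into factorization across the three sources, and a short measure-theoretic argument, relying on the assumed non-negativity of the $f_i$ on compatible states, to extract a single optimal product state from the de Finetti convex combination.
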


Increasing the parameter $C$ does not come with significant computational cost (c.f.\ the discussion of the \emph{Archimedean property} in \cite{pironio2010convergent}).
Larger values of $r$, in contrast, do correspond to a larger number of variables and constraints in the SDP formulation.
The decision to add these additional degrees of freedom must therefore be well-justified.
While we cannot prove that they are strictly necessary (which would in particular imply that the original quantum inflation hierarchy is not convergent), we identify some challenges that any constructive convergence proof that does not include these extra variables would face (Sec.~\ref{subsec:local_algebras}).

\subsubsection{Auxiliary results}

A quantum causal structure imposes independence conditions on the latent nodes. 
We deduce independence from the symmetry constraints imposed in the SDP hierarchy via a Quantum de Finetti Theorem.
A technical challenge arises because SDP hierarchies model local subsystems via commuting observable algebras, rather than via Hilbert space tensor products (also known as \emph{minimal tensor products} in the $C^*$-algebraic literature) \cite{scholz2008tsirelson,junge2011connes,fritz2012tsirelson,ji2020mip}.
To the best of our knowledge, the existing literature on de Finetti Theorems for infinite systems is phrased only in terms of the minimal tensor product \cite{stormer1967symmetric, raggio1989quantum}, which is not general enough for our purposes.
To address this, we show that the arguments in Ref.~\cite{raggio1989quantum} generalize to the \emph{maximal tensor product}, and hence
to any way of realizing local observables as commuting subalgebras of a global system.
This is the content of Theorem~\ref{thm:finetti maximale}.

Both the original quantum inflation hierarchy \cite{wolfe2021quantum} and our work are built 
on the \emph{non-commutative polynomial optimization} (NPO) hierarchy introduced in Ref.~\cite{pironio2010convergent}.
Because the generalized Quantum de Finetti Theorem that is central to our convergence proof requires $C^*$-algebraic methods, we rephrase the framework of Ref.~\cite{pironio2010convergent} in this language in Section~\ref{sec:cstar}.
Following Ref.~\cite[Section II.8.3]{blackadar2006operator}, we give a description of NPO problems as optimizations over the state space of a \emph{universal $C^*$-algebra}.
This more abstract formulation might be beneficial in arguments that require algebraic methods.

\subsection{Outline}

The paper is structured as follows. We  start in Sec.~\ref{sec:background_challenges} by explaining the necessary theory on $C^*$-algebras, causal structures and the inflation technique. Here we also outline the main challenges that one faces when trying to tackle the question of causal compatibility. In Sec.~\ref{sec:max_de_finetti} we solve one of the challenges by adapting a proof of a de Finetti Theorem for the minimal $C^*$ tensor product by Raggio and Werner \cite{raggio1989quantum} to a similar de Finetti Theorem for the maximal $C^*$ tensor product. 
Sec.~\ref{sec:new_SDP} describes an SDP hierarchy for the causal optimization problem with Schmidt-rank constraints on the measurement operators. We show that this hierarchy is convergent and has the causal compatibility problem as a special case. The proof relies heavily on the de Finetti Theorem of Sec.~\ref{sec:max_de_finetti}. While formulating the SDP hierarchy and the proof, we focus on the well-known triangle scenario, but in Sec.~\ref{sec:arbitrary_quantum_causal_structures} we show that a converging SDP hierarchy can be found for any quantum causal structure. We end the main text of the paper in Sec.~\ref{sec:Conclusions} by recounting the most important results of the paper and discussing some questions that remain open. 

\section{Technical background and challenges} \label{sec:background_challenges}

In this section, we introduce the technical tools used in this paper and explain the challenges one encounters when trying to rigorously formulate and decide the  completeness problem for inflation hierarchies.

\subsection{Challenge 1: Mathematical models of subsystems}
\label{sec:subsystems}

The definition of a quantum causal structure depends crucially on the notion of a \emph{subsystem}.
Here, we describe two subtle modeling decisions that have to be made when making this term precise.

\subsubsection{Hilbert space tensor products vs commuting observable algebras}

In elementary quantum mechanics, the central object that characterizes a quantum system is its Hilbert space.
In this framework, one thus associates to each subsystem a Hilbert space $\mathcal{H}_i$ and takes the joint Hilbert space to be their tensor product $\mathcal{H}_{12}=\mathcal{H}_1\otimes \mathcal{H}_2$.
The set of observables is then derived from the Hilbert space structure.
For the individual subsystems, the observables are the linear operators $\mathcal{A}_i = L(\mathcal{H}_i)$.
They can be embedded into $\mathcal{A}_{12}=L(\mathcal{H}_{12})$ by taking the tensor product with identities on the other subsystem:
\begin{align}\label{eqn:intro_min}
				\mathcal{A}_1 \simeq \mathcal{A}_1 \otimes \Id,
				\qquad
				\mathcal{A}_2 \simeq \Id \otimes \mathcal{A}_2.
\end{align}

In contrast, in \emph{algebraic quantum mechanics} (c.f.\ \cite[Chapter~8]{strocchi2008introduction}, \cite{bratteli2012operator}),
the set of observables is seen as being more central.
Consequently, one associates an observable algebra $\mathcal{A}_i$ with each subsystem.
A joint system is then any algebra $\mathcal{A}_{12}$ that contains $\mathcal{A}_1$ and $\mathcal{A}_2$ as commuting subalgebras and is generated by them.
Clearly, the construction in (\ref{eqn:intro_min}) provides an example of algebras standing in such a relation -- but it turns out that there are more general scenarios that cannot be realized using Hilbert space tensor products.

Commutativity has physical consequences, e.g.\ in terms of joint measurability -- so if we accept the quantum-mechanical description of observable phenomena in terms of operators, we are forced to conclude that measurements in space-like separated regions are described in terms of commuting operators.
However, the stronger requirement that the underlying Hilbert space forms a tensor product is not obviously physically motivated.

For a long time, it was an open problem to decide whether 
there are correlations that can be realized by performing measurements on commuting operators, but not on operators acting on distinct factors of a tensor product Hilbert space.
In quantum information theory, this question has been known as \emph{Tsirelson's problem} and was shown to be equivalent to other long-standing open problems in operator theory, most notably \emph{Connes' embedding problem} \cite{scholz2008tsirelson,junge2011connes,fritz2012tsirelson}.
In a recent break-through result, these questions have been decided:
The commuting-operator model \emph{does} capture more general correlations than the tensor-product model \cite{ji2020mip}.

The above raises the question which of the two mathematical models to adopt.
Here, we take a pragmatic approach.
It has long been realized (and in fact, has historically triggered Tsirelson to speculate) that commutativity is easily encoded as a constraint in SDPs that give outer approximations to the set of quantum correlations \cite{navascues2008convergent}. 
The same is \emph{not} true for the tensor product property.
Since either model is legitimate, but one is a better fit for the SDP hierarchies we want to make a statement about, we opt for the approach in which locality is modeled by commutativity.

Therefore, in our work, we will assume throughout that one can associate an algebra of observables with each party and that these algebras commute.

Terms -- like \emph{product states} or \emph{separable states} -- that are commonly defined in quantum information theory with reference to a Hilbert space tensor product structure will be used in this paper in an analogous way that only relies on commutativity.
In particular:

\begin{definition}
	Let $\Acal, \Bcal$ be commuting algebras.
A state $\rho$ is said to be a \emph{product state} across $\Acal\mid \Bcal$ if
\begin{align}
    \rho(x\cdot y) = \rho(x) \rho(y) \qquad \forall x \in \Acal, y \in \Bcal.
\end{align}
A state is said to be \emph{separable} across $\Acal \mid \Bcal$ if it can be written as a convex combination of such product states, i.e.~if
\begin{align}
    \rho(x \cdot y) = \int \dd \mu (\sigma) \sigma(x \cdot y) =  \int \dd \mu (\sigma) \sigma(x) \sigma(y),
\end{align}
for some probability measure $\mu$ over product states.
\end{definition}

\subsubsection{Bounded vs unbounded local operators}

In this paper, we are interested in observable probabilities that describe measurements on quantum systems.
Probabilities are associated with 
POVM elements.
POVM elements are bounded: Their operator norm does not exceed $1$.
It follows that the entire observable algebra generated by POVM elements consists of bounded operators.
Many problems -- e.g.\ the problem of characterizing the set of correlations compatible with a Bell scenario sketched above -- can be described solely in terms of this algebra.
From a technical point of view, this property can provide significant simplifications.
For example, 
the convergence proofs of the NPO hierarchy \cite{pironio2010convergent} or  
the Quantum de Finetti Theorem for infinite-dimensional quantum systems \cite{raggio1989quantum} make central use of the fact that operators are bounded.

It thus comes as bad news that this simplifying property is not obviously available for the causal compatibility problem.

Indeed, consider a node of a quantum causal structure, say the one that gives rise to the random variable $A$ in the triangle scenario.
Each possible outcome $A=a$ is associated with a POVM element $E_a$.
As there are two incoming arrows to this vertex, $E_a$ acts on two quantum systems. 
We therefore assume that the observable algebra $\Acal$ of the joint system is generated by two commuting subalgebras $\mathcal{A}_-, \mathcal{A}_+$.
These local algebras play an important role in the definition of the causal structure:
It is with respect to them that the state is required to factorize.
But, while $E_a$ is bounded, the authors are not aware of any result that would imply that one can assume the same is true for elements of $\mathcal{A}_-, \mathcal{A}_+$.

More concretely, we cannot exclude the possibility that there is a mathematical model of ``local quantum systems'' in which one can assign a precise meaning to the series
\begin{align}\label{eqn:bounded issue}
    E_a &= \sum_{\alpha=1}^\infty 
		e_-(a,\alpha) e_+(a,\alpha) 
\end{align}
for suitable \emph{unbounded} operators $e_-(a,\alpha) \in \Acal_-, e_+(a,\alpha) \in \Acal_+$, but where no such expression for $E_a$ exists if the $e_-, e_+$'s are required to be bounded.

In our precise definition of a quantum causal model, we will \emph{assume} that it is not necessary to allow for such singular situations.
The convergence proof makes use of this assumption (implicitly, by virtue of being phrased in terms of $C^*$-algebras, which model bounded operators).
In fact, the main difference between our hierarchy and the original quantum inflation one \cite{wolfe2021quantum} is that we add explicit generators and norm constraints for these local operators
(see Section~\ref{subsec:local_algebras} for further evidence that some such addition may be necessary).

While it is an interesting question about operator algebras whether the assumption is actually necessary, it seems that under mild \emph{physical} conditions, observed correlations can be approximated using models for which it is valid.
For example, if each subsystem is endowed with a non-degenerate Hamiltonian and the state has finite energy, one can always compress the local observables to finite-dimensional low-energy subspaces on which they are obviously bounded.
So as long as not \emph{both} the observable and the state display rather singular behavior, an approximate bounded model should always be possible in physical situations.

\subsection{\texorpdfstring{$C^*$}{}-algebras}
\label{sec:cstar}

\subsubsection{Definitions and the GNS construction}

The mathematical abstraction of an algebra of bounded operators on a Hilbert space is captured by the concept of a \emph{$C^*$-algebra}. 
In this section, we introduce the notions that are used in this paper. For more details we refer the reader to e.g.\ Refs.~\cite{blackadar2006operator, bratteli2012operator, takesaki1}.

Consider a complex algebra $\Acal$ with an anti-linear involution $*$.
A \emph{$C^*$-norm} on $\Acal$ is a norm satisfying
\begin{align} \label{eq:c*norm}
	  \norm{x y} \leq \norm{x}\norm{y},
		\quad
    \norm{x^* x} = \norm{x}^2 \qquad \forall\,x, y \in \Acal.
\end{align}
$\Acal$ is a \emph{$C^*$-algebra} if it is complete with respect to a $C^*$-norm.
A $C^*$-algebra is \emph{unital} if it contains the identity $\Id$.
In this paper, we only consider unital algebras and will no longer mention this attribute explicitly.
For example, the set of bounded operators on a Hilbert space together with the operator norm and the involution given by the adjoint realizes a $C^*$-algebra. 

Conversely, any abstract $C^*$-algebra can be realized as bounded operators on a Hilbert space.
To see how, we need to introduce the notion of a \emph{state}.
An element $x$ of a $C^*$-algebra $\mathcal{A}$ is \emph{positive} if it is of the form $x = y^* y$ for some $y\in\Acal$.
A state $\rho$ on a $C^*$-algebra $\Acal$ is a linear functional that is \emph{positive} in the sense that $\rho(x)\geq 0$ for all positive $x\in\Acal$ and which is \emph{normalized} in that  $\rho(\Id) = 1$. 
We denote the state space of $\Acal$ by $K(\Acal)$. 

Any state $\rho$ induces a sesquilinear form on the algebra $\Acal$ via $\langle x, y\rangle := \rho(x^\dagger y)$.
The form turns the  quotient $\Acal/\{ x \mid \langle x, x\rangle = 0 \}$ into a pre-Hilbert space on which $\Acal$ acts in a natural way.
The \emph{Gelfand-Naimark-Segal (GNS) construction} makes this observation precise and associates with every state $\rho$ a Hilbert space $\mathcal{H}_\rho$, a representation $\pi_\rho:\mathcal{A}\to\mathcal{H}_\rho$, and a vector $|\Omega\rangle$ such that
\begin{align*}
	\rho(x) = \expval{\pi_\rho(x)}{\Omega} \qquad\forall x\in\Acal
\end{align*}
Using the GNS construction, one can prove that any $C^*$-algebra can be realized as an operator algebra acting on a Hilbert space, providing a converse to the motivating example  of a $C^*$-algebra above.

\subsubsection{\texorpdfstring{$C^*$}{}-algebras from generators and relations}
\label{sec:generators}

To reason numerically about observable algebras, one needs to express them in a format that can be processed by a computer.
Both the original quantum inflation hierarchy \cite{wolfe2021quantum} and our work are built 
on the \emph{non-commutative polynomial optimization} (NPO) hierarchy introduced in Ref.~\cite{pironio2010convergent}.
There, algebras are specified by \emph{generators} and \emph{relations}.
We introduce the theory in this subsection.

Because the generalized Quantum de Finetti Theorem that is central to our convergence proof requires $C^*$-algebraic methods, we rephrase the framework of Ref.~\cite{pironio2010convergent} in this language.
To this end, we employ the terminology of \emph{universal $C^*$-algebras} as described in \cite[Section II.8.3]{blackadar2006operator}.

Let $\mathcal{G}=\{g_i\}_i$ be a countable set of symbols.
Denote by $\Fcal(\Gcal)$ the free complex $*$-algebra generated by the elements of $\mathcal{G}$.
Put differently, $\Fcal(\Gcal)$ is the set of finite complex linear combinations of words in the symbols $g_i$ and $g_i^*$, with multiplication defined by concatenation of words.
Choose a countable set $\mathcal{R} \subset\mathcal{F}(\Gcal)$. 
We aim to define the ``largest $C^*$-algebra with generators $\mathcal{G}$, subject to 
the constraint that each $q\in\mathcal{R}$ is positive''.
We will refer to the elements of $\mathcal{R}$ as \emph{relations}.

To make this notion precise, 
define a \emph{representation of $(\mathcal{G}|\mathcal{R})$} to be a homomorphism $\pi:\mathcal{F}(\Gcal)\to\mathcal{B}(\mathcal{H})$ from the free algebra into the set of bounded operators on some Hilbert space $\mathcal{H}$, such that $\pi(q)$ is a positive operator for every $q \in \mathcal{R}$.
On $\mathcal{F}(\Gcal)$, define
\begin{align}\label{eqn:universal norm}
	\norm{x} := 
	\sup \{ \|\pi(x)\| \mid  \pi \text{ is a representation of } (\mathcal{G}|\mathcal{R}) \}.
\end{align}
Now assume that the relations imply $\norm{x} < \infty$ for all $x\in \mathcal{F}(\Gcal)$ (more on how we ensure this in practice below). 
In this case, $\norm{\cdot}$ is a seminorm on $\mathcal{F}(\Gcal)$.
The \emph{universal $C^*$-algebra on $(\mathcal{G}|\mathcal{R})$}, abbreviated as $C^*(\mathcal{G}|\mathcal{R})$, is then the completion of $\mathcal{F}(\Gcal)$ with respect to this $C^*$-seminorm.\footnote{
Recall that, despite the everyday connotations of the word, the process of ``completing'' a metric space does not necessarily only add elements to it.
The completion can be defined as the set of Cauchy sequences, with two considered equivalent if the norm of their differences converges to zero. 
Every $x\in\mathcal{F}(\Gcal)$ gives rise to an element in $C^*(\mathcal{G}|\mathcal{R})$, represented by the sequence that is constant and equal to $x$.
Two elements $x,y\in\mathcal{F}(\Gcal)$ induce the same element in $C^*(\mathcal{G}|\mathcal{R})$ if and only if $\norm{x-y}=0$.
Put differently, the completion adds elements to the quotient space $\Fcal(\Gcal)/\{ x \mid  \|x\|=0\}$.
}
We will not differentiate notationally between an element $x\in\mathcal{F}(\Gcal)$ and its image in the completion $C^*(\Gcal|\Rcal)$.

In our applications, we will consider two types of relations:

\paragraph{Equality constraints.}
Let $x,y\in\mathcal{F}(\Gcal)$ and assume that $\mathcal{R}$ contains both $x-y$ and $y-x$.
It then follows easily that $x=y$ in the universal $C^*$-algebra $C^*(\mathcal{G}|\mathcal{R})$.
We will always assume that $g_1=:\Id$ is constrained to commute with all the others and obeys $\Id x = x \Id = x$ for all $x \in \Fcal(\Gcal)$, so that $C^*(\mathcal{G}|\mathcal{R})$ is unital.

\paragraph{Norm constraints.}
Let $x\in\mathcal{F}(\Gcal), C\in\mathbb{R}_+$ and assume that $\mathcal{R}$ contains $C^2\Id - x^*x$.
Then  $\norm{x} \leq C$ in  $C^*(\mathcal{G}|\mathcal{R})$.

We can now return to the assumption that $\|x\|<\infty$ for all $x\in\mathcal{F}$.
First, note that it suffices to ensure that the bound holds for every generator.
In quantum applications \cite{navascues2007bounding, navascues2015characterizing, wolfe2021quantum}, one is often interested in algebras where the generators represent projections, i.e.\ satisfy $g_i=g_i^*=g_i^2$.
Non-trivial projections in a $C^*$-algebra always have norm equal to 1.
The modified quantum inflation hierarchy introduced in this paper makes use of generators that need not be projections.
For those, we must include explicit norm constraints $\|g_i\|\leq C$.
(A similar approach is taken in Ref.~\cite{pironio2010convergent} to achieve what is referred to as the \emph{Archimedean property} there.)

We will need the following lemma, which justifies our characterization of 
$C^*(\Gcal|\Rcal)$ as the ``largest $C^*$-algebra such that each $q\in\Rcal$ is positive''.

\begin{lemma}\label{lem:all is positive}
	In $C^*(\Gcal|\Rcal)$, any element $q\in\Rcal$ is positive.
\end{lemma}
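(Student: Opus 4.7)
The plan is to exhibit a faithful $*$-representation of $C^*(\Gcal|\Rcal)$ on a Hilbert space under which every $q \in \Rcal$ appears as a positive operator. Faithful (injective) $*$-homomorphisms between $C^*$-algebras are automatically isometric, their images are $*$-isomorphic $C^*$-subalgebras of the target, and such isomorphisms both preserve and reflect positivity (the latter via continuous functional calculus, since a positive square root of $\pi(q)$ in the image can be pulled back uniquely to the source). So once such a representation is constructed, the conclusion follows immediately.

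To build it, I take the direct sum of enough representations of $(\Gcal|\Rcal)$ to saturate the defining supremum. More precisely, let $\mathcal{P}$ be a set of representations $\pi:\Fcal(\Gcal)\to\mathcal{B}(\mathcal{H}_\pi)$ of $(\Gcal|\Rcal)$ large enough that for every $x\in\Fcal(\Gcal)$ the value of $\|x\|$ in Eq.~(\ref{eqn:universal norm}) is realized as $\sup_{\pi\in\mathcal{P}}\|\pi(x)\|$; restricting attention to Hilbert spaces of cardinality bounded by that of the countable set $\Fcal(\Gcal)$ makes $\mathcal{P}$ a genuine set rather than a proper class. Define
\begin{align*}
\pi_{\mathrm{univ}} := \bigoplus_{\pi \in \mathcal{P}} \pi
\qquad\text{on}\qquad
\mathcal{H}_{\mathrm{univ}} := \bigoplus_{\pi \in \mathcal{P}} \mathcal{H}_\pi.
\end{align*}
Then $\pi_{\mathrm{univ}}$ is itself a representation of $(\Gcal|\Rcal)$: it is a $*$-homomorphism, and for each $q\in\Rcal$ the operator $\pi_{\mathrm{univ}}(q) = \bigoplus_\pi \pi(q)$ is a direct sum of positive operators, hence positive. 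By construction $\|\pi_{\mathrm{univ}}(x)\| = \sup_{\pi\in\mathcal{P}}\|\pi(x)\| = \|x\|$ on $\Fcal(\Gcal)$, so $\pi_{\mathrm{univ}}$ descends to the seminorm quotient and extends by continuity to an isometric, hence injective, $*$-homomorphism $\tilde\pi_{\mathrm{univ}}:C^*(\Gcal|\Rcal)\to\mathcal{B}(\mathcal{H}_{\mathrm{univ}})$.

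Applying the opening remark to $\tilde\pi_{\mathrm{univ}}$ finishes the proof: for $q\in\Rcal$ one has $\tilde\pi_{\mathrm{univ}}(q) = \pi_{\mathrm{univ}}(q) \geq 0$ in $\mathcal{B}(\mathcal{H}_{\mathrm{univ}})$, and because $\tilde\pi_{\mathrm{univ}}$ is a faithful $*$-homomorphism between $C^*$-algebras, this positivity pulls back to positivity of $q$ in $C^*(\Gcal|\Rcal)$. There is no serious obstacle: the only subtlety is the set-theoretic one of quantifying over representations, which the standard cardinality cap disposes of, and the rest is a direct unpacking of the universal property that the definition of $C^*(\Gcal|\Rcal)$ is designed to encode.
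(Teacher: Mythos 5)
Your proof is correct, but it takes a genuinely different route from the paper's. You construct the universal representation $\pi_{\mathrm{univ}}$ as a direct sum over a norm-saturating set of representations of $(\Gcal|\Rcal)$, obtain a faithful (isometric) embedding of $C^*(\Gcal|\Rcal)$ into $\mathcal{B}(\mathcal{H}_{\mathrm{univ}})$, and pull positivity back through it; the key facts you invoke -- that injective $*$-homomorphisms are isometric and that positivity is reflected by $*$-isomorphisms and is insensitive to passing to a $C^*$-subalgebra (spectral permanence) -- are all standard and correctly applied, and the uniform boundedness needed to form the direct sum is exactly the paper's standing assumption that $\|x\|<\infty$. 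The paper instead argues intrinsically: it shows $q=q^*$ and $\|\Id-q\|\leq 1$ directly from the definition of the seminorm, deduces via \cite[Proposition II.3.1.2(iv)]{blackadar2006operator} that every representation of $C^*(\Gcal|\Rcal)$ sends $q$ to a positive operator, and then concludes positivity of $q$ from the GNS construction and the state characterization of positive elements. Your approach is closer to the canonical construction of universal $C^*$-algebras and yields as a free byproduct the fact (needed separately in Lemma~\ref{lem:all the same}, and proved there via the same Blackadar proposition) that every representation of $C^*(\Gcal|\Rcal)$ is a representation of $(\Gcal|\Rcal)$; the paper's argument avoids the set-theoretic bookkeeping entirely. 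Two minor quibbles with your write-up: $\Fcal(\Gcal)$ is not countable as a set (it is a complex vector space), only of countable linear dimension, which is what actually lets you cap cyclic representations at separable Hilbert spaces; and to get the cap you should note that restricting a representation of $(\Gcal|\Rcal)$ to a cyclic invariant subspace again yields a representation of $(\Gcal|\Rcal)$, since compression to an invariant subspace preserves positivity. Neither affects the validity of the argument.
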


\begin{proof}
	We first show that for every representation $\phi$ of $C^*(\Gcal|\Rcal)$, it holds that $\phi(q)$ is a positive operator for each $q\in\Rcal$.
	(Using the terminology introduced above, this says that a representation of $C^*(\Gcal|\Rcal)$ 
	is also a representation of $(\Gcal|\Rcal)$).

	Fix a $q\in\Rcal$.
	There is no loss of generality in assuming $\|q\|\leq 2$.

	It holds that $q=q^*$, because
	\begin{align*}
		\|q-q^*\| = \sup_{\pi\text{ representation of } (\Gcal|\Rcal)} \|\pi(q-q^*)\|  = 0,
	\end{align*}
	as $\pi(q)$ is positive (and hence self-adjoint) by definition of representations of $(\Gcal|\Rcal)$.

	For every representation $\pi$ of $(\Gcal|\Rcal)$,
	it holds that $\|\pi(\Id-q)\|\leq 1$  \cite[Proposition II.3.1.2(iv)]{blackadar2006operator}. 
	From the definition of the seminorm, this implies that $\|\Id-q\|\leq 1$.

	Now assume for the sake of reaching a contradiction that for some representation $\phi$ of $C^*(\Gcal|\Rcal)$, the operator $\phi(q)$ is not positive.
	Using again \cite[Proposition II.3.1.2(iv)]{blackadar2006operator}, 
	$\|\phi(\Id-q)\|>1\geq\|\Id-q\|$, which is a contradiction, as representations are norm-contractions.

	Next, let $\rho\in K(C^*(\Gcal|\Rcal))$.
	By the above, using the GNS representation,
	\begin{align*}
		\rho(q)=\langle\Omega|\pi_\omega(q)|\Omega\rangle \geq 0.
	\end{align*}
	Hence $q$ is positive by \cite[Corollary II.6.3.5]{blackadar2006operator}.
\end{proof}

\subsubsection{The NPO hierarchy} \label{subsubsec:NPO}

Given generators $\Gcal$ and relations $\Rcal$, we are interested in certain linear optimization problems over states on the algebra they generate.

In Ref.~\cite{pironio2010convergent}, the NPO problem is
phrased as an optimization over representations $\pi$ of $\Fcal(\Gcal)$ and normalized vectors $|\phi\rangle$ in the representation space.
Concretely, choose an element $y_0 \in\Fcal(\Gcal)$ and a countable set $\{y_1, y_2, \ldots\} = \Ycal \subset \Fcal(\Gcal)$ and consider:
\begin{align}\label{eqn:npo originale}
\begin{split}
	f^\star_{\mathrm{NPO}}=
	\min_{\pi, \ket{\phi}}\quad& \expval{\pi(y_0)}{\phi} \\
	\text{s.\ t.}\quad & \pi(q) \succcurlyeq 0 \qquad q\in\Rcal, \\
	& \expval{\pi(y_i)}{\phi} \geq 0 \qquad y_i\in\Ycal. 
\end{split}
\end{align}
We prefer to think of this problem more abstractly, as an optimization over the state space of the universal algebra $C^*(\Gcal|\Rcal)$:
\begin{align}\label{eqn:npo}
\begin{split}
	f^\star_{\mathrm{uni}}=
	\min_{\rho\in K(C^*(\Gcal|\Rcal))}\quad& \rho(y_0) \\
	\text{s.\ t.}\quad & \rho(y_i) \geq 0 \qquad y\in\Ycal.
\end{split}
\end{align}
We may write $\min$ instead of $\inf$, because the Banach-Alaoglu Theorem implies that the state space is weak$^*$-compact and thus that the infimum over states evaluated on any fixed element of the algebra is attained.
Following \cite[Section 3.6]{pironio2010convergent}, one can in addition impose constraints of the form $\rho(\cdot z)=0$ for a countable set of $z\in\Fcal(\Gcal)$. 
We have omitted this type of constraint from the discussion, as it is not needed for our use cases.

\begin{lemma}\label{lem:all the same}
	The solutions of (\ref{eqn:npo originale}) and (\ref{eqn:npo}) coincide.
\end{lemma}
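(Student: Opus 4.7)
The plan is to prove the two inequalities $f^\star_{\mathrm{NPO}} \leq f^\star_{\mathrm{uni}}$ and $f^\star_{\mathrm{uni}} \leq f^\star_{\mathrm{NPO}}$ separately, with the first direction handled by GNS together with Lemma~\ref{lem:all is positive}, and the second direction handled by the universal property built into the definition of $C^*(\Gcal|\Rcal)$.

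For the direction $f^\star_{\mathrm{NPO}} \leq f^\star_{\mathrm{uni}}$, I would start from a feasible state $\rho \in K(C^*(\Gcal|\Rcal))$ and apply the GNS construction to produce a representation $\pi_\rho: C^*(\Gcal|\Rcal) \to \Bcal(\Hcal_\rho)$ and a cyclic vector $\ket{\Omega}$ with $\rho(x) = \expval{\pi_\rho(x)}{\Omega}$. Precomposing $\pi_\rho$ with the canonical map $\Fcal(\Gcal) \to C^*(\Gcal|\Rcal)$ gives a homomorphism from the free algebra to $\Bcal(\Hcal_\rho)$. By Lemma~\ref{lem:all is positive}, each $q \in \Rcal$ is positive in $C^*(\Gcal|\Rcal)$, and since $*$-homomorphisms of $C^*$-algebras preserve positivity, $\pi_\rho(q) \succcurlyeq 0$, so the pair $(\pi_\rho, \ket{\Omega})$ is feasible for~(\ref{eqn:npo originale}) and attains the same objective value $\rho(y_0)$.

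For the reverse direction, take a feasible pair $(\pi, \ket{\phi})$ for~(\ref{eqn:npo originale}), so $\pi$ is a representation of $(\Gcal|\Rcal)$ in the sense of Section~\ref{sec:generators}. By definition of the universal seminorm in~(\ref{eqn:universal norm}), $\|\pi(x)\| \leq \|x\|$ for all $x \in \Fcal(\Gcal)$, so $\pi$ descends to the quotient by the kernel of the seminorm and extends by continuity to a $*$-homomorphism $\tilde\pi: C^*(\Gcal|\Rcal) \to \Bcal(\Hcal)$. The vector state $\rho(x) := \expval{\tilde\pi(x)}{\phi}$ is then a state on $C^*(\Gcal|\Rcal)$ that is feasible for~(\ref{eqn:npo}), since $\rho(y_i) = \expval{\pi(y_i)}{\phi} \geq 0$, and achieves the same objective value.

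The only step that requires genuine care is verifying that the map $\pi$ in the second direction really does extend continuously to the completion $C^*(\Gcal|\Rcal)$ as a $*$-representation whose restriction to $\Fcal(\Gcal)$ agrees with $\pi$; this is the main obstacle and essentially amounts to unpacking the definition of the universal seminorm~(\ref{eqn:universal norm}), which was constructed precisely so that every such $\pi$ factors through it. Once this is in place, positivity and boundedness of $\rho$ follow automatically from the fact that it is a vector state of a $*$-representation of a unital $C^*$-algebra.
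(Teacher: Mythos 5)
Your proposal is correct and follows essentially the same route as the paper: one direction via the GNS construction together with Lemma~\ref{lem:all is positive} (the paper's proof of that lemma likewise shows that representations of $C^*(\Gcal|\Rcal)$ send relations to positive operators), and the other by observing that a representation of $(\Gcal|\Rcal)$ is contractive for the universal seminorm, so the associated vector state descends to a well-defined state on $C^*(\Gcal|\Rcal)$. Your version is marginally more explicit in extending $\pi$ itself to the completion before forming the vector state, but this is the same argument.
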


\begin{proof}[Proof (of Lemma~\ref{lem:all the same})]
	Let $\omega$ be an optimizer of (\ref{eqn:npo}).
	Let $\pi_\omega$ be the GNS representation and $\ket{\Omega} \in\mathcal{H}_\omega$ the vector that implements $\omega$.
	By Lemma~\ref{lem:all is positive}, $(\pi_\omega, \Omega)$ is feasible for (\ref{eqn:npo originale}) and achieves the optimal value of (\ref{eqn:npo}).

	Conversely, let $(\pi, \phi)$ be an optimizer of (\ref{eqn:npo originale}).
	For $x\in\Fcal(\Gcal)$, define $\rho(x):=\expval{\pi(x)}{\phi}$.
	If the seminorm vanishes on $x$, $\|x\|=0$, then, in particular, $\|\pi(x)\|=0$ and hence $\pi(x)=0$.
	Thus, $\rho$ is constant on cosets of the ideal of elements on which the seminorm vanishes and 
	therefore well-defined as a functional on $C^*(\Gcal|\Rcal)$.
	As such, it is feasible for (\ref{eqn:npo}) and achieves the optimal value of (\ref{eqn:npo originale}).
\end{proof}

We now briefly describe the semidefinite programming hierarchy introduced in Ref.~\cite{pironio2010convergent} and sketch the completeness proof.

Of course, the difficulty in solving (\ref{eqn:npo}) lies in the fact that $C^*(\Gcal|\Rcal)$ is, in general, infinite-dimensional.
The broad idea behind the NPO hierarchy is to partition $\Fcal(\Gcal)$ into an increasing family of finite-dimensional subspaces $\Fcal^{(k)} \subset \Fcal(\Gcal)$. 
At the $k$-th level of the hierarchy, one imposes the conditions that $\rho$ be a state and that the relations be fulfilled only to the extent to which they can be expressed using elements from $\mathcal{F}^{(2k)}$.

To carry out this program, let $\mathcal{F}^{(k)}$ be the space of all elements $x\in\Fcal(\Gcal)$ that can be expressed as a polynomial in the generators and their adjoints of degree at most $k$.
We fix some basis $\{b^{(k)}_1, \dots, b^{(k)}_{d_k}\}$ of each $\mathcal{F}^{(k)}$.

Recall that a linear functional $\rho$ on $C^*(\Gcal|\Rcal)$ is a state if and only if $\rho(\Id)=1$ and $\rho(x^*x)\geq 0$ for all $x\in C^*(\Gcal|\Rcal)$.
We impose a related condition by demanding that the matrix $\Gamma^{(k)}$ with elements
\begin{align}
	\Gamma_{ij}^{(k)} = \rho\big({b_i^{(k)}}^* b_j^{(k)}\big),
	\qquad i,j = 1, \dots, d_k
\end{align}
be positive semidefinite and that $\Gamma_{1,1}^{(k)}=\rho(\Id)=1$.

Next, consider a relation $q\in\Rcal$.
Let $l$ be the smallest integer such that $q\in\Fcal^{(2l)}$.
We relax the requirement that $q$ be positive to demanding that the matrix
\begin{align*}
	(\Lambda_q^{(k)})_{ij}
	=
\rho({b_i^{(k-l)}}^* q b_j^{(k-l)}),
	\qquad i,j = 1, \dots, d_{k-l}
\end{align*}
be positive semidefinite.

Let $k_0$ be such that $x\in\Fcal^{(2k_0)}$. 
For each $k\geq k_0$, one thus arrives at a relaxation of Eq.~(\ref{eqn:npo}) in terms of the semidefinite program
\begin{align}\label{eqn:npo finite}
\begin{split}
	f^k = \min_{\rho\in (\mathcal{F}^{2k})^*}\quad
	& \rho(y_0), \\
	\text{s. t.} \quad 
	&\rho(\Id) = 1,\\
	& \Gamma^{(k)}\succcurlyeq 0, \\
	& \Lambda_q^{(k)}\succcurlyeq 0 \qquad q \in \Rcal \cap \Fcal^{(2k)},  \\
	&\rho(y_i) \geq 0 \qquad y_i \in \mathcal{Y}\cap\Fcal^{(2k)}.
\end{split}
\end{align}

The completeness result of Ref.~\cite{pironio2010convergent} states that, in
the case where $|\Ycal|\leq \infty$ is finite, the optimal values $f^k$ of the
relaxations (\ref{eqn:npo finite}) converge to $f^\star_{\mathrm{NPO}} =
f^\star_{\mathrm{uni}}=:f^\star$ from below.

\begin{lemma}
	The completeness result $\lim_{k\to\infty} f^k = f^\star$ extends to the case of a countably infinite number of inequality constraints $\rho(y)\geq 0$.
\end{lemma}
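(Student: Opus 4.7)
The plan is to approximate the problem with countably many inequality constraints by truncated problems that have only finitely many, and then invoke the existing completeness result of \cite{pironio2010convergent} together with a weak$^*$-compactness argument to pass to the limit. For each $N\in\NN$, let $\Ycal_N:=\{y_1,\dots,y_N\}$, write $f^\star_N$ for the optimum of (\ref{eqn:npo}) with $\Ycal$ replaced by $\Ycal_N$, and write $f^k_N$ for the corresponding level-$k$ relaxation (\ref{eqn:npo finite}). Enlarging $N$ only adds constraints, so $(f^\star_N)_N$ and $(f^k_N)_N$ are non-decreasing in $N$ and obey $f^k_N\leq f^k$, $f^\star_N\leq f^\star$. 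Moreover, $f^k$ is non-decreasing in $k$ (tighter relaxation) and bounded above by $f^\star$, hence convergent.

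The first key step is to show that $\lim_N f^\star_N=f^\star$. By weak$^*$-compactness of $K(C^*(\Gcal|\Rcal))$ (Banach--Alaoglu, as already invoked in the paper) the minimum $f^\star_N$ is attained by some $\rho_N\in K(C^*(\Gcal|\Rcal))$, and the sequence $(\rho_N)_N$ has a weak$^*$-accumulation point $\rho^\star\in K(C^*(\Gcal|\Rcal))$. For each fixed $i$, the set $\{\sigma\in K:\sigma(y_i)<0\}$ is weak$^*$-open and contains no $\rho_N$ with $N\geq i$, so $\rho^\star(y_i)\geq 0$; hence $\rho^\star$ is feasible for the full problem (\ref{eqn:npo}) and $\rho^\star(y_0)\geq f^\star$. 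Since the monotone bounded sequence $f^\star_N=\rho_N(y_0)$ converges, and $\rho^\star(y_0)$ must be an accumulation point of that sequence, we also have $\rho^\star(y_0)=\lim_N f^\star_N\leq f^\star$. Combining both bounds gives $\lim_N f^\star_N=f^\star$.

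The second step chains this with the original finite-$\Ycal$ completeness theorem. For every fixed $N$, the set $\Ycal_N$ is finite, so \cite{pironio2010convergent} applies and $\lim_k f^k_N=f^\star_N$. Using $f^k_N\leq f^k$,
\[
f^\star_N=\lim_{k\to\infty} f^k_N\leq\liminf_{k\to\infty} f^k.
\]
Taking the supremum over $N$ yields $f^\star=\sup_N f^\star_N\leq\liminf_k f^k$. Combined with $f^k\leq f^\star$ we conclude $\lim_k f^k=f^\star$.

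The main obstacle lies in the first step: extracting from a sequence of optimizers $\rho_N$ of the truncated problems a genuine feasible point of the full countable-constraint problem. What makes this clean is that only the list of scalar inequality constraints changes with $N$, while the ambient $C^*$-algebra $C^*(\Gcal|\Rcal)$ and its state space do not. Hence Banach--Alaoglu is applied once and for all, and the countable family of constraints is handled by the observation that the ``bad'' sets $\{\sigma(y_i)<0\}$ are each weak$^*$-open and eventually avoided by the sequence. If instead the relations $\Rcal$ themselves were allowed to grow with $N$, this uniform ambient-space argument would break down and a substantially more delicate argument would be required.
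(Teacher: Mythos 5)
Your proof is correct and follows essentially the same route as the paper's: truncate the countable constraint family, invoke the finite-$\Ycal$ completeness result of \cite{pironio2010convergent}, and use Banach--Alaoglu to extract a weak$^*$-accumulation point of the truncated optimizers that is feasible for the full problem. You are somewhat more explicit about the monotonicity bookkeeping and the final chaining $f^\star=\sup_N f^\star_N\leq\liminf_k f^k\leq f^\star$, which the paper leaves implicit, but the key ideas are identical.
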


\begin{proof}
	Choose some enumeration $y_1, y_2, \dots$ for the countable set $\Ycal$.
	Let $\Ycal_s=\{y_1, \dots, y_s\}$. 
	Assume that (\ref{eqn:npo finite}) with $\Ycal$ replaced by $\Ycal_s$ is feasible for every $k, s$, with optimal value $f^k_s$.
	Using the convergence proof of Ref.~\cite{pironio2010convergent} and Lemma~\ref{lem:all the same}, there exists a sequence of states 
	$\rho^\star_s\in K(C^*(\Gcal|\Rcal))$ that are feasible for (\ref{eqn:npo}) with inequality constraints $\Ycal_s$ and attain $f^\star_s:=\lim_{k\to\infty} f^k_s$.
	By the Banach-Alaoglu Theorem, there is a convergent subsequence.
	Let $\rho^\star$ be its limit point.
	Then, for each $y_i\in\Ycal$, $\rho^\star(y_i) \geq 0$, as this constraint is fulfilled by all but a finite number of the $\rho^\star_s$.
	Thus, $\rho^\star$ is feasible for (\ref{eqn:npo}) with all inequality constraints $\Ycal$ taken into account
	and attains $f^\star=\lim_{s\to\infty}f^\star_s$. 
\end{proof}

\vspace{0.2cm}
\noindent \textbf{Remark.} If a space $N\subset\Fcal(\Gcal)$ of elements with vanishing seminorm is known, then one can replace $\Fcal(\Gcal)$ by the quotient space $\Fcal(\Gcal)/N$ in the constructions above, while retaining convergence.
This can result in significantly smaller matrices that need to be treated.
In particular, every equality constraint $x=y$ gives rise to an element $x-y\in N$.
\vspace{0.2cm}

\subsubsection{Tensor products} \label{subsubsec:tensor_products}

We return to the problem of precisely modeling the notion of ``locality''.
If $\Acal$ and $\Bcal$ are the $C^*$-algebras of observables on one subsystem each, then the joint system should come with an observable algebra $\mathcal{C}$ that contains copies of $\Acal$ and $\Bcal$ as commuting subalgebras and is generated by them.
Unfortunately, these two requirements are not quite enough to uniquely determine $\Ccal$.
To understand the freedom we have in defining the set of global observables, start with the \emph{algebraic tensor product} $\Ccal_0 = \Acal \algtens \Bcal$.
This is the $*$-algebra of elements $x$ of the form
\begin{align}
    x = \sum_i a_i \otimes b_i \qquad a_i \in \Acal,\ b_i \in \Bcal
\end{align}
with multiplication and involution in $\Ccal_0$ defined factor-wise as
\begin{align}
    (a_1 \otimes b_1) (a_2 \otimes b_2) = a_1 a_2 \otimes b_1 b_2, \qquad \qquad (a_1 \otimes b_1)^* = a_1^* \otimes b_1^*. \label{eq:tensor_alg}
\end{align}

To promote the $*$-algebra $\Ccal_0$ to a $C^*$-algebra $\Ccal$, we have to endow it with a norm satisfying the $C^*$-norm property in Eq.~\eqref{eq:c*norm} and complete it with respect to this norm. 
The choice of this norm is not unique \cite{takesaki1, blackadar2006operator}.
There are two distinguished norms: minimal and maximal, named-so because they constrain the value of any $C^*$-norm on the algebraic tensor product by
\begin{align}\label{eqn:no surprise}
  \|x\|_{\min} \leq \|x\| 
	\leq 
	\|x\|_{\max}.
\end{align}
In the more general case of $n$ tensor factors, they are defined via their respective values on elements of $\Acal_1 \algtens \ldots \algtens \Acal_n$ as
\begin{align}
	\|x_1\otimes \dots \otimes x_n\|_{\min}
	&=
	\sup \{ \|\pi_1(x_1)\|\dots\|\pi_n(x_n)\| \mid  \pi_i \text{ a representation of } \Acal_i \},
	\label{eqn:min norm}
	\\
	\|x_1\otimes\dots\otimes x_n\|_{\max}
	&=
	\sup \{ \|\pi(x_1\otimes\dots\otimes x_n)\| \mid  \pi \text{ a representation of } \Acal_1 \algtens \ldots \algtens \Acal_n \},
	\label{eqn:max norm}
\end{align}
where the suprema are taken over representations of the respective $C^*$-algebras as operator algebras on a Hilbert space. We denote the $C^*$-algebra generated by the tensor product of $\Acal$ and $\Bcal$ and completed with respect to the norm $\norm{.}_\gamma$ by $\Acal \otimes_\gamma \Bcal$.

If the $\Acal_i$'s arise as bounded operators on Hilbert spaces $\Acal_i = \Bcal(\mathcal{H}_i)$, the approach from elementary quantum mechanics corresponds to their natural embedding into $\mathcal{B}(\mathcal{H}_1\otimes\dots\otimes\mathcal{H}_n)$. 
The operator norm in this picture corresponds to the minimal tensor product.
In this way, the elementary approach reappears as a special case of the algebraic construction.

Convergence with respect to the maximal norm implies convergence for \emph{any} operator representation of the global observable algebra, according to the definition in  Eq.~(\ref{eqn:max norm}).
Thus, in order to obtain the most general results, we will focus on the maximal tensor product e.g.\ in Section~\ref{sec:max_de_finetti}.

\subsection{Description of quantum causal structures in the commuting-operator model}
\label{sec:quantum causal models}

We can now restate the definition of quantum causal structures given informally in Sec.~\ref{sec:quantum causal} in mathematically precise terms.
Here, and later for the proof of the main result, we restrict attention to the triangle scenario (Fig.~\ref{fig:triangle}(b)).
Sec.~\ref{sec:arbitrary_quantum_causal_structures} discusses how to generalize the result to arbitrary correlation scenarios and causal structures.

Let $A, B$ and $C$ be random variables. We say that a probability distribution $P(A,B,C)$ is \emph{compatible} with the quantum triangle scenario, if it can be realized in the following mathematical model.

Assume that there is a $C^*$-algebra $\Dcal$ that is generated by 
commuting subalgebras $\Acal, \Bcal, \Ccal$ 
that are each associated with a vertex of the triangle.
Each of the algebras $\Acal, \Bcal, \Ccal$ is in turn generated by two commuting subalgebras:
$\Acal$ by $\Acal_-, \Acal_+$;
$\Bcal$ by $\Bcal_-, \Bcal_+$; and
$\Ccal$ by $\Ccal_-, \Ccal_+$.
They model the observables measurable on the subsystems that enter the respective node from either side in the diagram
(that is,  $\Acal, \Bcal, \Ccal$ and $\Dcal$ are  $C^*$-tensor products in the sense of Sec.~\ref{subsubsec:tensor_products} -- but we take no stance on which particular one).
Next, we assume that there is a state $\rho$ on $\mathcal{D}$ that factorizes according to the causal structure:
\begin{align}\label{eqn:factorization}
				\rho(A_- A_+ B_- B_+ C_- C_+) 
				= 
				\rho(C_+ A_- ) 
				\rho(A_+ B_-)
				\rho(B_+ C_-)
\end{align}
where $A_-\in\Acal_-, A_+\in\Acal_+, B_-\in\Bcal_-$ and so on.
Finally, we assume that there are POVMs 
\begin{align}
				\{E_a\}_a \subset \Acal, \quad
				\{F_b\}_b \subset \Bcal, \quad
				\{G_c\}_c \subset \Ccal
\end{align}
such that the joint distribution can be realized as
\begin{align}
				P(a,b,c)  = \rho(E_a F_b G_c), \label{eq:quantum_measurement}
\end{align}
where $P(a,b,c)$ is short-hand for $P(A=a,B=b,C=c)$, which is the probability corresponding to the values $a,b,$ and $c$.

\vspace{0.2cm}
\noindent \textbf{Remark.} 
The GNS construction applied to two commuting algebras acting on a product state gives rise to a tensor product Hilbert space.
Thus, by (\ref{eqn:factorization}), there is no loss of generality in assuming that $\Acal = \Acal_-\otimes_{\min} \Acal_+$ and likewise for $\Bcal$ and $\Ccal$.
However, the same is \emph{not} true for the tensor products between $\Acal$, $\Bcal$, and $\Ccal$.
(Certainly the same argument doesn't apply -- as $\rho$ does not factorize as a state between the nodes.
One can combine the results from Refs.~\cite{fritz2012beyond, ji2020mip} to see that there are correlations $P$ that cannot be modeled using minimal tensor products between nodes at all).
This observation does not obviate the need for a generalized Quantum de Finetti Theorem, as we will apply it to the state that is extracted from the SDP hierarchy, and there is no semidefinite constraint that can express that $\Acal$ is a minimal, rather than a general, tensor product of its constituents.

\subsubsection{Quantum inflation}
\label{sec:intro:inflation}

\begin{figure}
    \centering
    (a)
    \includegraphics[width=.3\textwidth]{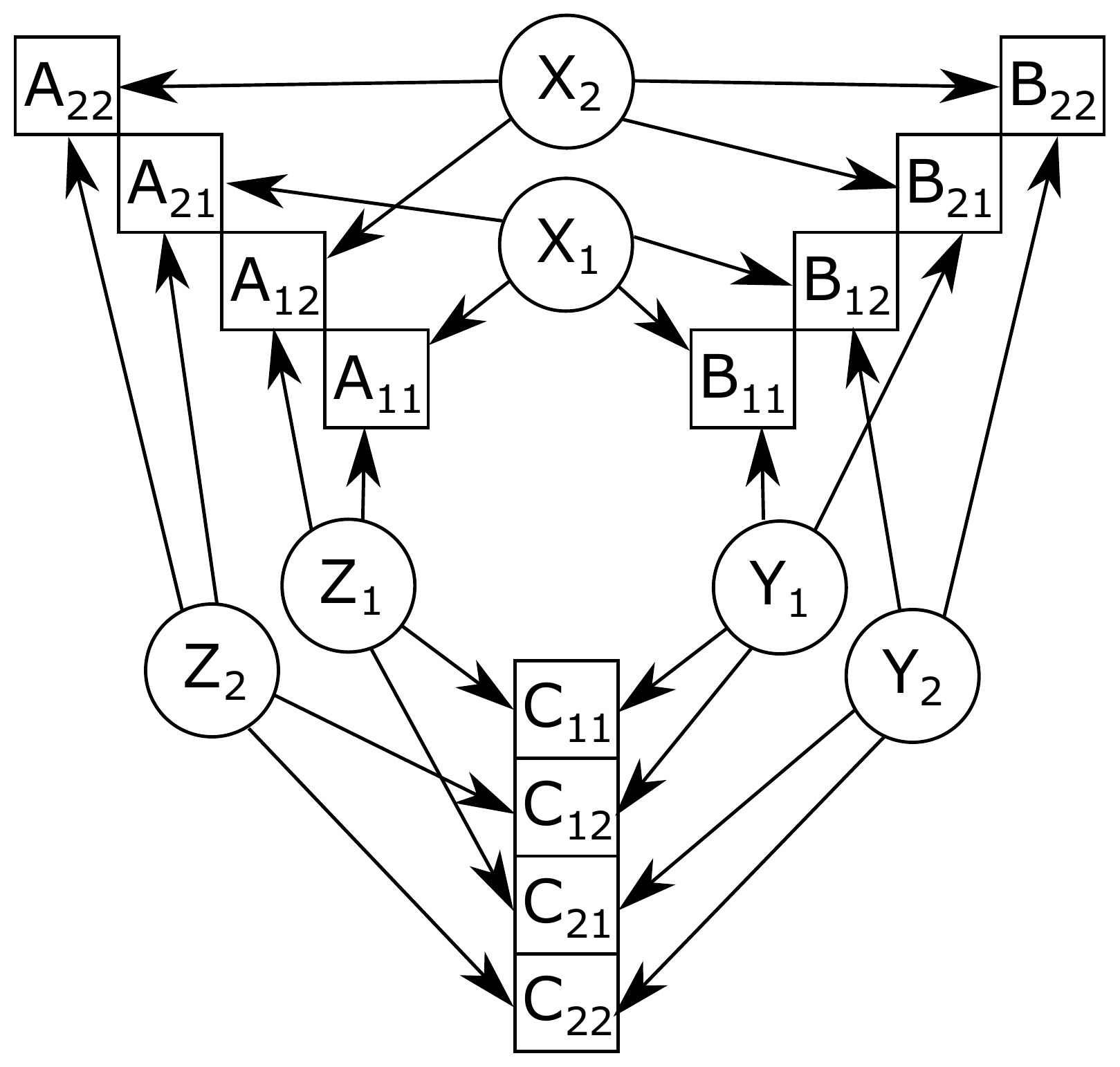}
    \hspace{2cm}
    (b)
    \includegraphics[width=.3\textwidth]{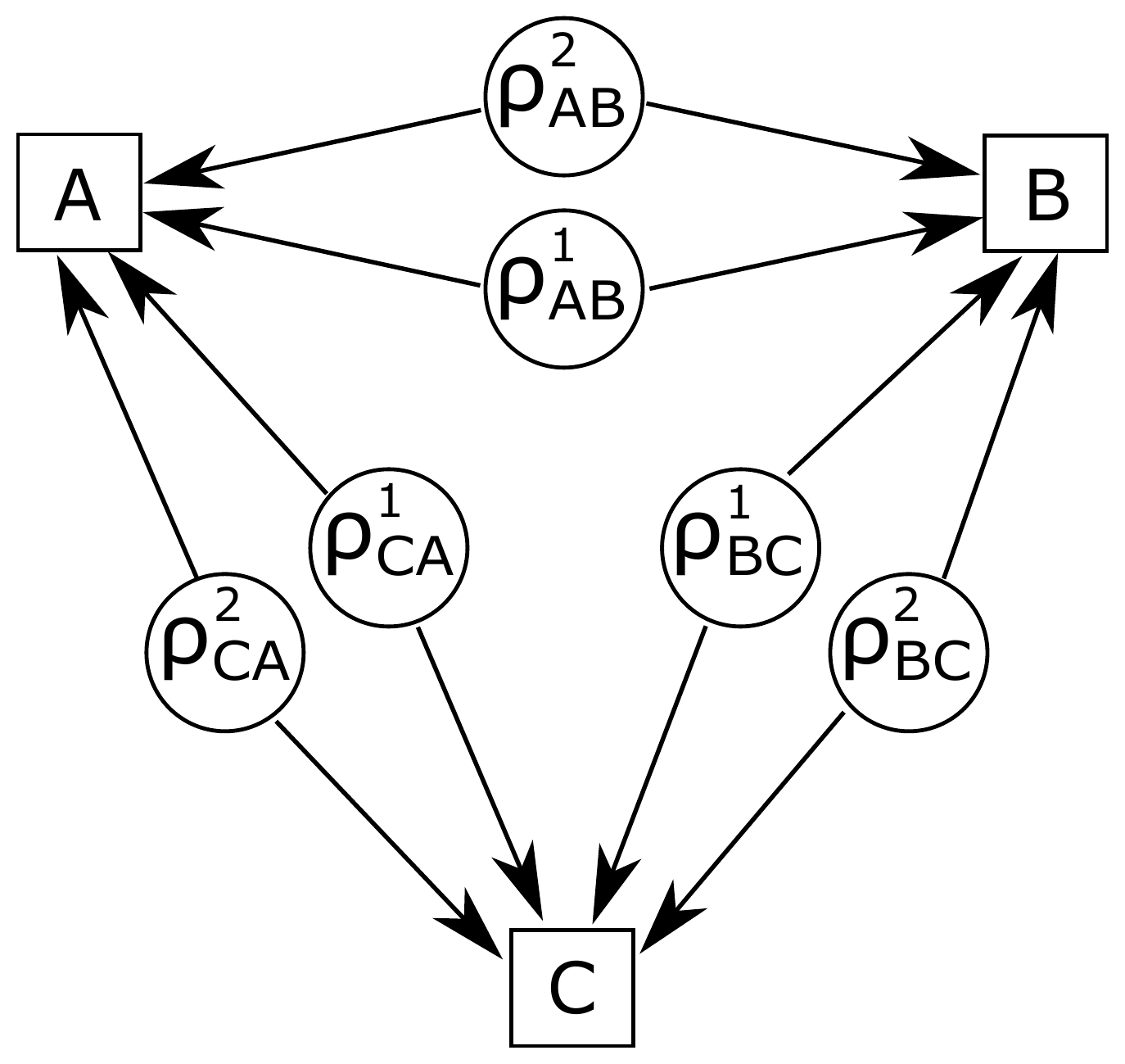}
    \caption{(a) The $n=2$ inflation of the classical triangle scenario. In the inflation procedure, the latent variables are copied and the observable variables are indexed according to these copies. For example, the variable $A_{12}$ is the result of $Z_1$ and $X_2$. (b) The $n=2$ inflation of the quantum triangle scenario. Again the latent variables are copied, but instead of obtaining copies of the observable variables, Alice, Bob and Charlie now have a choice of measurement operators. The measurements are performed over pairs of copies using the same measurement operators $\{E_a\}_a, \{F_b\}_b, \{G_c\}_c$, labeled by the copies they act on. For example, $E_a^{12}$ is acting on the parts of the quantum states $\rho_{CA}^1$ and $\rho_{AB}^2$ that are sent to Alice.}
    \label{fig:triangle_inflation}
\end{figure}

We give a brief overview of the quantum inflation technique, which was introduced in Ref.~\cite{wolfe2021quantum} and generalizes classical inflation 
\cite{wolfe2019inflation, navascues2020inflation}
to the case of quantum causal structures. 
This section serves to motivate the construction of the hierarchy in Sec.~\ref{sec:new_SDP} -- but in the rest of the paper, we will not rely on results and notation introduced here. 
We again focus on the triangle scenario (See Figs.~\ref{fig:triangle}(b) and \ref{fig:triangle_inflation}(b)).

Given a joint distribution $P(A,B,C)$, assume that there is a quantum model compatible with the triangle scenario.
Using the terminology of Sec.~\ref{sec:quantum causal models}, we thus know there exist a global observable algebra $\Dcal$ generated by local algebras
$\Acal_-, \Acal_+, \Bcal_-, \Bcal_+, \Ccal_-, \Ccal_+$, a state $\rho$ factorizing as in (\ref{eqn:factorization}), and POVM elements $\{E_a\}_a, \{F_b\}_b, \{G_c\}_c$ that reproduce the correlations $P$ as in (\ref{eq:quantum_measurement}).

Denoting the restrictions of 
$\rho$ 
to the algebra 
$\langle \Ccal_+ \cdot \Acal_-\rangle$ generated by $\Ccal_+, \Acal_-$ as $\rho_{CA}$,
to
$\langle \Acal_+ \cdot \Bcal_-\rangle$ as $\rho_{AB}$,
and
to
$\langle \Bcal_+ \cdot \Ccal_-\rangle$ as $\rho_{BC}$,
Eq.~(\ref{eqn:factorization}) is equivalent to demanding that $\rho$ factorizes as
\begin{align*}
	\rho = \rho_{CA}\otimes\rho_{AB}\otimes\rho_{BC}.
\end{align*}

For any \emph{level} $n$, we construct an \emph{inflated} model as follows.
Distribute $n$ independent copies of the original states $\rho_{AB}$, $\rho_{BC}$ and $\rho_{CA}$ among the three nodes $A$, $B$, and $C$. 
At each node, we consider $n^2$ POVMs
$\{E_a^{ij}\}_a, \{F_b^{kl}\}_b, \{G_c^{pq}\}_c$. 
The POVM element $E_a^{ij}$ replicates the original $E_a$, but acts on the $i$-th copy of the state $\rho_{CA}$ and $j$-th copy of the state $\rho_{AB}$. 
The other two cases are defined analogously. 
As a result, POVM elements $E_a^{ij}$, $F_b^{jl}$, and $G_c^{li}$ reproduce the original probabilities $P(a,b,c)$.

We now list a number of properties of the inflated model.
These properties can be directly imposed as constraints in an NPO program.
It follows that if $P$ is compatible with the causal model, then the resulting NPO problem is feasible for any inflation level $n$ 
\cite{wolfe2021quantum}.
In Sec.~\ref{sec:new_SDP} we construct a variant of this NPO hierarchy for which we supply a proof of the converse implication.

First, in Ref. \cite{wolfe2021quantum} it is assumed that the $\{E_a\}_a$, $\{F_b\}_b$ and $\{G_c\}_c$ are orthogonal projective measurements, rather than more general POVMs. This simplifies the SDP and can be done without loss of generality, because we do not restrict dimension and possible dilations would still be compatible with the causal structure.
\begin{align}
    &(E^{ij}_a)^* = E^{ij}_a &\forall i,j,a, \label{eq:proj_meas_1}\\
    &E^{ij}_a E^{ij}_{a'} = \delta_{a,a'} E^{ij}_a  &\forall i,j,a,a',\\
    &\sum_a E^{ij}_a = \Id &\forall i,j, \label{eq:proj_meas_3}
\end{align}
and similar for $\{F_b^{kl}\}$ and $\{G_c^{pq}\}$. In later sections we will drop this restriction, and will only assume that we have POVM elements, i.e.~non-negative operators that sum to the identity.

Second, operators that act on different subsystems commute:
\begin{align}
    &\left[ E^{ij}_a, F^{kl}_b \right] = \left[ E^{ij}_a, G^{kl}_c \right] = \left[ F^{ij}_b, G^{kl}_c \right] = 0 &\forall i,j,k,l,a,b,c, \label{eq:old_commute_1}\\
    &\left[ E^{ij}_a, E^{i'j'}_{a'} \right] = \left[ F^{ij}_b, F^{i'j'}_{b'} \right] = \left[ G^{ij}_c, G^{i'j'}_{c'} \right] = 0 & \forall i\neq i', j\neq j', a, a', b, b', c, c'. \label{eq:old_commute_2}
\end{align}

Third, there is a permutation symmetry, resulting from the fact that the global state is built out of independent copies of the original one.
For any polynomial $Q$ in the measurement operators $\{E_a^{ij}\}, \{F_b^{kl}\}, \{G_c^{pq}\}$ up to inflation level $n$ and for all permutations $\pi, \pi', \pi''$ of $n$ elements, the following must hold:
\begin{align} \label{eq:old_SDP_symmetry}
    \rho \Big(Q(\{E_a^{ij}, F_b^{kl}, G_c^{pq}\})\Big) = \rho \Big(Q(\{E_a^{\pi(i) \pi'(j)}, F_b^{\pi'(k) \pi''(l)}, G_c^{\pi''(p) \pi(q)}\})\Big).
\end{align}
For example, 
\begin{align}
    \rho(E^{11}_a F^{12}_{b} F^{21}_{b'} G^{21}_c ) =
    \rho(E^{12}_a F^{22}_{b} F^{11}_{b'} G^{21}_c ) = \rho(E^{12}_a F^{21}_{b} F^{12}_{b'} G^{11}_c ) =
    \rho(E^{22}_a F^{21}_{b} F^{12}_{b'} G^{12}_c ),
\end{align}
where we have swapped $\rho_{AB}^1 \leftrightarrow \rho_{AB}^2$ in the first step, $\rho_{BC}^1 \leftrightarrow \rho_{BC}^2$ in the second and $\rho_{CA}^1 \leftrightarrow \rho_{CA}^2$ in the third.

Fourth and finally, for the specific problem of causal compatibility the authors of Ref.~\cite{wolfe2021quantum} include constraints of the marginal distribution over $g\leq n$ copies of the triangle scenario. In particular, for the triangle scenario it must hold that
\begin{align} \label{eq:old_SDP_marginal}
    \rho \Big(\prod_{i=1}^g E_{a^i}^{ii} F_{b^i}^{ii} G_{c^i}^{ii} \Big) = \prod_{i=1}^g P(a^i, b^i ,c^i),
\end{align}
since these variables describe $g$ independent copies of the triangle causal structure. 
We will not be needing these types of constraints for our quantum inflation hierarchy, since we can already show convergence without them. Instead, for the approximate causal compatibility problem we will choose an objective function that, if the optimal value is $\epsilon$-close to 0, ensures that Eq.~\eqref{eq:old_SDP_marginal} approximately holds. If $\epsilon = 0$ Eq.~\eqref{eq:old_SDP_marginal} will hold exactly.

\subsection{Challenge 2: Infinite quantum de Finetti Theorem for general \texorpdfstring{$C^*$}{} tensor products}

In addition to the conceptual problems mentioned in Sec.~\ref{sec:subsystems}, switching to a more general notion of locality raises additional technical challenges.
Indeed, the basic idea underlying the quantum inflation hierarchy is to relax \emph{independence conditions} (which define the causal structure, but are non-convex and thus cannot directly be phrased as an SDP constraint) to \emph{symmetry conditions} (which are linear in elements of the algebra and easily incorporated into an SDP).
It is easily seen that independence implies symmetries in the inflated causal structure.
Central to convergence arguments is that sometimes, in an asymptotic sense, the converse is also true. This is the case for classical causal structures \cite{navascues2020inflation}.
Such converse results that obtain independence from symmetries are known as \emph{de Finetti Theorems} and have been formulated both for classical \cite{de1929funzione,diaconis1980finite} and for quantum \cite{stormer1967symmetric, raggio1989quantum,caves2002unknown,renner2007symmetry} probability theories.

Now a problem we face is that the literature on de Finetti Theorems for $C^*$-algebras to our best knowledge only pertains to minimal tensor products -- too narrow for our use case.
One of the main technical contributions of this work is the realization that the construction and proof of the Quantum de Finetti Theorem in \cite{raggio1989quantum} carries over from the minimal tensor product case for which it was formulated, to general $C^*$-tensor products.
Implementing this program is the role of Section~\ref{sec:max_de_finetti}.

\subsection{Challenge 3: Identifying the local observable algebras} \label{subsec:local_algebras}

We now present what we consider to be the most difficult challenge in deciding completeness of quantum inflation hierarchies.

Assume that a joined probability distribution $P(A,B,C)$ passes all levels of the original quantum inflation hierarchy, as outlined in Sec.~\ref{sec:intro:inflation}. 
We thus know that there is a $C^*$-algebra $\mathcal{D}$ generated by the observables $\{E_a^{ij}, F_b^{ij}, G_c^{ij}\}$ and a state $\rho$ that reproduces the observed correlations (Eq.~(\ref{eq:old_SDP_marginal})) and is symmetric (Eq.~(\ref{eq:old_SDP_symmetry})).

We now need to verify that this quantum model fulfills the causal constraints laid out in Sec.~\ref{sec:quantum causal models}.
This involves, in particular, showing that
\begin{enumerate}
	\item
one can embed the algebra $\langle \{ E_a^{ii} \}_a \rangle$ containing the measured POVM elements into a potentially larger algebra $\mathcal{A}^{ii}$ of all observables associated with the vertex $A$, such that
		$\mathcal{A}^{ii}$ is generated by two commuting subalgebras $\mathcal{A}_-^{i}, \mathcal{A}_+^{i}$, and
	\item
 that the state $\rho$ can be extended to all of $\mathcal{A}^{ii}$ and that it factorizes in the sense that for each $A_-\in\mathcal{A}_-^{i}, A_+\in\mathcal{A}_+^{i}$ we have $\rho(A_- A_+) = \rho(A_-)\rho(A_+)$.
\end{enumerate}
The second condition can be addressed using the generalized Quantum de Finetti Theorem presented below.
The first condition, however, seems much more challenging:
\emph{There is no obvious ansatz for constructing $\mathcal{A}^{ii}$ and its commuting generators $\mathcal{A}^{i}_-, \mathcal{A}^{i}_+$ from the algebra $\mathcal{D}$ that results from the original quantum inflation hierarchy of Ref.~\cite{wolfe2021quantum}}.
In fact, in the subsection just below, we will give an argument that suggests that $\mathcal{D}$ does not in general contain local observable algebras $\mathcal{A}^i_-, \mathcal{A}^i_+$ that satisfy the two conditions above.
It would then follow that if the original hierarchy is complete, any constructive proof of that fact would necessarily have to introduce additional operators that are not generated by the measured POVMs and their copies.

Our modified quantum inflation hierarchy follows such an approach (for details, see Sec.~\ref{sec:new_SDP}).
The modified hierarchy contains generators $e_-^i(a,\alpha), e_+^j(a,\alpha)$, which are constrained to commute unless both the upper and lower indices coincide.
One can then \emph{define}
\begin{align}
	\mathcal{A}_-^i &= \langle \{ e_-^i(a,\alpha) \}_{a, \alpha} \rangle, \\
	\mathcal{A}_+^i &= \langle \{ e_+^i(a,\alpha) \}_{a, \alpha} \rangle, \\
	\mathcal{A}^{ij} &= \langle \mathcal{A}_-^i \cdot \mathcal{A}_+^j \rangle,\\
    E_a &= \sum_{\alpha=1}^r e_-(a,\alpha) e_+(a,\alpha).
\end{align}
The observables at the other two vertices are treated analogously.
This modified hierarchy thus fulfills condition 1.~listed above by construction.
Theorem~\ref{thm:hierarchy_convergence} then shows that the generalized Quantum de Finetti Theorem implies that there exists a state $\rho$ such that condition 2.~holds as well.

\subsubsection{Example of measurement operators that do not generate elements from the local algebras} 

In this subsection, we provide evidence for the claim that the algebra $\mathcal{D}$ that results from the original quantum inflation hierarchy does not in general contain the local observable algebras satisfying the two conditions laid out above.
The purpose of the material presented here is to motivate our ansatz and guide future research -- it is not required to understand the rest of the paper.

The strategy is to give a natural example of mutually commuting observable algebras 
$\mathcal{A}^i_-, \mathcal{A}^j_+$ and POVMs $E^{ij}_a \in \mathcal{A}^i_-\cdot \mathcal{A}^j_+$
such that the algebra generated by the $\{E^{ij}_a\}_{ija}$ does not contain any non-trivial local observable, i.e.\ no element in any of the $\mathcal{A}^i_-$ or $\mathcal{A}^j_+$ other than $\Id$.
This does not constitute a \emph{proof} of the claim made above:
We do not know whether there are correlations $P(A,B,C)$ that will cause the original inflation hierarchy to output such a model.
But it does show that there are natural choices for the operators $E^{ij}_a$ that fulfill all the constraints of the hierarchy, while failing to generate the local observables with respect to which the factorization properties of the causal structure are defined.

The model is very simple:
For $i,j=1, \dots, n$, let $\mathcal{A}^i_-, \mathcal{A}^j_+$ be the observable algebra of one qubit each.
Consider the maximally entangled \emph{magic basis}
\begin{align*}
	\ket{\psi_1}^{ij} &= \frac1{\sqrt 2} \big( \ket{00} + \ket{11} )^{ij}, \\
	\ket{\psi_2}^{ij} &= \frac1{\sqrt 2} \big( \ket{01} + \ket{10} )^{ij}, \\
	\ket{\psi_3}^{ij} &= \frac1{\sqrt 2} \big( \ket{00} - \ket{11} )^{ij}, \\
	\ket{\psi_4}^{ij} &= \frac1{\sqrt 2} \big( \ket{01} - \ket{10} )^{ij},
\end{align*}
and define POVMs
\begin{align*}
	E^{ij}_a =  \dyad{\psi_a}^{ij}.
\end{align*}

\begin{lemma}
	The algebra $\mathcal{D}$ generated by $\{E^{ij}_a\}$ for $i,j=1, \dots n; a = 1, \dots, 4$ does not contain any non-trivial local operator.
\end{lemma}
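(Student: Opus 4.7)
The plan is to exhibit a group of automorphisms of the ambient algebra $\bigotimes_{i=1}^n \mathcal{A}^i_- \otimes \bigotimes_{j=1}^n \mathcal{A}^j_+$ that fix every generator $E^{ij}_a$ (and thus fix all of $\mathcal{D}$ pointwise), yet act non-trivially on each local factor $\mathcal{A}^i_\pm$. Any local operator that belongs to $\mathcal{D}$ must be invariant under these automorphisms, and the group will be rich enough to force that invariant to be a scalar multiple of $\Id$.

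The symmetries arise from the well-known eigenproperty of Bell states under correlated Pauli rotations. A direct computation shows that for each single-qubit Pauli $P\in\{I,X,Y,Z\}$ and each magic basis vector $\ket{\psi_a}$, one has $(P\otimes P)\ket{\psi_a}=\pm\ket{\psi_a}$, so that the rank-one projector $\dyad{\psi_a}$ is invariant under conjugation by $P\otimes P$. For each Pauli $P$, define
\begin{align*}
U_P \;=\; P^{\otimes n}_-\otimes P^{\otimes n}_+,
\end{align*}
i.e.\ the Pauli $P$ applied to every $-$-qubit and every $+$-qubit. Then by the single-pair computation above, $U_P\, E^{ij}_a\, U_P^* = E^{ij}_a$ for all $i,j,a$. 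Since the $E^{ij}_a$ generate $\mathcal{D}$, conjugation by $U_P$ is a $*$-automorphism of the ambient algebra that fixes $\mathcal{D}$ pointwise.

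Now suppose $O\in\mathcal{D}\cap\mathcal{A}^{i_0}_-$ (the argument for $\mathcal{A}^{j_0}_+$ is identical). Because $O$ acts as the identity on all qubits except the $i_0$-th $-$-qubit, conjugation by $U_P$ reduces to conjugation by $P$ on that single qubit. Invariance under $U_P$ for every $P\in\{I,X,Y,Z\}$ therefore forces $O$ to commute with every Pauli on $\mathcal{A}^{i_0}_-$, which by Schur's lemma (or direct inspection of $M_2(\mathbb{C})$) implies $O\in\mathbb{C}\Id$. This shows $\mathcal{D}\cap\mathcal{A}^{i_0}_-=\mathbb{C}\Id$, and the analogous argument handles the $+$ side.

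No step here is a serious obstacle: the only thing to check is the elementary identity $(P\otimes P)\ket{\psi_a}=\pm\ket{\psi_a}$ case by case (four Paulis times four magic states). The conceptual content of the argument — and the reason it is worth recording in the paper — is that this local Pauli symmetry, which is present for the magic-basis measurements but not for generic POVMs, is precisely what prevents the algebra generated by the $E^{ij}_a$'s from reaching into any of the local factors.
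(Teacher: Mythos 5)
Your proof is correct and is essentially the paper's own argument: your $U_X$ and $U_Z$ are exactly the operators $\bar X$ and $\bar Z$ used there, and ``conjugation by $U_P$ fixes $\mathcal{D}$ pointwise'' is the same statement as ``$\mathcal{D}$ lies in the commutant of $\bar X, \bar Z$'', with the same final step that no non-trivial single-qubit operator commutes with both $X$ and $Z$. The only cosmetic difference is that you verify the commutation via the eigenvector identity $(P\otimes P)\ket{\psi_a}=\pm\ket{\psi_a}$, whereas the paper expands each $E^{ij}_a$ in the Pauli basis and invokes anticommutation of distinct Paulis.
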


\begin{proof}
	The magic basis is a stabilizer basis, and we can thus express the projection operator onto each vector by summing over the respective stabilizer group.
	In terms of the usual Pauli operators, this gives
	\begin{align} 
		E^{ij}_1 = \frac{1}{4}(\Id + X_-^{i} X_+^{j} + Z_-^{i} Z_+^{j} - Y_-^{i} Y_+^{j}),\label{eq:pauli_1} \\ 
		E^{ij}_2 = \frac{1}{4}(\Id + X_-^{i} X_+^{j} - Z_-^{i} Z_+^{j} + Y_-^{i} Y_+^{j}), \\
		E^{ij}_3 = \frac{1}{4}(\Id - X_-^{i} X_+^{j} + Z_-^{i} Z_+^{j} + Y_-^{i} Y_+^{j}), \\
		E^{ij}_4 = \frac{1}{4}(\Id - X_-^{i} X_+^{j} - Z_-^{i} Z_+^{j} - Y_-^{i} Y_+^{j}). 
		\label{eq:pauli_4}
	\end{align}
	Let
	\begin{align*}
		\bar X = \prod_{i}^n X^{i}_- X^{i}_+,
		\qquad
		\bar Z = \prod_{i}^n Z^{i}_-  Z^{i}_+.
	\end{align*}
	Because distinct Pauli operators on the same system anti-commute,
	\begin{align*}
		[\bar X, E^{ij}_a ]=[\bar Z, E^{ij}_a ]=0
	\end{align*}
	and thus $\mathcal{D}$ is contained in the commutant of $\bar X, \bar Z$.
	But there is no non-trivial local operator that commutes with both $\bar X$ and $\bar Z$.
\end{proof}

We note that Eqs.~\eqref{eq:pauli_1}-\eqref{eq:pauli_4} imply that the effects $E^{ij}_a$ have Schmidt-rank $\leq 4$ and a product decomposition with factors of operator norm $C\leq \frac14$.

\section{de Finetti Theorem for the maximal tensor product} 
\label{sec:max_de_finetti}

To the best of our knowledge, the existing literature on de Finetti Theorems for infinite systems is phrased only in terms of the minimal tensor product \cite{stormer1967symmetric, raggio1989quantum}.
These results are not directly applicable to the quantum models that result from the NPO hierarchy.
Indeed, the latter naturally guarantees the existence of a representation $\pi_\rho$ of the algebraic tensor product as operators on a Hilbert space that arise from a state $\rho$ on $\Acal_1 \algtens \ldots \algtens \Acal_n$ via the GNS construction.
While the resulting operator norm $\|\pi_\rho(x)\|$ constitutes a $C^*$-norm on the tensor product, we have no a priori control over its value beyond the constraints in Eq.~(\ref{eqn:no surprise}).

The purpose of this section is therefore to retrace the arguments given by Raggio and Werner in Ref.~\cite{raggio1989quantum} to verify that the infinite de Finetti Theorem established there generalizes to arbitrary $C^*$-norms on algebraic tensor products.
We also present a somewhat simpler formulation that is sufficient for our purposes.

In fact, we state the results only in terms of the  maximal $C^*$ tensor product norm.
A priori, it is possible that one can derive stronger results for the GNS norm $\|\pi_\rho(\cdot)\|$, in particular if the state $\rho$ is known to have symmetries.
We leave this possible improvement open for later investigations.

Let $\Dcal$ be a unital $C^*$-algebra and let
\begin{align*} 
	\Dcal^n = \Dcal^{\otimes_{\max} n}
\end{align*}
be the completion of the algebraic tensor product of $n$ copies of $\Dcal$ with respect to the maximal $C^*$-norm. 
The infinite maximal tensor product is defined as the \emph{inductive limit}
\begin{align}
    \Dcal^\infty = \lim_{n\rightarrow \infty} \Dcal^n.
\end{align} 
We recall the definition \cite[Section~II.8.2]{blackadar2006operator}:
For any $n,k\in\NN$, 
there is a natural embedding
\begin{align*}
	\Dcal^n \to \Dcal^{n+k}, \qquad x \mapsto x\otimes\Id^{\otimes k}.
\end{align*}
It allows us to define addition and multiplication between elements of the union
\begin{align}\label{eqn:local}
	\bigcup_{n=1}^\infty \Dcal^n
\end{align}
by embedding the element living in the smaller tensor power into the larger one and performing the operations there.
The resulting $*$-algebra is the \emph{local algebra}, called so as each of its elements lives in a finite tensor power.
The inductive limit $\Dcal^\infty$, the \emph{quasi-local algebra}, is the completion of the local algebra with respect to the $C^*$-norm
$\|\cdot\|_{\Dcal^\infty}$ on (\ref{eqn:local}) which assigns to every $x\in\Dcal^n$ the value
\begin{align*}
	\|x\|_{\Dcal^\infty} 
	= \|x\|_{\Dcal^n}
	= \lim_{k\to\infty} \|x\otimes\Id^{k}\|_{\Dcal^{n+k}}.
\end{align*}

We will not notationally distinguish between an element $x\in\Dcal^n$ and its embedding in $\Dcal^\infty$.
Note that any $x\in\Dcal^n$ and its extensions $x\otimes\Id^k$ are identified in $\Dcal^\infty$.

For any $n$ and permutation $\pi\in S_n$, there is an automorphism $\alpha_\pi$ on $\Dcal^n$ which acts by permuting tensor factors
\begin{align*}
	\alpha_\pi(x_1\otimes \dots \otimes x_n)
	=
	x_{\pi_1}\otimes\dots\otimes x_{\pi_n}.
\end{align*}
It extends to any $\Dcal^{n+k}$ by letting $\pi$ act on the first $n$ tensor factors, and by continuity to $\Dcal^\infty$.

A state $\rho$ on $\Dcal^\infty$ is \emph{symmetric} if $\rho(x) = \rho(\alpha_\pi(x))$ for every $x\in\Dcal^\infty$.
Denote the set of symmetric states by $K_s(\Dcal^\infty)$.
We aim to show:

\begin{theorem}[Max tensor product Quantum de Finetti Theorem]\label{thm:finetti maximale}
	Let $\rho\in K_s(\Dcal^\infty)$ be a symmetric state on an infinite maximal tensor product
	\begin{align*}
		\Dcal^\infty = \lim_{n\to\infty} \Dcal^{\otimes_{\max} n}.
	\end{align*}
	Then there exists a unique probability measure $\mu$ over states on $\Dcal$ such that 
	for all $x\in\Dcal^\infty$,
				\begin{align*}
					\rho(x) = \int_{K(\Dcal)} \Pi_\sigma(x) \,\mathrm{d}\mu(\sigma),
				\end{align*}
			    where $\Pi_\sigma$ is the infinite symmetric product state on $\Dcal^\infty$ associated with the state $\sigma$ on $\Dcal$.
\end{theorem}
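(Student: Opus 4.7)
The plan is to adapt the Raggio-Werner strategy to the maximal $C^*$-tensor product, proving the theorem via a Choquet decomposition of the weak-$*$ compact convex set $K_s(\Dcal^\infty)$. First I would verify that for every $\sigma \in K(\Dcal)$ the functional $\Pi_\sigma$ is a well-defined state on $\Dcal^\infty$: by the universal property of the maximal $C^*$-tensor product, $n$ commuting copies of the GNS representation $\pi_\sigma$ extend to a representation of $\Dcal^{\otimes_{\max} n}$ that implements $\sigma^{\otimes n}$, and compatibility under the inductive embeddings $x \mapsto x \otimes \Id^{\otimes k}$ assembles these into a symmetric state on the quasi-local algebra. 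Next I would appeal to the Choquet-Bishop-de Leeuw theorem: since $K_s(\Dcal^\infty)$ is convex and weak-$*$ compact (by Banach-Alaoglu together with the weak-$*$ closedness of the symmetry constraint), every $\rho \in K_s(\Dcal^\infty)$ is the barycenter of a boundary probability measure on the extreme set.

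The heart of the argument is then to identify the extreme symmetric states with the product states $\{\Pi_\sigma : \sigma \in K(\Dcal)\}$. That each $\Pi_\sigma$ is extreme is straightforward. For the converse, given an extreme $\rho$ and elements $x \in \Dcal^m$, $y \in \Dcal^n$, I would consider the symmetric averages $S_N(y)$ obtained by averaging $y$ over its $\binom{N}{n}$ placements into $n$-element subsets of $\{m+1, \ldots, m+N\}$. Symmetry of $\rho$ forces $\rho(x\, S_N(y)) = \rho(xy)$ for every $N$, while a mean-ergodic argument in the GNS representation of $\rho$ yields $S_N(y) \to \rho(y)\, \Id$ in an appropriate weak sense whenever $\rho$ is extreme. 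This gives $\rho(xy) = \rho(x)\rho(y)$; iterating establishes product factorization on the dense local algebra, and continuity extends it to all of $\Dcal^\infty$. Uniqueness of $\mu$ follows because the assignment $\sigma \mapsto \Pi_\sigma$ is a weak-$*$ continuous affine bijection onto the extreme boundary, making $K_s(\Dcal^\infty)$ a Choquet simplex whose boundary measures are unique.

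The principal obstacle I anticipate is confirming that the mean-ergodic step transfers from the minimal to the maximal tensor product without essential change. The favorable observation is that the Raggio-Werner argument draws only on abstract $C^*$-ingredients, namely the Cauchy-Schwarz bound $|\rho(y^* z)|^2 \leq \rho(y^* y)\,\rho(z^* z)$, the inequality $\rho(y^* y) \leq \|y\|^2$ valid for any $C^*$-norm, weak-$*$ compactness of the state space, and the commutativity of the $S_\infty$-fixed-point algebra, none of which is specific to the minimal norm. The concrete values of $\|x\|$ on $\Dcal^n$ differ between the two choices, but the estimates used only require that $\|\cdot\|$ be some $C^*$-norm and do not appear to exploit any further feature of the minimal one. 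Carefully auditing each step of Raggio-Werner for an implicit dependence on the minimal norm, and replacing any such step with a maximal-norm substitute, is what I expect to constitute the main technical labor of the section.
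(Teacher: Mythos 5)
Your proposal is sound in outline, but it takes a genuinely different route from the paper's --- and, despite your framing, it is closer to St{\o}rmer's original ergodic-decomposition argument than to Raggio--Werner's, which is the template the paper actually follows. The paper never decomposes $K_s(\Dcal^\infty)$ by Choquet theory. Instead it builds an abelian $C^*$-algebra $\Sym^\infty(\Dcal)$ out of the symmetrized observables, with the symmetrized tensor product $\star$ as multiplication: the combinatorial estimate $\|(x\star \Id^{\otimes (k-m)})(y\star \Id^{\otimes (k-n)}) - x \star y \star \Id^{\otimes (k-m-n)}\|_{\Dcal^\infty} \leq \tfrac{mn}{k}\|x\|_{\Dcal^\infty}\|y\|_{\Dcal^\infty}$ (your asymptotic-abelianness step in disguise) shows that $\|x\|_{\Sym}=\lim_k\|x\star\Id^{\otimes k}\|_{\Dcal^\infty}$ is a $C^*$-seminorm, a bijection $E$ is set up between $K_s(\Dcal^\infty)$ and $K(\Sym^\infty(\Dcal))$, and the decomposition is then imported wholesale from the structure theory of abelian $C^*$-algebras: every state is the barycenter of a unique measure on the pure states, pure states of abelian algebras are homomorphisms, and homomorphisms of the $\star$-product pull back to product states $\Pi_\sigma$. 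This buys two things your route must work for separately: uniqueness is automatic, and one never touches the Bishop--de~Leeuw pseudo-support subtlety, which is a live issue since $K_s(\Dcal^\infty)$ is not metrizable in general. Your list of norm-agnostic ingredients (Cauchy--Schwarz, $\rho(y^*y)\leq\|y\|^2$, $\|y\otimes\Id^{\otimes k}\|=\|y\|$, weak-$*$ compactness) is exactly right and is also what makes the paper's proof go through for arbitrary $C^*$-tensor norms.

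Two points in your version need repair before it closes. First, the map $\sigma\mapsto\Pi_\sigma$ is weak-$*$ continuous and injective but certainly not affine (a convex combination of states on $\Dcal$ does not map to the convex combination of the corresponding product states), so uniqueness of $\mu$ cannot be obtained from an ``affine bijection''; it must come from proving that $K_s(\Dcal^\infty)$ is a Choquet simplex, which is where the asymptotic abelianness of the symmetric averages enters again. Second, to convert the Bishop--de~Leeuw boundary measure into a genuine Radon measure on $K(\Dcal)$ you should observe that the extreme boundary $\{\Pi_\sigma \mid \sigma\in K(\Dcal)\}$ is weak-$*$ closed, being the continuous image of the compact set $K(\Dcal)$; this makes $K_s(\Dcal^\infty)$ a Bauer simplex --- at which point you have implicitly reconstructed the paper's abelian algebra as the continuous functions on the extreme boundary, and the two proofs converge.
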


Key to the proof is to show that symmetric states $\rho$ define a state on the \emph{abelian} $C^*$-algebra of \emph{symmetric observables} whose multiplication law is derived from the tensor product.
It is established in the general theory of $C^*$-algebras \cite[Chapter~4]{bratteli2012operator}
that pure states $\phi$ of abelian algebras are homomorphisms, i.e.\ that $\phi(xy) = \phi(x)\phi(y)$, and that general states of abelian algebras are unique convex combinations of pure states.
Using the fact that in our case, the product $xy$ is related to the tensor product $x\otimes y$, we will obtain the claimed decomposition of $\rho$ as a convex combination of symmetric product states.

To construct the symmetric algebra, define the
\emph{symmetrization map} 
\begin{align}
\begin{split}
    \Sym^n(x) 
		= \frac{1}{n!} \sum_{\pi \in S_n} \alpha_\pi (x),
\end{split}
\end{align}
and let $\Sym^n(\Dcal)$ be the image of $\Dcal^n$ under $\Sym^n$. 
Define the \emph{symmetric local algebra} to be the set
\begin{align}\label{eqn:symmetric local algebra}
	\bigcup_{n=1}^\infty \Sym^n(\Dcal)
\end{align}
with an associative and abelian multiplication law given by the \emph{symmetrized tensor product}
\begin{align*}
				\star: \Sym^n(\Dcal) 
				\times
				\Sym^m(\Dcal) 
				\to
				\Sym^{n+m}(\Dcal),
				\qquad
				x \star y = \Sym^{n+m}(x\otimes y).
\end{align*}

Our aim is to mimic the construction of $\Dcal^\infty$ to arrive at a \emph{symmetric quasi-local algebra} $\Sym^\infty(\Dcal)$.
To this end, define 
embeddings
\begin{align}\label{eqn:symmetric imbedding}
	\Sym^n(\Dcal) \to \Sym^{n+k}(\Dcal),
				\qquad
				x \mapsto x \star \Id^{\otimes k}.
\end{align}
As was the case for $\Dcal^\infty$, addition between two symmetric local elements can now be defined by embedding the lower power into the higher power and performing the addition there.
This convention turns the symmetric local algebra into an abelian $*$-algebra.
One can endow it with a $C^*$-seminorm \cite[Section~6.1]{murphy1990c} so that the completion $\Sym^\infty(\Dcal)$ is an abelian $C^*$-algebra:

\begin{lemma} \label{lem:Cstarnorm}
	The limit
	\begin{align}\label{eqn:symmetric norm}
		\|x\|_{\Sym} := \lim_{k\to\infty} \|x\star\Id^{\otimes k}\|_{\Dcal^\infty}
	\end{align}
	defines a $C^*$-seminorm on the symmetric local algebra $\cup_n \Sym^n(\Dcal)$
	fulfilling 
	\begin{align}\label{eqn:norm contraction}
		\|x\|_{\Sym} \leq \|x\|_{\Dcal^\infty}.
	\end{align}
	The completion $\Sym^\infty(\Dcal)$ is an abelian $C^*$-algebra.
\end{lemma}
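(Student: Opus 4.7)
The plan is to establish that $\|\cdot\|_{\Sym}$ is a $C^*$-seminorm by verifying in turn: convergence of the defining limit together with the bound $\|\cdot\|_{\Sym}\leq\|\cdot\|_{\Dcal^\infty}$; the seminorm and $*$-invariance axioms; $\star$-submultiplicativity; and the $C^*$-identity. The completion statement (abelian $C^*$-algebra) then follows by standard $C^*$-completion of a $*$-algebra equipped with a $C^*$-seminorm, commutativity of $\star$ being preserved by continuity. A convenient explicit form that I would use repeatedly is
\begin{align*}
	x\star\Id^{\otimes k}\;=\;\binom{n+k}{n}^{-1}\sum_{|J|=n}x_J,
\end{align*}
where $J$ ranges over $n$-subsets of $\{1,\ldots,n+k\}$ and $x_J\in\Dcal^{n+k}$ denotes the symmetric element $x\in\Sym^n(\Dcal)$ placed at the tensor positions indexed by $J$ (with identities elsewhere).

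Convergence will come from monotonicity. Using idempotency of symmetrization one has $x\star\Id^{\otimes(k+1)}=\Sym^{n+k+1}\bigl((x\star\Id^{\otimes k})\otimes\Id\bigr)$. Since $\Sym^{n+k+1}$ is an average of the isometric automorphisms $\alpha_\pi$, and since the cross-norm property (valid in any $C^*$-tensor product, in particular the maximal one) yields $\|(x\star\Id^{\otimes k})\otimes\Id\|_{\Dcal^{n+k+1}}=\|x\star\Id^{\otimes k}\|_{\Dcal^{n+k}}$, the sequence is non-increasing; starting at $\|x\|_{\Dcal^n}$ this gives both the limit and $\|x\|_{\Sym}\leq\|x\|_{\Dcal^\infty}$. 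Homogeneity, the triangle inequality, and $\|x^*\|_{\Sym}=\|x\|_{\Sym}$ all follow by taking limits of the corresponding properties in $\Dcal^{n+k}$. For $\star$-submultiplicativity (with $x\in\Sym^n$, $y\in\Sym^m$) I would use idempotency together with permutation of tensor factors to write $(x\star y)\star\Id^{\otimes 2k}=\Sym^{n+m+2k}\bigl((x\star\Id^{\otimes k})\otimes(y\star\Id^{\otimes k})\bigr)$, then combine norm-contraction under averaging with the cross-norm property on $\Dcal^{n+m+2k}=\Dcal^{n+k}\otimes_{\max}\Dcal^{m+k}$ (using associativity of $\otimes_{\max}$) to obtain $\|(x\star y)\star\Id^{\otimes 2k}\|\leq\|x\star\Id^{\otimes k}\|\cdot\|y\star\Id^{\otimes k}\|$; letting $k\to\infty$ yields $\|x\star y\|_{\Sym}\leq\|x\|_{\Sym}\|y\|_{\Sym}$, and in particular the upper bound $\|x^*\star x\|_{\Sym}\leq\|x\|_{\Sym}^2$.

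The main obstacle is the reverse inequality $\|x\|_{\Sym}^2\leq\|x^*\star x\|_{\Sym}$ required for the $C^*$-identity. I would start from the $C^*$-identity inside $\Dcal^{n+k}$,
\begin{align*}
\|x\star\Id^{\otimes k}\|^2 \;=\; \bigl\|(x\star\Id^{\otimes k})^*\,(x\star\Id^{\otimes k})\bigr\|_{\Dcal^{n+k}},
\end{align*}
expand the right-hand \emph{ordinary} product as $\binom{n+k}{n}^{-2}\sum_{|J|=|J'|=n}(x^*)_J(x)_{J'}$, and group the summands by the overlap $s=|J\cap J'|$. A Vandermonde count yields three types of contribution: the diagonal $s=n$ produces $\binom{n+k}{n}^{-1}(x^*x)\star\Id^{\otimes k}$, whose norm is at most $\binom{n+k}{n}^{-1}\|x\|_{\Dcal^n}^2\to 0$; the disjoint $s=0$ produces $\binom{k}{n}\binom{n+k}{n}^{-1}(x^*\star x)\star\Id^{\otimes(k-n)}$, whose coefficient tends to $1$ and whose norm tends to $\|x^*\star x\|_{\Sym}$; and each intermediate term $0<s<n$ is uniformly bounded in norm by $\|x\|_{\Dcal^n}^2$ (via $C^*$-submultiplicativity and the cross-norm property) with a combined coefficient of order $k^{-1}$. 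Passing to the limit $k\to\infty$ then closes the $C^*$-identity. The genuinely delicate step is precisely this combinatorial/limiting estimate --- keeping track of the Vandermonde coefficients and bounding the ``overlap'' terms uniformly in $k$. Heuristically, it is also what makes the limiting algebra abelian: only the $s=0$ fully factorized terms survive in the limit, echoing the mean-field mechanism behind the de Finetti Theorem to be proved in the next section.
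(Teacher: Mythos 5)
Your proof is correct and follows essentially the same route as the paper: monotonicity of $\|x\star\Id^{\otimes k}\|_{\Dcal^\infty}$ gives convergence and the contraction bound, and the $C^*$-identity rests on the asymptotic equivalence of the ordinary product $(x^*\star\Id^{\otimes k})(x\star\Id^{\otimes k})$ and the symmetrized product $(x^*\star x)\star\Id^{\otimes(k-n)}$ --- your Vandermonde decomposition by overlap size $s$ is an explicit version of the paper's Lemma~\ref{lem:combinatorics}, which bounds the total weight of all overlapping placements by $mn/k$ via a union bound and each such term by $\|x\|\,\|y\|$. The only notable divergence is minor: you obtain $\star$-submultiplicativity directly from the identity $(x\star y)\star\Id^{\otimes 2k}=\Sym^{n+m+2k}\bigl((x\star\Id^{\otimes k})\otimes(y\star\Id^{\otimes k})\bigr)$ together with the cross-norm property, whereas the paper derives it from the same combinatorial lemma.
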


The central ingredient to the proof is the following combinatorial lemma, which shows that the multiplication on $\cup_n \Sym^n(\Dcal)$ inherited from $\cup_n \Dcal^n$ and the newly defined $\star$-multiplication are asymptotically equivalent. 

\begin{lemma} \label{lem:combinatorics}
Let $x \in \Sym^m(\Dcal)$ and $y \in \Sym^n(\Dcal)$. 
Then
\begin{align*}
    \lim_{k \to \infty} \norm{(x\star \Id^{\otimes (k-m)})(y\star \Id^{\otimes (k-n)}) - (x \star y \star \Id^{\otimes (k-m-n)})  }_{\Dcal^\infty} &= 0.
\end{align*}
\end{lemma}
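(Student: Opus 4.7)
The plan is to expand both terms inside the norm as explicit sums over subsets of positions, and then bound the resulting difference via a purely combinatorial estimate.

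First, I would use the symmetry of $x\in\Sym^m(\Dcal)$ to rewrite
\[
\bar x := x \star \Id^{\otimes(k-m)} \;=\; \frac{1}{k!}\sum_{\pi\in S_k}\alpha_\pi\bigl(x\otimes\Id^{\otimes(k-m)}\bigr) \;=\; \frac{1}{\binom{k}{m}}\sum_{\substack{S\subset[k] \\ |S|=m}} x_S,
\]
where $x_S\in\Dcal^\infty$ denotes $x$ placed on the positions indexed by $S$. This is unambiguous precisely because the $m!$ permutations mapping $[m]\to S$ all produce the same element, by symmetry of $x$. An analogous formula holds for $\bar y := y\star\Id^{\otimes(k-n)}$. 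A parallel double-counting, using $\binom{k}{m+n}\binom{m+n}{m}=\binom{k}{m}\binom{k-m}{n}$, shows
\[
x\star y\star \Id^{\otimes(k-m-n)} \;=\; \frac{1}{\binom{k}{m}\binom{k-m}{n}} \sum_{\substack{|S|=m,\,|T|=n \\ S\cap T=\emptyset}} x_S\, y_T.
\]

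Next, I would expand $\bar x\bar y=\binom{k}{m}^{-1}\binom{k}{n}^{-1}\sum_{S,T} x_S y_T$ and split the sum according to whether $S\cap T=\emptyset$. Subtracting the expression for $x\star y\star\Id^{\otimes(k-m-n)}$, the difference becomes
\[
\left(\frac{1}{\binom{k}{m}\binom{k}{n}} - \frac{1}{\binom{k}{m}\binom{k-m}{n}}\right) \sum_{\text{disjoint}} x_S y_T \;+\; \frac{1}{\binom{k}{m}\binom{k}{n}} \sum_{\text{overlapping}} x_S y_T.
\]
By submultiplicativity of the $C^*$-norm together with the isometric property of the embeddings $\Dcal^k\hookrightarrow\Dcal^\infty$ recorded just before the lemma, each summand satisfies $\|x_S y_T\|_{\Dcal^\infty}\leq\|x\|\,\|y\|$. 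Applying the triangle inequality and counting $\binom{k}{m}\binom{k-m}{n}$ disjoint pairs and $\binom{k}{m}\binom{k}{n}-\binom{k}{m}\binom{k-m}{n}$ overlapping pairs, the whole expression is bounded by
\[
2\left(1 - \frac{\binom{k-m}{n}}{\binom{k}{n}}\right)\|x\|\,\|y\| \;=\; 2\left(1-\prod_{i=0}^{n-1}\frac{k-m-i}{k-i}\right)\|x\|\,\|y\|,
\]
which tends to $0$ as $k\to\infty$ with $m,n$ fixed.

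I do not expect a serious obstacle here; the content is essentially combinatorial. The only point requiring any care is to keep the multinomial normalizations consistent when switching between permutation-indexed sums and subset-indexed sums. Crucially, no tensor-product subtlety enters, because the uniform bound $\|x_S y_T\|\leq\|x\|\,\|y\|$ follows from the $C^*$-norm axioms alone and is therefore insensitive to the choice of norm on the algebraic tensor product; this is what makes the estimate extend cleanly from the minimal to the maximal tensor product setting.
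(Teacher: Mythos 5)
Your proof is correct and follows essentially the same route as the paper's: both expand the $\star$-products as uniform averages over placements of $x$ and $y$ inside $[k]$ and bound the discrepancy by the probability (at most $mn/k$) that the two placements overlap, times $\|x\|\,\|y\|$. If anything, your subset-indexed bookkeeping is slightly more careful than the paper's, which writes the difference as \emph{equal} to the sum over overlapping permutation pairs and thereby silently absorbs the renormalization of the disjoint part that produces your harmless extra factor of $2$.
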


\begin{proof}
	Choose two sets of respective size $m, n$ uniformly at random from $[k]:=\{1, \dots, k\}$.
	The probability that any given element is contained in both sets is $\frac{m}{k}\frac{n}{k}$.
	By the union bound,
	the probability that these two sets intersect at all is 
	not larger than
	$\frac{mn}{k}$.
	Thus
	\begin{align*}
			&\|(x\star \Id^{\otimes (k-m)})(y\star \Id^{\otimes (k-n)}) - (x \star y \star \Id^{\otimes (k-m-n)})  \|_{\Dcal^\infty}\\ 
			= 
			&\Big\|
			\frac{1}{(k!)^2} \sum_{\stackrel{\pi,\pi' \in S_k}{\pi([m])\cap\pi'([n])\neq \emptyset}} \alpha_\pi(x \otimes \Id^{\otimes (k-m)}) \alpha_{\pi'}(y \otimes \Id^{\otimes (k-n)}) 
			\Big\|_{\Dcal^\infty}\\
		\leq &\frac{mn}k \norm{x}_{\Dcal^\infty} \norm{y}_{\Dcal^\infty}.
	\end{align*}
\end{proof}

\begin{proof}[Proof of Lemma~\ref{lem:Cstarnorm}]
	For $x \in \Sym^n(\Dcal)$, we have the estimate
	\begin{align*}
					\|x\star\Id\|_{\Dcal^\infty}
					=
					\Big\|
					\frac{1}{(n+1)!} \sum_{\pi\in S_{n+1}} \alpha_\pi(x\otimes\Id)
					\Big\|_{\Dcal^\infty}
					\leq
					\frac{1}{(n+1)!} \sum_{\pi \in S_{n+1}} 
					\left\|
					\alpha_\pi(x\otimes \Id)
					\right\|_{\Dcal^\infty}
					=
					\norm{x \otimes \Id}_{\Dcal^\infty}
					\leq
					\|x\|_{\Dcal^\infty}.
	\end{align*}
	Using this estimate repeatedly shows that the sequence 
	$\|x\star \Id^{\otimes k}\|_{\Dcal^\infty}=\|(x\star \Id^{\otimes (k-1)})\star \Id\|_{\Dcal^\infty}$ 
	is non-increasing and hence convergent.
	Subadditivity, absolute homogeneity, and invariance under involution of $\|\cdot\|_{\Sym}$ on $\cup_n \Sym^n(\Dcal)$ follow directly from the same properties of $\|\cdot\|_{\Dcal^\infty}$.
  For $x \in \Sym^m(\Dcal)$ and $y \in \Sym^n(\Dcal)$, Lemma~\ref{lem:combinatorics} implies the $C^*$-norm property
	\begin{align*}
		\begin{split}
			\|x\star y\|_{\Sym} &= 
			\lim_{k\to\infty}
			\|x\star y \star \Id^{\otimes (k-n-m)}\|_{\Dcal^\infty}  \\
			&=
			\lim_{k\to\infty}
			\|(x\star \Id^{\otimes(k-n)})(y \star \Id^{\otimes (k-m)})\|_{\Dcal^\infty}  \\
			&\leq
			\lim_{k\to\infty}
			\|(x\star \Id^{\otimes (k-n)})\|_{\Dcal^\infty}\|(y \star \Id^{\otimes (k-m)})\|_{\Dcal^\infty}  \\
			&=
			\|x\|_{\Sym} 
			\|y\|_{\Sym} 
	  \end{split}
	\end{align*}
	with equality if $y=x^*$.

  We have thus verified the $C^*$-seminorm properties, and the second advertised claim follows from the general theory \cite[Section~6.1]{murphy1990c}.
\end{proof}

Next, we aim to set up a bijection between the space of symmetric states $K_s(\Dcal^\infty)$ and the state space $K(\Sym^\infty(\Dcal))$ of the abelian algebra.
The connection revolves around $\cup_n\Sym^n(\Dcal)$, as it can be interpreted as a subspace of either algebra.
We will thus look for natural ways of extending a state $\rho$ from 
$\cup_n\Sym^n(\Dcal)$ to 
$\Sym^\infty(\Dcal)$ and to 
$\Dcal^\infty$ respectively.

For the former case, we can use the fact that $\cup_n\Sym^n(\Dcal)$ is dense in $\Sym^\infty(\Dcal)$.
Thus, if $\rho\in K_s(\Dcal^\infty)$, it is natural to try to extend 
it 
by continuity from
$\cup_n \Sym^n(\Dcal)$ to a state on all of $\Sym^\infty(\Dcal)$.
Lemma~\ref{lem:states} shows that this ansatz indeed leads to a well-defined map
\begin{align}\label{eqn:sym to abelian}
	E: K_s(\Dcal^\infty) \to K(\Sym^\infty(\Dcal)).
\end{align}

Conversely, in order to evaluate a state $\rho\in K(\Sym^\infty(\Dcal))$ on an element of $x\in\Dcal^\infty$, our approach is to map $x$ to a symmetrized version $\Sym(x)\in\Sym^\infty(\Dcal)$ and then to apply $\rho$ to $\Sym(x)$.
To define the symmetrization operation, 
note that  any element of $\Dcal^\infty$ can be represented by a Cauchy sequence $(x_n)_n$ with $x_n\in \Dcal^n$
and set
\begin{align}\label{eqn:def sym}
	\Sym: (x_n)_n \mapsto (\Sym^n(x_n))_n.
\end{align}
Lemma~\ref{lem:states} establishes that the 
result lies in $\Sym^\infty(\Dcal)$ and that the adjoint
\begin{align}
	(\Sym^*(\rho))(x) = \rho(\Sym(x))
\end{align}
defines a map 
\begin{align}\label{eqn:abelian to sym}
	\Sym^*: K(\Sym^\infty(\Dcal)) \to K_s(\Dcal).
\end{align}

\begin{lemma}\label{lem:states}
	The maps $\Sym^*$ and $E$
	are well-defined and inverses of each other.
	What is more, $\Sym^*$ is weakly continuous.
\end{lemma}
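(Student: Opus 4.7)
My plan is to pivot on establishing the symmetrization map $\Sym$ of Eq.~\eqref{eqn:def sym} as a well-defined, unital, norm-contractive linear map $\Sym:\Dcal^\infty\to\Sym^\infty(\Dcal)$; once this is in place, $\Sym^*(\phi):=\phi\circ\Sym$ will inherit its properties automatically from $\Sym$. Well-definedness of $\Sym$ will follow from the estimate $\|\Sym^n(x)\|_{\Sym}\leq\|x\|_{\Dcal^n}$ for $x\in\Dcal^n$ -- obtained by combining the triangle inequality applied to $\Sym^n(x)=\frac{1}{n!}\sum_\pi\alpha_\pi(x)$, the isometric nature of each $\alpha_\pi$, and $\|\cdot\|_{\Sym}\leq\|\cdot\|_{\Dcal^\infty}$ from Lemma~\ref{lem:Cstarnorm} -- together with the intertwining identity $\Sym^n(x)\star\Id^{\otimes k}=\Sym^{n+k}(x\otimes\Id^{\otimes k})$ that guarantees $\Sym$ respects the embeddings used to form the two inductive limits. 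With $\Sym$ in hand, $\Sym^*(\phi)$ is a bounded linear functional on $\Dcal^\infty$ of norm $\leq 1$ satisfying $\Sym^*(\phi)(\Id)=1$, hence a state by the standard characterization of states on unital $C^*$-algebras. Symmetry of $\Sym^*(\phi)$ follows from $\Sym\circ\alpha_\pi=\Sym$, and weak-$*$ continuity of $\Sym^*$ is automatic for pre-composition by a fixed bounded map.

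For $E$, I would show that any $\rho\in K_s(\Dcal^\infty)$ restricts continuously from $\cup_n\Sym^n(\Dcal)$ to all of $\Sym^\infty(\Dcal)$ with respect to $\|\cdot\|_{\Sym}$. The decisive estimate relies on identifying $x\in\Sym^n(\Dcal)$ with $x\otimes\Id^{\otimes k}\in\Dcal^{n+k}$ inside $\Dcal^\infty$, and on $\rho\circ\alpha_\pi=\rho$:
\begin{align*}
	\rho(x) = \rho(x\otimes\Id^{\otimes k}) = \rho\big(\Sym^{n+k}(x\otimes\Id^{\otimes k})\big) = \rho(x\star\Id^{\otimes k}),
\end{align*}
so $|\rho(x)|\leq\|x\star\Id^{\otimes k}\|_{\Dcal^\infty}$. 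Passing to the limit $k\to\infty$ via Eq.~\eqref{eqn:symmetric norm} yields the target bound $|\rho(x)|\leq\|x\|_{\Sym}$. The unique continuous extension $E(\rho)$ then has norm $\leq 1$ and value $1$ on $\Id$, so it is a state. The inverse relations reduce to direct computations on dense subspaces: $\Sym$ acts as the identity on $\cup_n\Sym^n(\Dcal)$ (symmetric elements are fixed by their own symmetrization), yielding $E\circ\Sym^*=\mathrm{id}$; conversely, symmetry of $\rho$ gives $\rho(\Sym^n(y))=\rho(y)$ for $y\in\Dcal^n$, yielding $\Sym^*\circ E=\mathrm{id}$.

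The main obstacle I anticipate is bookkeeping rather than any single hard step: the proof simultaneously juggles two $C^*$-norms ($\|\cdot\|_{\Dcal^\infty}$ and $\|\cdot\|_{\Sym}$), two multiplication laws (tensor product versus the symmetrized product $\star$), and two formally similar but logically distinct embeddings ($x\mapsto x\otimes\Id$ in $\Dcal^\infty$ versus $x\mapsto x\star\Id$ in $\Sym^\infty(\Dcal)$). The structural cancellation that makes the estimate for $E$ succeed -- the fact that a symmetric $\rho$ cannot distinguish between these two embeddings -- is precisely the content that makes the bookkeeping pay off, and getting this straight is where I would spend most of the writing effort.
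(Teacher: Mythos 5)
Your proposal is correct and follows essentially the same route as the paper's proof: the key estimate $|\rho(x)|=\lim_k|\rho(x\star\Id^{\otimes k})|\leq\|x\|_{\Sym}$ for symmetric $\rho$ (exploiting that symmetry makes $\rho$ blind to the difference between the embeddings $x\mapsto x\otimes\Id$ and $x\mapsto x\star\Id$), the norm-contraction of $\Sym$ for well-definedness, the characterization of states via $\rho(\Id)=1$ and $|\rho(x)|\leq\|x\|$, and direct computation on dense subspaces for the inverse relations all appear in the same roles. No substantive differences to report.
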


\begin{proof}
	We will repeatedly make use of the fact
  \cite[Prop.~II.6.2.5]{blackadar2006operator}
	that the states on a $C^*$-algebra are exactly those functionals $\rho$ that satisfy 
	\begin{align}\label{eq:characterization}
			\rho(\Id) = 1,\qquad
			\abs{\rho(x)} \leq \|x\|.
	\end{align}

	Eq.~(\ref{eqn:def sym}) indeed defines a map from $\Dcal^\infty\to\Sym^\infty(\Dcal)$:
	If $(x_n)_n, x_n\in\Dcal^n$ is a Cauchy sequence with respect to $\|\cdot\|_{\Dcal^\infty}$, then by Eq.~(\ref{eqn:norm contraction}), the sequence $(\Sym^n(x_n))_n$ 
	is Cauchy with respect to $\|\cdot\|_{\Sym}$ and therefore an element of $\Sym^\infty(\Dcal)$.
	Next, let $\rho\in K(\Sym^\infty(\Dcal))$.
	Then
	\begin{align*}
		\rho(\Sym(\Id)) = \rho(\Id) = 1,
		\qquad
		|\rho(\Sym(x))| \leq \|\Sym(x)\|_{\Sym} \leq \|x\|_{\Dcal^\infty},
	\end{align*}
	thus $\Sym^*(\rho)$ is a state.
	Because $\Sym \circ \alpha_\pi = \Sym$ for any permutation $\pi$, $\Sym^*(\rho)$ is symmetric.
	The map $\Sym^*$ is weakly continuous:
	If a net 
	$\rho_\lambda$ 
	in $K(\Sym^\infty(\Dcal))$
	converges weakly to $\rho$,
	then in particular $\rho_\lambda(\Sym(x)) \to \rho(\Sym(x))$ for all $x\in\Dcal$.
	Thus $\Sym^*(\rho_\lambda)$ converges weakly to $\Sym^*(\rho)$.

	To prove that $E$ is well-defined, start with a state $\rho\in K_s(\Dcal^\infty)$.
	For $x\in \cup_n \Sym^n(\Dcal)$,
	using symmetry and Eq.~(\ref{eq:characterization}), 
	\begin{align*}
		|\rho(x)|
		=
		\lim_{k\to\infty} |\rho(x\star \Id^{\otimes k})|
		\leq
		\lim_{k\to\infty} \|x\star \Id^{\otimes k}\|_{\Dcal^\infty}
		=\|x\|_{\Sym}.
	\end{align*}
	In other words, on $\cup_n\Sym^n(\Dcal)$,  $\rho$ is bounded with respect to the $\|\cdot\|_{\Sym}$-norm and can thus be uniquely extended by continuity to a functional $E(\rho)$ on $\Sym^\infty(\Dcal)$.
	Using Eq.~(\ref{eq:characterization}) once more, the preceding estimate also shows that $E(\rho)$ is a state.

	Finally, for each $\rho\in K_s(\Dcal^\infty), x\in\Dcal^\infty$ and $\sigma\in K(\Sym^\infty(\Dcal)), y\in\cup_n\Sym^n(\Dcal)$,
	\begin{align*}
		\Sym^*(E(\rho))(x) = E(\rho)(\Sym(x))=\rho(\Sym(x)) = \rho(x),
		\qquad
		E(\Sym^*(\sigma))(y) = (\Sym^*\sigma)(y) = \sigma(y),
	\end{align*}
	which shows that the two maps are inverses of each other.
\end{proof}

\begin{proof}[Proof of Theorem~\ref{thm:finetti maximale}]
	Consider $E(\rho)\in K(\Sym^\infty(\Dcal))$.
	By \cite[Example~4.1.30 and Proposition~2.3.27]{bratteli2012operator}, because $\Sym^\infty(\Dcal)$ is abelian, there exists a unique measure $\tilde\mu$ 
	over pure states $K_{\text{pure}}(\Sym^\infty(\Dcal))$,
	such that 
	\begin{align*}
		E(\rho)(x)
		=
		\int_{K_{\text{pure}}(\Sym^\infty (\Dcal))}
		\tilde\sigma(x)\,\mathrm{d}\tilde\mu(\tilde\sigma).
	\end{align*}
	But then, for $x_i\in\Dcal$,
	\begin{align*}
		\rho(x_1\otimes\ldots\otimes x_n)
		=
		\rho(x_1\star \ldots\star x_n) 
		=
		(E(\rho))(x_1\star \ldots\star x_n) 
		=
		\int_{K_{\text{pure}}(\Sym^\infty (\Dcal))}
		\tilde\sigma(x_1\star \ldots\star x_n) 
		\,\mathrm{d}\tilde\mu(\tilde\sigma). 
	\end{align*}
	Consider one $\tilde\sigma\in K_{\text{pure}}(\Sym^\infty(\Dcal))$,
	let $R:K(\Sym^\infty(\Dcal))\to K(\Dcal)$ be the map that restricts states 
	to $\Dcal\subset\Sym^\infty(\Dcal)$,
	and let $\sigma = R(\tilde\sigma)$.
	Because pure states of abelian algebras are homomorphisms,
	\begin{align*}
		\tilde\sigma(x_1\star \ldots\star x_n) 
		=
		\tilde\sigma(x_1)\dots \tilde\sigma(x_n)
		=
		\sigma(x_1)\dots \sigma(x_n)
		=
		\Pi_\sigma(x_1\otimes\dots\otimes x_n).
	\end{align*}

	The restriction $R$ is the adjoint of the embedding $\Dcal\to\Sym^\infty(\Dcal)$.
	As the adjoint of a bounded map, it is weak$^*$-continuous by the same argument as the one used in the proof of Lemma~\ref{lem:states}.
	Thus the concatenation $R \circ \Sym^*: \tilde\sigma\mapsto \sigma$ is continuous and hence measurable.
	We can therefore define a measure $\mu$ on $K(\Dcal)$ by 
	\begin{align*}
		\mu(S) = \tilde\mu\big( (R \circ \Sym^*)^{-1}(S) \big).
	\end{align*}
	Then
	\begin{align*}
		\rho(x_1\otimes\ldots\otimes x_n)
		=
		\int_{K(\Dcal)}
		\Pi_\sigma(x_1\otimes\ldots\otimes x_n)
		\,\mathrm{d}\mu(\sigma),
	\end{align*}
	which proves the claim, as $\cup_n\Dcal^n$ is dense in $\Dcal^\infty$.
\end{proof}

\section{A convergent hierarchy} \label{sec:new_SDP}

Motivated by the difficulties that were outlined in Section~\ref{sec:background_challenges},
we propose a modified hierarchy of semidefinite programs for a Schmidt rank-constrained version of the quantum causal optimization problem that is provably complete. 
We show that by increasing the Schmidt rank, one can approximate any POVM arbitrarily well. 

We use the triangle causal structure without settings as a guiding example to demonstrate the technique and to keep the notation relatively legible.
More general scenarios can be accommodated -- e.g.\ it is straight-forward to add additional generators to describe several possible POVMs per party.
Extensions of these methods to arbitrary quantum causal structures are discussed in Section~\ref{sec:arbitrary_quantum_causal_structures}.

\subsection{Construction of the hierarchy}

\subsubsection{The universal algebra of the quantum causal structure}

First, we define generators and relations for the universal $C^*$-algebra $\Dcal^n$ modeling the most general set of observables for the $n$-th inflation level of the causal structure.
The algebra depends on a number of parameters:
\begin{enumerate}
	\item
The causal structure (taken to be the triangle scenario for now);
	\item
The number of outcomes $M$ per vertex; 
	\item
The inflation level $n$;
	\item
A bound $r$ on the Schmidt rank of the measurement operators;
	\item
A bound $C$ on the norm of the generators of the local algebra.
\end{enumerate}
The dependency of $\Dcal^n$ on the parameters will not be made explicit, with the exception of the inflation level.

From this data, define the set $\Gcal^n$ of $6(M-1)rn+1$ generators to be
\begin{align*}
	\{\Id\}
	\cup
	\big\{
	e_-^i(a,\alpha), e_+^i(a,\alpha),
	f_-^i(a,\alpha), f_+^i(a,\alpha),
	g_-^i(a,\alpha), g_+^i(a,\alpha)\>|\>
	a\in\{1, \dots, M-1\}, 
	\alpha \in\{ 1, \dots, r\},
	i \in\{ 1, \dots, n\}
	\big\}. 
\end{align*}
We will use the abbreviations
\begin{align} \label{eq:split_Ea}
	E_a^{ij} := \sum_{\alpha=1}^r e^i_-(a,\alpha) e^j_+(a,\alpha), \quad
	F_a^{ij} := \sum_{\alpha=1}^r f^i_-(a,\alpha) f^j_+(a,\alpha), \quad
	G_a^{ij} := \sum_{\alpha=1}^r g^i_-(a,\alpha) g^j_+(a,\alpha)
\end{align}
and
\begin{align*}
	X_M^{ij} := \Id - \sum_{a=1}^{M-1} X_a^{ij},
	\qquad
	X\in\{E, F, G\}.
\end{align*}
Four types of constraints are imposed.
Locality constraints: 
\begin{align}
	\text{for all }
	&x\in\{e,f,g\},\,
	y\in\{-, +\},\, 
	a\in\{1, \dots, M\},\,
	\alpha\in\{1, \dots, r\},\,
	i \in\{ 1, \dots, n\}: \nonumber \\
	&[x_y^i(a,\alpha), x'^{i'}_{y'}(a',\alpha')]=0
	\qquad \text{ unless } x=x', y=y', \text{ and } i=i'.  \label{eq:locality constraints}
\end{align}
Measurement constraints: 
\begin{align} \label{eq:proj_meas_constr}
	X_a^{ij} \quad \text{ is positive},
	\qquad
	X \in \{ E, F, G \}.
\end{align}
Norm constraints:
\begin{align} \label{eq:bound}
	\|g\|\leq C, \qquad \forall\>\Id \neq g \in \Gcal^n.
\end{align}
And finally that 
\begin{align*}
	\Id x = x \Id = x
\end{align*}
for all generators $x$.
Together, these constraints define the set of relations $\Rcal^n$.
The NPO will run over states on the universal $C^*$-algebra $\Dcal^n=C^*(\Gcal^n|\Rcal^n)$.

\subsubsection{Polynomial constraints and objective function}

The quantum causal polynomial optimization problem minimizes a polynomial function $f_0$ over compatible states  $\rho\in K(\Dcal^1)$ that also fulfill a number of polynomial constraints $f_i(\rho)=0$.
Here, we construct these objects precisely.

Choose some $g, k\in \NN$.
Recall the definition of the finite vector space $\Fcal^{(k)}(\Gcal)$ of polynomials of order $k$ in the generators $\Gcal$ from Sec.~\ref{subsubsec:NPO}.
We assume that the functions are such that for every $f_i$, there exists a $y_i$ in the $g$-fold algebraic tensor product $\mathcal{F}^{(k)}\otimes_\mathrm{alg}\dots\otimes_\mathrm{alg}\Fcal^{(k)}$ such that $f_i(\rho)$ equals the evaluation of the product state $\rho^{\otimes g}$ on $y_i$: 
\begin{align}\label{eqn:polarization}
	f_i(\rho) = \rho^{\otimes g}(y_i)
\end{align}
For our purposes, it will be enough to take Eq.~(\ref{eqn:polarization}) as the definition of the type of functions we allow for.
We remark, though, that passing from a degree-$g$ polynomial function $f_i$ on $\mathcal{F}^{(k)}$ to a $y_i\in(\mathcal{F}^{(k)})^{\otimes_\mathrm{alg} g}$ such that Eq.~(\ref{eqn:polarization}) holds is known as a \emph{polarization} in multi-linear algebra. 
In this context, it is proven that a unique suitable $y_i$ always exists.

As an example, consider the 2-norm distance that allows one to reduce Problem~\ref{prob:compatibility} to Problem~\ref{prob:optimization} as we will see in Corollary~\ref{cor:causal_compatibility}. The objective function is then given by
\begin{align} \label{eq:causal_comp_opt}
    \sum_{a,b,c} \left( \rho(E^{11}_a F^{11}_b G^{11}_c ) - P(a,b,c) \right)^2.
\end{align}
To find the polarization, note that for a compatible state $\rho$ it holds that
\begin{align}
    &\sum_{a,b,c} \left( \rho(E^{11}_a F^{11}_b G^{11}_c ) - P(a,b,c) \right)^2\nonumber \\
    =&\sum_{a,b,c} \rho(E^{11}_a F^{11}_b G^{11}_c )\rho(E^{22}_a F^{22}_b G^{22}_c ) -2P(a,b,c) \rho(E^{11}_a F^{11}_b G^{11}_c ) + P(a,b,c)^2\nonumber \\
		=&\rho^{\otimes 2}\Big(\sum_{a,b,c} E^{11}_a F^{11}_b G^{11}_c E^{22}_a F^{22}_b G^{22}_c -2P(a,b,c) E^{11}_a F^{11}_b G^{11}_c  + P(a,b,c)^2\Big), \label{eq:polarization_example}
\end{align}
which is indeed of the form $\rho^{\otimes g}(y_i)$.

We have now assigned a precise meaning to every object that appeared in the quantum causal polynomial optimization problem (Problem~\ref{prob:optimization}), which we restate here with constraints on the Schmidt rank of the POVM elements and the norm of the generators (i.e.~as in Problem~\ref{prob:optimization rank}):
	Given a causal structure, 
	a choice for the parameters $M, r, C$,
	and a family of polynomial functions $f_i$ on $K(\Dcal^1)$ as defined above and such that the
	$f_i, i\geq 1$ are non-negative on states that are compatible with the causal structure.
	Find
	\begin{align}
\begin{split} \label{eq:opt_rank}
		f_{r,C}^\star &= \min_{\rho\in K(\Dcal^1)} f_0(\rho)  \\
			\text{s.\ t.} \quad &f_i(\rho) = 0 \quad i\geq 1 \\
			&\rho \text{ \rm{is compatible with the causal structure}}.
	\end{split}
	\end{align}

We adopt the common convention that $f^\star_{r,C}$ is $\infty$ in case the problem is infeasible.

\subsubsection{NPO formulation}

We now pass to an NPO problem, which we will show is asymptotically equivalent to the causal optimization problem in Eq.~(\ref{eq:opt_rank}).
To do so, we will replace the polynomial functions $f_i$ by their polarizations $y_i$, and replace the causal constraint on $\rho$ by symmetry constraints on a degree-$n$ inflation.

Choose some $n$ larger than or equal to the degree of $y_0$.
For permutations $\pi, \pi', \pi''\in S_n$ define
an action on generators:
\begin{align}
\begin{split}
	e_+^i \mapsto e_+^{\pi(i)},  \quad f_-^i &\mapsto f_-^{\pi(i)}, \\
	f_+^i \mapsto f_+^{\pi'(i)}, \quad g_-^i &\mapsto g_-^{\pi'(i)}, \\
	g_+^i \mapsto g_+^{\pi''(i)}, \quad e_-^i &\mapsto e_-^{\pi''(i)}.
\end{split}
\end{align}
Let $\alpha_{\pi,\pi',\pi''}$ be the extension of this action to $\Fcal(\Gcal^n)$.

The NPO problem relaxation of (\ref{eq:opt_rank}) at inflation level $n$ is
\begin{align}
\begin{split} \label{eq:optimization_new}
	f_{r,C}^n &= \min_{\rho\in K(\Dcal^n)} \rho(y_0)  \\
	\text{s. t.} \quad &\rho(y_i) = 0,\qquad   i\geq 1, y_i \in \Fcal^{(n)}(\Gcal), \\
										 &\rho(b_j^{(n)} - \alpha_{\pi,\pi',\pi''}(b_j^{(n)})) = 0, 
	\end{split}
	\end{align}
	where the final symmetry constraint ranges over a basis $\{b_j^{(n)}\}$ for $\Fcal^{(n)}(\Gcal)$ and a generating set of permutations in $S_n^{\times 3}$.
	
We note that (\ref{eq:optimization_new}) is not yet directly a semi-definite program.
Instead, every instance gives rise to the infinite (but complete) hierarchy of SDP relaxations discussed in Section~\ref{sec:cstar}.

\subsection{Proof of completeness} \label{sec:completeness}

\begin{figure}
\includegraphics[width=0.3\textwidth]{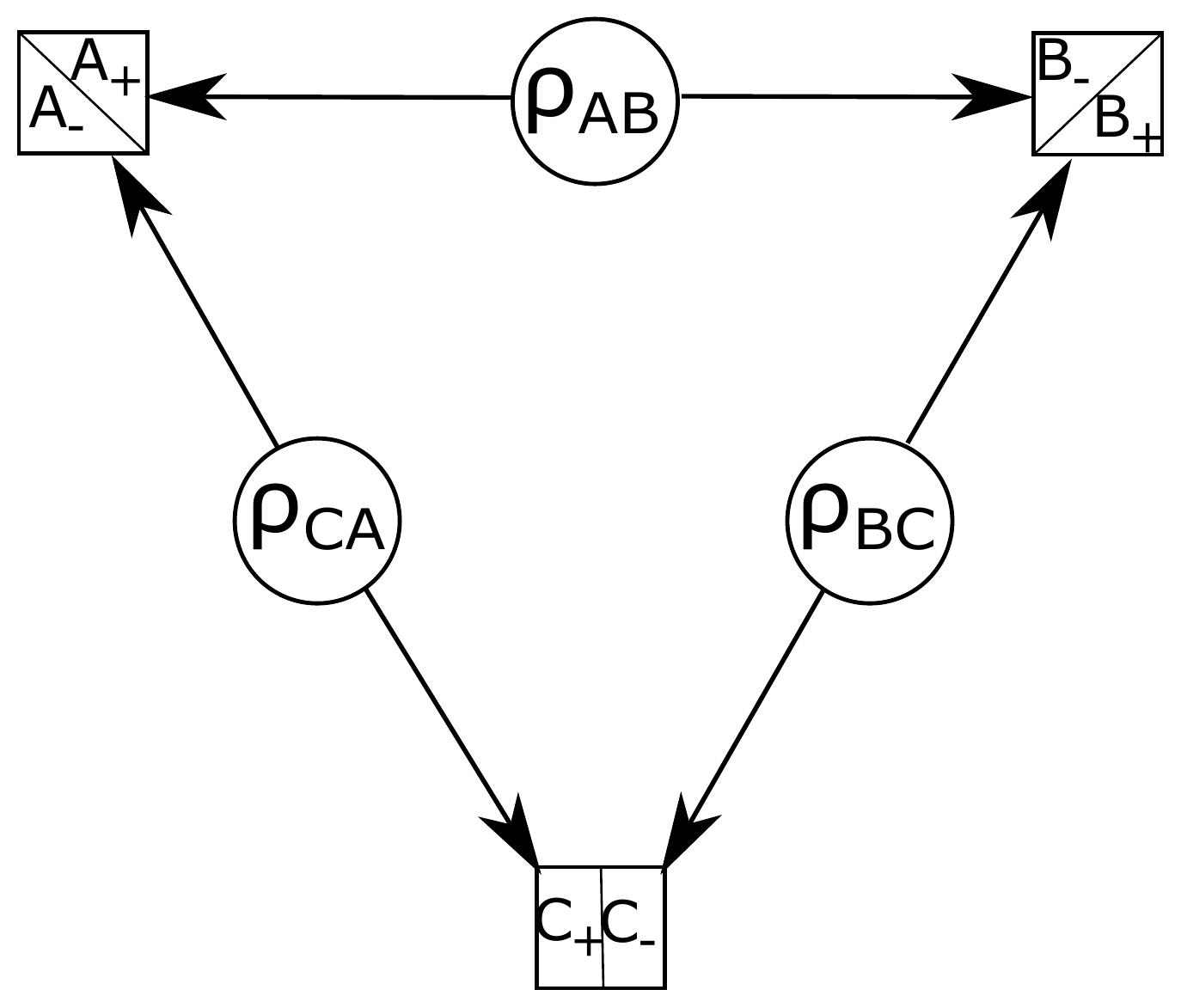}
\caption{We associate to each quantum system an observable algebra. This means that the POVM elements of Alice generate a subalgebra of a larger algebra $\Acal_- \otimes \Acal_+$, and similar for Bob and Charlie. It is with respect to the splitting $\Ccal_+ \Acal_- \mid \Acal_+ \Bcal_- \mid \Bcal_+ \Ccal_-$ that the global state is supposed to factorize.}
\label{fig:triangle_split}
\end{figure}

Now follows the proof that the inflation hierarchy of Eq.~\eqref{eq:optimization_new} is complete, i.e.\ that in the limit of $n\to\infty$, Eq.~\eqref{eq:optimization_new} and Eq.~\eqref{eq:opt_rank} are equivalent. 
Afterwards, we show that for every $\epsilon$ in the approximate quantum causal compatibility problem (Problem \ref{prob:compatibility}), there exist a Schmidt rank $r$ and a norm bound $C$ such that any compatible distribution can be $\epsilon$-approximated by one that can be realized in a model that respects the bounds on $r, C$.

\begin{theorem}\label{thm:hierarchy_convergence}
	The hierarchy \eqref{eq:optimization_new} is complete for the problem \eqref{eq:opt_rank} in the sense that
	\begin{align*}
		f_{r,C}^{\infty} := \lim_{n} f_{r,C}^n = f_{r,C}^\star.
	\end{align*}
\end{theorem}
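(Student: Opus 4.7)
The plan is to establish the two inequalities $f_{r,C}^n \leq f_{r,C}^\star$ for every $n$ (easy) and $\lim_n f_{r,C}^n \geq f_{r,C}^\star$ (hard). The easy direction is immediate: given any $\rho \in K(\Dcal^1)$ feasible for \eqref{eq:opt_rank}, the $n$-fold product state $\rho^{\otimes n} \in K(\Dcal^n)$ is manifestly symmetric under $S_n^{\times 3}$, satisfies every polynomial constraint by the defining property \eqref{eqn:polarization} of the polarizations $y_i$, and attains the same objective value. All the work is in the hard direction.

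For the hard direction, I first identify $\Dcal^n$ with $D_{AB}^{\otimes n}\otimes D_{BC}^{\otimes n}\otimes D_{CA}^{\otimes n}$, where each edge algebra $D_{XY}$ is generated by the variables living on the edge $XY$; this identification is forced by the locality relations \eqref{eq:locality constraints}, and under it the three permutation actions $\pi,\pi',\pi''$ act independently on the three edge tensor powers. I then take an optimizing sequence $\rho_n$ for \eqref{eq:optimization_new} and use the natural embeddings $\Dcal^n \hookrightarrow \Dcal^{n+k}$, $x \mapsto x\otimes \Id^{\otimes k}$, to view each $\rho_n$ as a functional on the local algebra $\bigcup_m \Dcal^m$. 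Banach--Alaoglu together with a diagonal extraction over $m$ produces a subsequence converging weak-$*$ on every $\Dcal^m$, and the limit extends uniquely to a state $\rho_\infty$ on the quasi-local algebra $\Dcal^\infty \cong D_{AB}^\infty \otimes D_{BC}^\infty \otimes D_{CA}^\infty$. Since every $\rho_n$ satisfies the symmetry and polynomial constraints of \eqref{eq:optimization_new}, $\rho_\infty$ is invariant under the full $S_\infty^{(1)}\times S_\infty^{(2)}\times S_\infty^{(3)}$ action and satisfies $\rho_\infty(y_i) = 0$ for all $i \geq 1$.

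The central step is to decompose $\rho_\infty$ as a mixture of product states across the three edges,
\begin{align*}
    \rho_\infty = \int \sigma_{AB}^{\otimes\infty} \otimes \sigma_{BC}^{\otimes\infty} \otimes \sigma_{CA}^{\otimes\infty} \, \dd\mu(\sigma_{AB}, \sigma_{BC}, \sigma_{CA}).
\end{align*}
Theorem~\ref{thm:finetti maximale} gives precisely this conclusion when a single $S_\infty$ acts on a single max tensor power, but $\Dcal^\infty$ carries three independent permutation groups on three distinct tensor factors. I see two routes: either iterate Theorem~\ref{thm:finetti maximale} one edge at a time, which requires extending it to allow a bystander algebra $\mathcal{E}$ on which no symmetry is assumed (the argument of Sec.~\ref{sec:max_de_finetti} adapts by working with $\bigcup_n \Sym^n(D_{XY}) \otimes \mathcal{E}$ and proving the analogues of Lemmas~\ref{lem:Cstarnorm}--\ref{lem:states}); or directly construct the abelian $C^*$-algebra $\Sym^\infty(D_{AB}) \otimes \Sym^\infty(D_{BC}) \otimes \Sym^\infty(D_{CA})$, whose pure-state characters are in bijection with triples $(\sigma_{AB},\sigma_{BC},\sigma_{CA})$, and transport the abelian-state decomposition back to $\Dcal^\infty$ via the map $\Sym^*$ of Lemma~\ref{lem:states}. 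Making either route rigorous is the main technical obstacle.

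Once the decomposition is in hand, the conclusion is routine. Each atom $\sigma_{AB}^{\otimes\infty}\otimes\sigma_{BC}^{\otimes\infty}\otimes\sigma_{CA}^{\otimes\infty}$ restricted to $\Dcal^1 = D_{AB}\otimes D_{BC}\otimes D_{CA}$ equals $\sigma_{AB}\otimes\sigma_{BC}\otimes\sigma_{CA}$, which factorizes across the three edges in exactly the pattern required by \eqref{eqn:factorization}, hence defines a causally compatible state. By the polarization property and the product structure of each atom, $\rho_\infty(y_j) = \int f_j(\sigma_{AB}\otimes\sigma_{BC}\otimes\sigma_{CA})\,\dd\mu$ for every $j$; applied to $y_0$ this writes $\lim_n f_{r,C}^n$ as a $\mu$-average of $f_0$ over the atoms, while applied to $y_i$ with $i \geq 1$ it forces $f_i(\sigma_{AB}\otimes\sigma_{BC}\otimes\sigma_{CA}) = 0$ $\mu$-almost surely, since $f_i \geq 0$ on compatible states. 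Hence $\mu$-a.e.\ atom is feasible for \eqref{eq:opt_rank}, and at least one achieves $f_0 \leq \lim_n f_{r,C}^n$, proving $f_{r,C}^\star \leq \lim_n f_{r,C}^n$.
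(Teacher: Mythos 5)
Your outline agrees with the paper's proof at both ends --- the easy inequality $f_{r,C}^n \leq f_{r,C}^\star$ via product states, and the final measure-theoretic step of extracting a feasible atom from the mixture --- but it has a hole exactly at its crux. You never establish the decomposition $\rho_\infty = \int \sigma_{AB}^{\otimes\infty}\otimes\sigma_{BC}^{\otimes\infty}\otimes\sigma_{CA}^{\otimes\infty}\,\dd\mu$, and you say so yourself (``making either route rigorous is the main technical obstacle''). Both routes you sketch require proving a de Finetti theorem the paper does not contain --- either a version of Theorem~\ref{thm:finetti maximale} with a bystander algebra, or a three-fold symmetric quasi-local algebra with its own analogues of Lemmas~\ref{lem:combinatorics}--\ref{lem:states} --- and you supply neither. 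As written, the argument announces the statement it needs rather than proving it. (A smaller gloss: the ``manifest'' $S_n^{\times 3}$-invariance of $\rho^{\otimes n}$ in the easy direction holds only because a compatible $\rho$ factorizes across the three edges as in \eqref{eqn:factorization}; the product structure over inflation layers alone gives only diagonal invariance.)

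The idea you are missing is that no multi-symmetric de Finetti theorem is needed. The paper groups $\Dcal^\infty$ by \emph{inflation layer} rather than by edge and applies Theorem~\ref{thm:finetti maximale} once, to the \emph{diagonal} action $\pi=\pi'=\pi''$, so that $\omega=\int\Pi_\sigma\,\dd\mu(\sigma)$ with each $\Pi_\sigma$ an infinite product over layers of a single state $\sigma$ on $\Dcal^1$ --- at this stage with no factorization across edges. The off-diagonal symmetry is then used separately: for $x=uvw$ with $u,v,w$ in the three edge algebras of the first $n$ layers, one writes $\omega(uvw)=\omega\bigl(u\,\alpha_\pi(v)\,\alpha_{\pi'}(w)\bigr)$ with $\pi,\pi'$ chosen to push $v$ and $w$ into the second and third blocks of $n$ layers. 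Since each $\Pi_\sigma$ factorizes across disjoint layers and is invariant under layer permutations, this gives $\omega(uvw)=\int\Pi_\sigma(u)\Pi_\sigma(v)\Pi_\sigma(w)\,\dd\mu(\sigma)$, which is precisely the edge-factorized mixture you were after, obtained using only the de Finetti theorem already proved. If you want to salvage your own route, this diagonal-then-off-diagonal argument is also the cleanest way to prove the three-fold decomposition you posited; without it (or an equivalent amount of new work on the symmetrized algebras), the proof is incomplete.
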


\begin{proof}
	Since each level of the hierarchy is a relaxation of the original problem, it holds that
	\begin{align} \label{eq:ineq_easy_way}
			f_{r,C}^n \leq f_{r,C}^\star \qquad \forall n.
	\end{align}

	The converse inequality is more involved. 
	We start by constructing a state $\omega_n$ on $\Dcal^\infty$ for each $n$ by taking the infinite tensor product of some optimizing state of the problem in Eq.~(\ref{eq:optimization_new}).
	By the Banach-Alaoglu Theorem applied to the state space $K(\Dcal^\infty)$, there exists a  weak$^*$-convergent subsequence of the $\omega_n$.
	Let $\omega$ be its limit point.

	For each $i\geq 1$,
	$y_i$ has a finite degree $n_i$. 
	The constraint $\rho(y_i)=0$ 
  in \eqref{eq:optimization_new}
	implies that $\omega_n(y_i)=0$ for every $n\geq n_i$, and therefore the same is true for the limit:
	\begin{align*}
		\omega(y_i)=0.
	\end{align*}
	Because $\omega_n$ is chosen to be an optimizer, $\omega_n(y_0)=f^n_{r,C}$ and thus 
	\begin{align*}
		\omega(y_0) = \lim_{n\to\infty} f^n_{r,C} = f^{\infty}_{r,C}.
	\end{align*}
	Likewise, the symmetry constraints 
  in \eqref{eq:optimization_new}
	imply that 
	\begin{align}\label{eqn:symmetry limit}
		\omega \circ \alpha_{\pi,\pi',\pi''} = \omega.
	\end{align}
  Restricting to the diagonal case $\pi=\pi'=\pi''$, we conclude that
	the limit $\omega$ is a symmetric state on $\Dcal^\infty$, so that
	Theorem~\ref{thm:finetti maximale} applies.
	
	Next, for each $1\leq n \leq \infty$, introduce the algebras
	\begin{align*}
		(\Ccal_+ \Acal_-)^{n}, \qquad (\Acal_+ \Bcal_-)^{n},	 \qquad (\Bcal_+ \Ccal_-)^{n},
	\end{align*}
	where $(\Ccal_+ \Acal_-)^n \subset \Dcal^n$ is the subalgebra 
	generated by 
	$\bigcup_{i\leq n}\bigcup_{a,\alpha} \{ g_+^i(a,\alpha), e_-^i(a,\alpha) \}$,
	and similar for $(\Acal_+ \Bcal_-)^n$ and $(\Bcal_+ \Ccal_-)^n$.	
	As $n$ ranges over all natural numbers, the linear span of elements of the form
	\begin{align*}
		x = u v w,
	\end{align*}
	with $u \in (\Ccal_+ \Acal_-)^{n}$, $v \in (\Acal_+ \Bcal_-)^{n}$ and $w \in (\Bcal_+ \Ccal_-)^{n}$ is dense in $\Dcal^\infty$.
	We therefore lose no generality by restricting the analysis of the action of $\omega$ to elements of this form.

	Fix one $n\in\NN$ and $x=uvw\in\Dcal^n$.
	Using the cycle notation, define the permutations 
	\begin{align*}
		\pi &= (1,n+1) \, (2, n+2)\, \dots\, (n,2n)  &&\text{(i.e.\ exchange the 1st block of $n$ symbols with the 2nd block of $n$ symbols)},\\
		\pi' &= (1,2n+1) \, (2, 2n+2)\, \dots\, (n,3n)  &&
		\text{(i.e.\ exchange the 1st block of $n$ symbols with the 3rd block of $n$ symbols)}.
	\end{align*}
	Then
	\begin{align}
			\omega(x) &= \omega(\alpha_{\Id,\pi,\pi'}(x))\\
								&= 	\omega \big(
											u \, \alpha_\pi(v)\, \alpha_{\pi'}(w) 
										\big) \label{eqn:non diag symmetry}\\ 
								&= \int \dd \mu(\sigma)\ {\Pi_\sigma} \big( 
											u \, \alpha_\pi(v)\, \alpha_{\pi'}(w)\, 
								\big) 
			\label{eq:omega_separable}\\
								&= 
								\int \dd \mu(\sigma)\ 
								{\Pi_\sigma} ( u ) \,
								{\Pi_\sigma} \big( \alpha_\pi(v) \big) \,
								{\Pi_\sigma} \big( \alpha_{\pi'}(w) \big)
			\label{eq:omega_prod} \\
			&= 
								\int \dd \mu(\sigma)\ 
								{\Pi_\sigma} ( u ) \,
								{\Pi_\sigma} ( v ) \,
								{\Pi_\sigma} ( w ),
			\label{eq:omega_symmetric}
	\end{align}
	where 
	Eq.~\eqref{eqn:non diag symmetry} follows from Eq.~\eqref{eqn:symmetry limit},
	Eq.~\eqref{eq:omega_separable} from Theorem~\ref{thm:finetti maximale}, and in Eqs.~\eqref{eq:omega_prod} and \eqref{eq:omega_symmetric} we have used that $\Pi_\sigma$ is a symmetric product state for disjoint sets of layers of the inflation. 

	For each $\sigma$, the integrand in Eq.~(\ref{eq:omega_symmetric}) factorizes (c.f.~Fig.~\ref{fig:triangle_split}).
	The respective marginals of $\Pi_\sigma$ will be denoted as
	\begin{align}
		\Lambda_{\sigma}^{\Ccal_+ \Acal_-} := \Pi_{\sigma} |_{(\Ccal_+  \Acal_-)^\infty}, &
		\qquad \Lambda_{\sigma}^{\Acal_+ \Bcal_-} := \Pi_{\sigma} |_{(\Acal_+  \Bcal_-)^\infty}, 
		\qquad \Lambda_{\sigma}^{\Bcal_+ \Ccal_-} := \Pi_{\sigma} |_{(\Bcal_+  \Ccal_-)^\infty},
	\end{align}
	so that the product state appearing in the integrand is
	\begin{align}
		&\Lambda_{\sigma} := \Lambda_{\sigma}^{\Ccal_+ \Acal_-} \otimes \Lambda_{\sigma}^{\Acal_+ \Bcal_-} \otimes \Lambda_{\sigma}^{\Bcal_+ \Ccal_-}.
	\end{align}
	Therefore, $\omega$ is a convex combination 
	\begin{align}
		\omega(x) &= 
		\int \dd \mu(\sigma)\ \Lambda_{\sigma} (x).
	\end{align}
of states $\Lambda_\sigma$ that are compatible with 
	the causal structure.

	It remains to be shown that we can choose one $\sigma$, such that $\Lambda_\sigma(y_0) = f^\infty_{r,C}$ and $\Lambda_\sigma(y_i) = 0$ for all $i \geq 1$.

	By the definition of Problem~\ref{prob:optimization}, the $y_i$ are non-negative on states compatible with the causal structure, i.e.\ $\Lambda_\sigma(y_i)\geq 0$ for all $i\geq 1$.
	Because $\omega(y_i)=0$ as well, the constraints must be fulfilled on a set $E\subset K(\Dcal)$ of measure $\mu(E)$ equal to one.
	For every $\Lambda_\sigma$ with $\sigma \in E$ it must hold that $\Lambda_\sigma(y_0) \geq f^\infty_{r,C}$, for else one could have chosen $\mu$ to be the point measure on a state $\sigma' \in E$ with $\Lambda_{\sigma'}(y_0) < f^\infty_{r,C}$, which contradicts the fact that $f^\infty_{r,C}$ is a minimum.
	As before, there must be a subset $F\subset E$ of measure $\mu(F)$ equal to one such that $\Lambda_\sigma(y_0)=f^\infty_{r,C}$ for all $\sigma\in F$.
	Therefore, any state $\Lambda_\sigma$ such that $\sigma \in F$ is compatible with the constraints of Eq.~\eqref{eq:opt_rank}, so that we can conclude
	\begin{align} \label{eq:ineq_hard_way}
		f^\infty_{r,C} 
					= 
					\Lambda_{\sigma}(y_0) \geq f^\star_{r,C} \quad \forall \sigma \in F.
	\end{align}
	Combining Eqs.~\eqref{eq:ineq_easy_way} and \eqref{eq:ineq_hard_way} yields $f^\infty_{r,C} = f^\star_{r,C}$.
\end{proof}

\vspace{0.2cm}
\noindent \textbf{Remarks.} 
\begin{enumerate}
	\item
It is not obviously possible to extract a compatible state from the SDP.
	\item
	The proof of Theorem \ref{thm:hierarchy_convergence} shows that it is in general not possible to add additional constraints of the form $\rho(x) \geq 0$ to the program for elements $x$ that are not necessarily positive on compatible states: the states $\Lambda_\sigma$ that are compatible with the causal structure might not obey these constraints, since they only apply to $\omega$. 
That is, the set $E$ defined in the proof  will in general not have full measure.
However, if the optimization problem is a feasibility problem, i.e.~if it has a trivial objective function, it is possible to put one such constraint as the objective function and reject the solution if the optimal value does not obey the inequality. 
We will apply this to the constraints of the causal compatibility problem in Corollary \ref{cor:causal_compatibility} below.
\end{enumerate}

\begin{lemma} \label{lem:two-norm}
Consider a probability distribution $P$ that is compatible with a given causal structure. 
Choose $\epsilon>0$.
There exist constants $C, r$ such that
there is a distribution $\tilde P$ that approximates $P$ in the sense that
\begin{align*}
		\|P-\tilde P\|_2^2 \leq \epsilon
\end{align*}
which can be realized using only POVM elements of the form
\begin{align} \label{eq:POVM_decomp}
	\tilde E = 
	\sum_{\alpha=1}^r e_-(\alpha) \cdot e_+(\alpha), \qquad \text{\rm such that  }\|e_-(\alpha)\|, \|e_+(\alpha)\| \leq C.
\end{align}
\end{lemma}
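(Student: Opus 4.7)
My plan is to start from an exact quantum model realizing $P$ — commuting local algebras $\Acal_\pm,\Bcal_\pm,\Ccal_\pm$, a state $\rho$ factorizing as in Eq.~(\ref{eqn:factorization}), and POVMs $\{E_a\},\{F_b\},\{G_c\}$ in the respective node algebras $\Acal,\Bcal,\Ccal$ — and approximate only the POVM elements, leaving $\rho$ and the underlying algebras untouched. This automatically preserves causal compatibility, since the new model factorizes in exactly the same way.

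The heart of the argument is the density of the algebraic tensor product. Each node algebra, say $\Acal$, is by definition the $C^*$-algebra generated by the commuting subalgebras $\Acal_-$ and $\Acal_+$; since these commute, the $*$-algebra they generate consists of finite sums $\sum_\alpha a_-(\alpha)\,a_+(\alpha)$ with $a_\pm(\alpha)\in\Acal_\pm$, and this algebraic tensor product is norm-dense in $\Acal$. In particular the positive square root $\sqrt{E_a}\in\Acal$ (which exists because $0\le E_a\le\Id$) can be approximated to any prescribed precision $\epsilon'$ by an element $R_a=\sum_{\alpha=1}^{r_0} e_-(a,\alpha)\,e_+(a,\alpha)$ with $e_\pm(a,\alpha)\in\Acal_\pm$. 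Setting $\hat E_a:=R_a^*R_a$ and using commutativity of $\Acal_-$ with $\Acal_+$ to collect factors, one obtains a positive operator of the required form $\sum_{\beta=1}^{r_0^2}\tilde e_-(a,\beta)\,\tilde e_+(a,\beta)$; since $\|R_a\|\le 1+\epsilon'$, the norms of the factors are bounded by some constant $C$. A two-term telescope gives
\begin{align*}
\|\hat E_a-E_a\|=\|R_a^*R_a-\sqrt{E_a}^*\sqrt{E_a}\|\le 2(1+\epsilon')\,\epsilon'.
\end{align*}
The same construction is performed at the $B$ and $C$ vertices.

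The operators $\hat E_a$ need not sum to $\Id$, so the next step is to turn them into a genuine POVM while keeping the product structure. Because $\|\sum_a\hat E_a-\Id\|\le 2M(1+\epsilon')\,\epsilon'=:\delta$, I rescale $\tilde E_a:=(1-\eta)\hat E_a$ for $a=1,\dots,M-1$, with $\eta$ slightly larger than $\delta$, and define the remaining outcome $\tilde E_M:=\Id-\sum_{a<M}\tilde E_a$. The operator inequality $\sum_{a<M}\tilde E_a\le(1-\eta)(\Id+\delta\,\Id)\le\Id$ ensures $\tilde E_M\ge 0$; furthermore $\tilde E_M$ is a linear combination of $\Id=\Id_-\cdot\Id_+$ and elements of Schmidt rank $\le r_0^2$, so it again admits a decomposition of the required form, at the cost of enlarging $r$ and $C$ by controlled amounts that depend only on $M$ and $\epsilon'$. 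Applying the same step at the other vertices yields valid POVMs $\{\tilde E_a\},\{\tilde F_b\},\{\tilde G_c\}$ built from operators with Schmidt rank at most $r$ and norm at most $C$.

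Finally I bound the 2-norm. Writing $\tilde P(a,b,c)=\rho(\tilde E_a\tilde F_b\tilde G_c)$ and telescoping three times, using $\|E_a\|,\|F_b\|,\|G_c\|\le 1$ and $\|\tilde E_a\|\le 1+O(\epsilon')$,
\begin{align*}
|P(a,b,c)-\tilde P(a,b,c)|\le \|\tilde E_a-E_a\|+\|\tilde F_b-F_b\|+\|\tilde G_c-G_c\|=O(\epsilon'),
\end{align*}
so $\|P-\tilde P\|_2^2=O(M^3\epsilon'^2)\le\epsilon$ once $\epsilon'$ is chosen small enough; this in turn fixes the required values of $r$ and $C$. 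The only real obstacle is the normalization step: the naive choice $\tilde E_M=\Id-\sum_{a<M}\hat E_a$ is not guaranteed to be positive, and one must simultaneously preserve positivity, the product decomposition, and the norm bound on the local factors. The small rescaling $(1-\eta)$ resolves this at the harmless cost of slightly enlarging $r$ and $C$, which is acceptable because the lemma only asserts their existence.
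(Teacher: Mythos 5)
Your proposal is correct and follows essentially the same route as the paper's proof: approximate each POVM element by a finite sum from the (dense) algebraic tensor product $\Acal_-\algtens\Acal_+$, repair positivity and normalization of the resulting family, and conclude via norm-continuity of the observed probabilities (with $C$ simply the largest norm among the finitely many factors used). The only substantive difference is in how positivity of the individual approximants is restored: you approximate $\sqrt{E_a}$ and square (at the cost of squaring the Schmidt rank), whereas the paper truncates a series for $E_a$ directly and shifts by $\delta\Id$ before renormalizing --- both devices work.
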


\begin{proof}
	Consider the original model that gives rise to $P$.
	By the definition of a $C^*$-tensor product, for each $a=1, \dots, M-1$, there is a convergent series
	\begin{align*}
		E_a=
		\sum_{\alpha=1}^\infty e_-(a,\alpha) \cdot e_+(a,\alpha),
		\qquad e_-(a,\alpha)\in \mathcal{A}_-, e_+(a,\alpha) \in \Acal_+.
	\end{align*}
	Let $E_a^{(r)}$ be the truncation of the series to the first $r$ terms.
	Convergence implies that for every $\delta>0$, there exists an $r$ such that 
	\begin{align*}
		\|E^{(r)}_a - E\| \leq \delta\qquad a=1, \dots, M-1.
	\end{align*}
	
	What remains to be proven is that one can turn these partial sums into an exact POVM.
  To this end, set
	\begin{align}\label{eqn:approx povm}
		\tilde E_a = 
		\frac1{1+2M\delta}
		(\delta \Id + E^{(r)}_a)\qquad a = 1, \dots, M-1.
	\end{align}
	Then the $\tilde E_a$ are positive.
	What is more,
	\begin{align*}
		\left\| 
	  	\sum_{a=1}^{M-1} \tilde E_a
		\right\|
		&\leq
		\frac{1}{1+2M\delta}
		\left(
			\sum_a^{M-1} \|\delta\Id\|
			+
			\left\|
			\sum_a^{M-1} (E_a^{(r)}+E_a - E_a)
			\right\|
		\right) \\
		&\leq
		\frac{1}{1+2M\delta}
		\left( 
			(M-1) \delta
			+
			\left\|
			  \sum_a^{M-1} E_a
			\right\|
			+
			\left\|
			\sum_a^{M-1} (E_a^{(r)}-E_a)
			\right\|
		\right) \\
		&\leq
		\frac{1}{1+2M\delta}
		\big(
			(M-1)\delta
			+
			1+ (M-1)\delta
		\big)
		< 1
	\end{align*}
	so that
	\begin{align*}
		\tilde E_{M} := \Id - \sum_{a=1}^{M-1} \tilde E_a
	\end{align*}
	is also positive.
	Therefore $\{\tilde E_1, \dots, \tilde E_M\}$ forms a POVM.
	Repeating the construction, one arrives at approximations $\tilde F_b$ to $F_b$ and $\tilde G_c$ to $G_c$.

	From Eq.~(\ref{eqn:approx povm}), the approximating POVM elements converge to the original ones in operator norm as $\delta\to 0$.
	The same is thus true for all polynomial expressions in the POVM elements.
	Therefore,
	\begin{align*}
		\tilde P(a,b,c):=\rho(\tilde E_a \tilde F_b \tilde G_c) \to P(a,b,c) \qquad (\delta\to 0)
	\end{align*}
	and, because there are only finitely many outcomes,
	\begin{align*}
		\|\tilde P - P\|_2^2 \to 0\qquad (\delta\to 0).
	\end{align*}
	Thus, choosing $r$ sufficiently high, an arbitrarily good approximation can be achieved.
	The advertised claim follows by choosing $C$ to be the largest operator norm of any factor of the partial sums involved.

\end{proof}

\begin{corollary} \label{cor:causal_compatibility}
Given a probability distribution $P$ over observed variables, the SDP hierarchy that corresponds to the optimization problem of Eq.~\eqref{eq:optimization_new} can solve the approximate quantum causal compatibility problem described in Problem \ref{prob:compatibility}.
\end{corollary}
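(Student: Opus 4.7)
My plan is to cast Problem~\ref{prob:compatibility} as a special case of the rank-constrained quantum causal polynomial optimization problem \eqref{eq:opt_rank} and then combine Theorem~\ref{thm:hierarchy_convergence} with Lemma~\ref{lem:two-norm}. Concretely, I would take $f_0(\rho) = \sum_{a,b,c}\bigl(\rho(E^{11}_a F^{11}_b G^{11}_c) - P(a,b,c)\bigr)^2$, the squared $2$-norm distance between $P$ and the distribution produced by the state, with no additional polynomial constraints ($f_i \equiv 0$ for $i \geq 1$). The explicit polarization $y_0$ of this objective has already been written down in Eq.~\eqref{eq:polarization_example}; it is a degree-$2$ polynomial in $\rho$ and hence fits the framework of Problem~\ref{prob:optimization rank}.

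Let $f^\star_{r,C}$ denote the resulting optimum and $f^n_{r,C}$ the SDP values of the hierarchy \eqref{eq:optimization_new}. Theorem~\ref{thm:hierarchy_convergence} gives $\lim_{n\to\infty} f^n_{r,C} = f^\star_{r,C}$, and, crucially, its proof shows that $f^\star_{r,C}$ is actually attained by some compatible rank-$r$, norm-$C$ state $\rho^\star$ extracted from the hierarchy via the de Finetti argument. Consequently $\rho^\star$ produces a distribution $\tilde P(a,b,c) := \rho^\star(E^{11}_a F^{11}_b G^{11}_c)$ with $\|P-\tilde P\|_2^2 = f^\star_{r,C}$, so approximate compatibility at tolerance $\epsilon$ with a rank-$r$, norm-$C$ model is equivalent to $f^\star_{r,C} \leq \epsilon$, and each $f^\star_{r,C}$ is computed to arbitrary precision by the convergent SDP hierarchy.

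To close the loop in the opposite direction I would invoke Lemma~\ref{lem:two-norm}: if $P$ admits \emph{any} compatible realization in the sense of Section~\ref{sec:quantum causal models}, then for every $\epsilon > 0$ there exist $r$ and $C$ such that a rank-$r$, norm-$C$ truncation reproduces a distribution $2$-norm-close to $P$, so that $f^\star_{r,C} \leq \epsilon$ for sufficiently large parameters. The resulting decision procedure for Problem~\ref{prob:compatibility} is then to run the SDP hierarchy with $n$, $r$, $C$ all increasing, and accept $P$ as approximately compatible once the computed value falls to $\epsilon$. The main subtlety to handle will be the one flagged in Remark~2 after Theorem~\ref{thm:hierarchy_convergence}: one cannot impose the equalities $\rho(E^{11}_a F^{11}_b G^{11}_c) = P(a,b,c)$ directly as linear constraints, since the de Finetti extraction only preserves constraints satisfied by the global symmetric state $\omega$, not by the individual product factors $\Lambda_\sigma$. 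Encoding $P$ into $f_0$ as a sum of squares circumvents this obstacle, because the objective is manifestly non-negative on every compatible state, and the argument in Theorem~\ref{thm:hierarchy_convergence} that upgrades $\omega(y_i)=0$ to $\Lambda_\sigma(y_0) = f^\infty_{r,C}$ for $\mu$-almost every $\sigma$ then forces the desired approximate equalities on a genuine compatible factor.
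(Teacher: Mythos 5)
Your proposal is correct and follows essentially the same route as the paper: the same polarized $2$-norm objective from Eq.~\eqref{eq:polarization_example}, Theorem~\ref{thm:hierarchy_convergence} to extract a compatible product state attaining $f^\star_{r,C}$, and Lemma~\ref{lem:two-norm} to guarantee that suitable $r,C$ exist whenever $P$ is compatible. Your closing observation about why the sum-of-squares encoding sidesteps the obstruction in Remark~2 matches the paper's reasoning as well.
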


\begin{proof}
In order to solve Problem \ref{prob:compatibility}, we need to show that there exists a state $\rho$ that is compatible with the description of the causal structure and that produces statistics that are close in 2-norm to the observed statistics. In particular, for the triangle scenario it must hold that
\begin{align} \label{eq:norm_distance}
    \sum_{a,b,c} \left( \rho(E^{11}_a F^{11}_b G^{11}_c ) - P(a,b,c) \right)^2 \leq \epsilon,
\end{align}
where $\rho$ is a compatible state.
Once again we can polarize the expression, which yields the objective function
\begin{align} \label{eq:causal_compatibility_obj}
    \min_{\rho \in K(\Dcal^2)} \sum_{a,b,c} \rho(E^{11}_a F^{11}_b G^{11}_c E^{22}_a F^{22}_b G^{22}_c ) -2P(a,b,c) \rho(E^{11}_a F^{11}_b G^{11}_c ) + P(a,b,c)^2,
\end{align}
as was shown in Eq.~\eqref{eq:polarization_example}.

If the NPO hierarchy attains the optimal value $f_{r,C}^\infty$ for this objective function, there also exists a product state $\Pi_\sigma \in K(\Dcal^\infty)$ that is compatible with the infinitely inflated causal structure that attains the same optimal value by Theorem \ref{thm:hierarchy_convergence}. 
If $f_{r,C}^n > \epsilon$ for any $n$, we reject the hypothesis that the given description of the causal structure with measurement operators of rank $r$ and generators with a norm-bound of $C$ can produce the observed statistics. If there does exist a quantum description of $P(A,B,C)$ Lemma \ref{lem:two-norm} ensures that there exist $r$ and $C$ such that the optimal value is not rejected for any $n$.
In that case the restriction of $\Pi_\sigma$ to $\Dcal$ is a product state that is compatible with the triangle causal structure and that approximately produces the probability distribution $P(A,B,C)$.
\end{proof}

\noindent {\bf Remark.} Though in the limit of $n\to \infty$ the objective function \eqref{eq:causal_compatibility_obj} is equivalent to Eq.~\eqref{eq:norm_distance}, Eq.~\ref{eq:causal_compatibility_obj} is likely to be impractical to detect incompatibility. For low values of $n$ the state will not be separable and this objective function can become negative. Therefore, in practice it is more convenient to optimize over the quadratic function \eqref{eq:norm_distance}. The problem is then a convex quadratic program, which can still be solved via semidefinite programming (see e.g.~Ref.~\cite{wolkowicz2012handbook}). Alternatively, one can add the requirement that the measurements evaluate to the correct probabilities as a set of linear constraints on the state. The objective function can then be made trivial, or the problem can be reformulated as an eigenvalue optimization problem (see e.g.~Ref.~\cite[Sec.~5]{pozas2019quantum}). However, as mentioned in the remark below Theorem \ref{thm:hierarchy_convergence}, it becomes unclear whether such a hierarchy is still complete.

\subsection{Arbitrary quantum causal structures} \label{sec:arbitrary_quantum_causal_structures}
In this section we will generalize the results presented above to arbitrary causal structures for which all leaf nodes are observed classical variables. (Recall that one can also define quantum causal structures that give rise to quantum states rather than classical variables \cite{chaves2015information}, but such scenarios are beyond the scope of this paper).
Ref.~\cite[Sec.~V]{wolfe2021quantum} discusses a number of transformation rules that bring such general causal structures into a form amenable to the hierarchy introduced above.
It is argued there that these rules are sufficient to treat every structure that gives rise to classical random variables.
We will not repeat this argument here. 
However, for each of their transformation rules, we will explain how they can be applied in our modified framework.

We will first argue that the approach described in the previous sections is applicable to all \emph{network scenarios}, which are two-layered causal structures with observed leaf nodes, i.e.\ nodes without children.
In the second step, we will map any \emph{latent exogenous} causal structure to such a network scenario. Latent exogenous causal structures are those in which the only unobservable nodes are root nodes, i.e.\ nodes that have no parents.
The final step is to map non-exogenous causal structures to exogenous ones by introducing a new type of node that can also be treated in our model. 

\subsubsection{Network scenarios}
It is not difficult to see that Theorem \ref{thm:hierarchy_convergence} and Corollary \ref{cor:causal_compatibility} can be extended to arbitrary \emph{correlation scenarios}, which are the causal structures that only have a bottom layer of independent latent systems with arrows pointing to a top layer of observed variables. The triangle scenario is an example of a correlation scenario. 
If the causal structure has $L$ latent (quantum) variables, one can employ the proof strategy of Theorem~\ref{thm:hierarchy_convergence} with $L\cdot g$ levels of inflation. Indeed, for the triangle scenario one has 3 latent quantum systems, and thus $3g$ levels of inflation were sufficient. 
The proof that causal polynomial optimization can be solved with an SDP hierarchy as described above remains nearly the same, with the only difference that the algebra $\Dcal^n$ modeling the level-$n$ inflated causal structure has to be defined in accordance with the proposed causal structure. 
To show that causal compatibility is also solved, one just needs to write down a similar objective function as in Eq.~\eqref{eq:causal_compatibility_obj} for the given probability distribution.

By allowing classical root nodes that only have one child in two-layered causal structures, one obtains so called \emph{network scenarios}. The Bell scenario from Fig.~\ref{fig:Bell_scenario} is an example of such a network scenario. 
Whenever a classical, observed variable is an input to another observed variable, we give the POVM elements of the latter variable an extra index. For example in the Bell scenario, where Alice has the input variable $X$, the POVM elements of Alice become $\{E_a^x\}$, with $\sum_a E^x_a = \Id$ for every $x$, so that $x$ can be interpreted as a measurement setting. POVM elements with different measurement settings, e.g.\ $E^x_a$ and $E^{x'}_a$, need no longer commute. 
Though this description introduces more variables to the model, nothing major changes in the proofs.

\subsubsection{Latent exogenous causal structures}
It is also possible to extend our result to quantum causal structures with more than two layers.
The reduction of the general case to the proof methods considered here is not immediate.
We follow Wolfe \emph{et al.}, who generalize to arbitrary causal structures in two steps. In both cases, they offer a solution for how to alter the description of these causal structures such that they fit in the framework of inflation and NPO.
We will adopt the first and alter the second method to adhere to our formalism.
One then has to show that these descriptions still obey all the results of the previous sections. 
Here we briefly outline why this is indeed the case and give the general transformation rules to map those causal structures to equivalent network scenarios.

\begin{figure}[tb!]
    \centering
    (a)
    \includegraphics[width=0.25\textwidth]{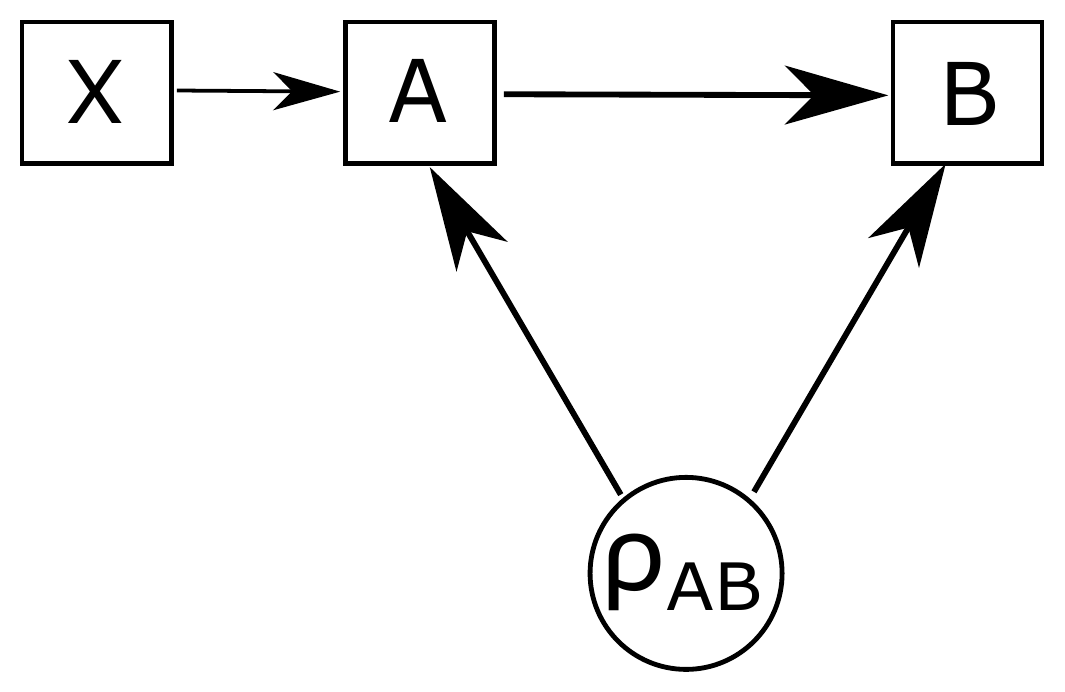} \hspace{2cm}
    (b)
    \includegraphics[width=0.25\textwidth]{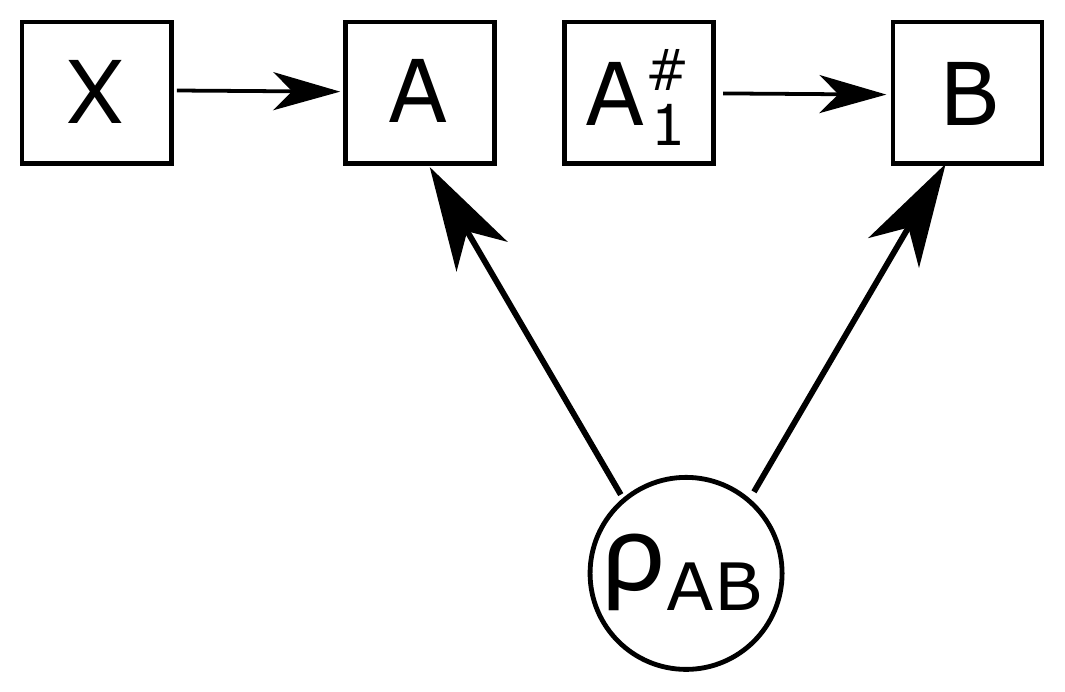}
    \caption{(a) The instrumental scenario as an example of a latent exogenous causal structure. The variable $A$ is both a parent and a child and thus the causal structure is not a network scenario. By splitting $A$ into $A$ and $A_1^\#$, as in (b), and post-selecting $A_1^\#$ on the outcome of $A$, the instrumental scenario can be modeled by a network scenario, which happens to be the Bell scenario. This process is an example of maximal interruption.}
    \label{fig:latent_exo}
\end{figure}

In the first generalization step, the causal structure is also allowed to contain observed nodes that have both one or more parents and one or more children. However, all unobserved nodes remain root nodes. Such causal structures are called \emph{latent exogenous}. 
An example is the instrumental scenario in Fig.~\ref{fig:latent_exo}. The probability distribution of the instrumental scenario is denoted by $P_{IS}(A,B,X)$. 
However, it is more common to express the statistics in terms of conditional probabilities. For a set of $N$ random variables $\{A_1, \ldots, A_N\}$ conditioned on $K$ independent variables $\{X_1, \ldots, X_K\}$, the statistics are fully captured by the combination of the conditional probabilities
\begin{align}
	P(A_1, \ldots, A_N \mid X_1 \ldots X_K) = \frac{P(A_1, \ldots, A_N, X_1 \ldots X_K)}{P( X_1 \ldots X_K)},
\end{align}
and the requirement that the setting-associated variables factorize:
\begin{align}
	P(X_1, \ldots, X_K) = P(X_1) \ldots P(X_K).
\end{align}
We refer to Ref.~\cite{richardson2014factorization} for a fuller discussion on why this is necessary and how this works in more general (classical) setups.
Hence, checking for compatibility will consist of two separate steps: First one has to confirm that variables that are being conditioned on form a product distribution; Secondly, one checks compatibility with the causal structure via the procedure outlined below.

In particular, for the instrumental scenario this simply reduces to checking compatibility of
\begin{align} \label{eq:conditional_prob}
    P_{IS}(a,b\mid x) = \frac{P_{IS}(a,b,x)}{P_{IS}(x)}. 
\end{align}
The classical random variable $A$ is both a child of $\rho_{AB}$ and $X$, as well as a parent of $B$. 
This problem is resolved by a process known as \emph{maximal interruption} \cite{gachechiladze2020quantifying, van2019quantum, agresti2019experimental,agresti2021experimental}: the variable $A$ is split into two random variables $A$ and $A_1^\#$, where $A$ is only a child and $A_1^\#$ is only a parent. The causal structure is then effectively mapped to the Bell scenario. By post-selecting $A_1^\#$ on the outcome $A=a$, i.e.\ by setting $P_{IS}(A=a, B=b\mid X=x) = P_{Bell}(A=a, B=b\mid X=x, A_1^\# = a)$, one can still obtain the allowed distributions of the original graph. 

\begin{figure}[tb!]
    \centering
    (a)
    \includegraphics[width=0.30\textwidth]{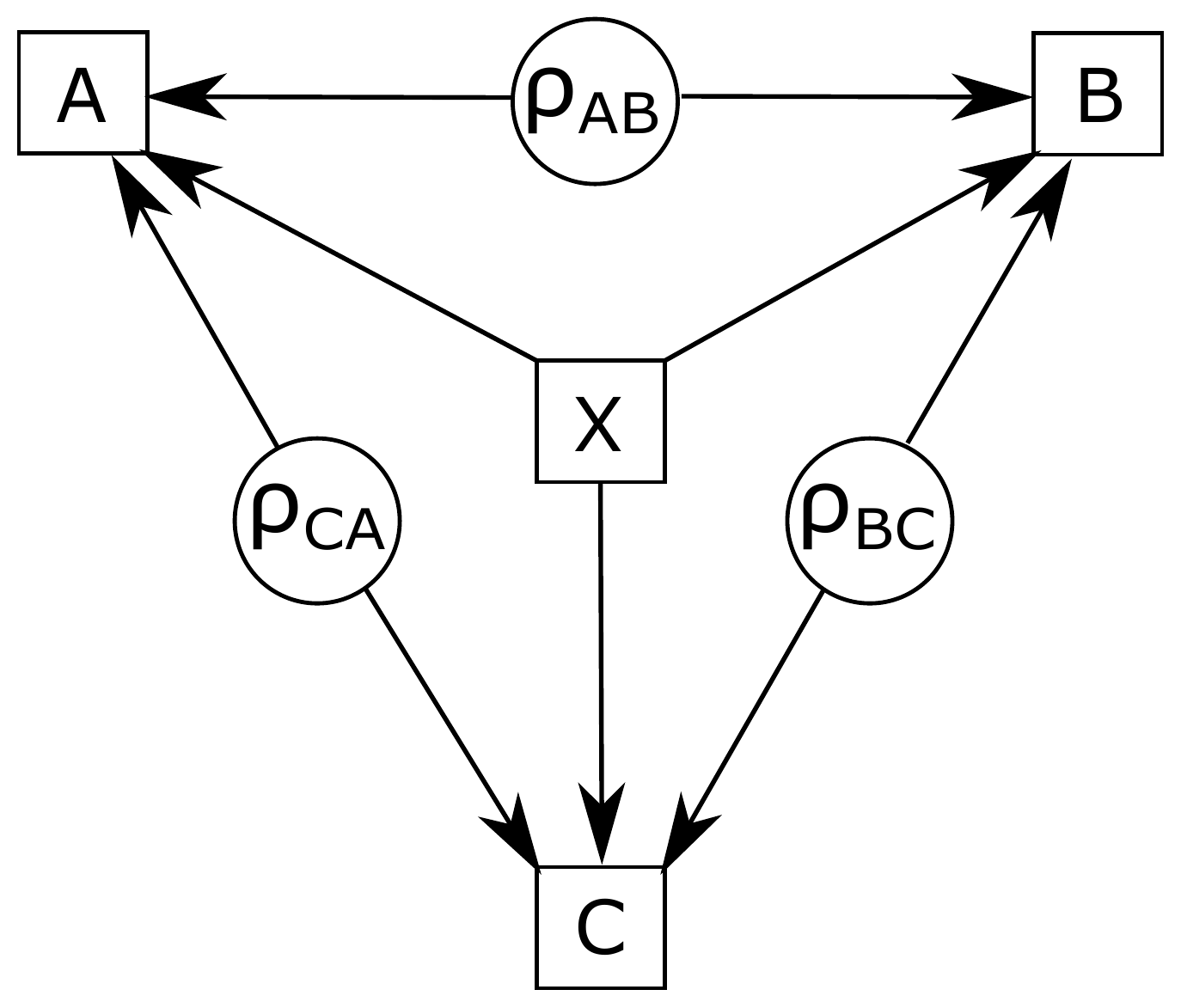} \hspace{2cm}
    (b)
    \includegraphics[width=0.30\textwidth]{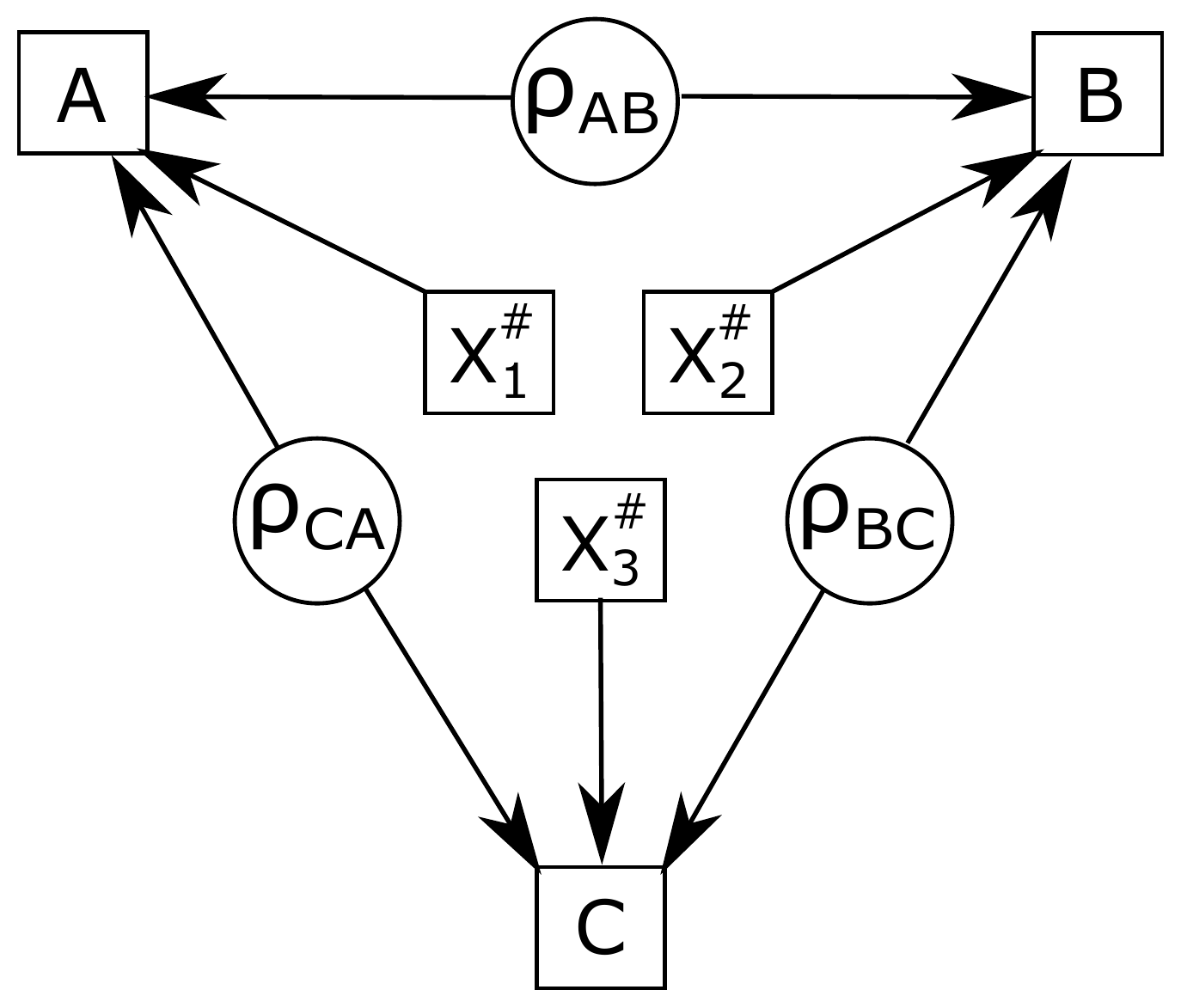}
    \caption{(a) The triangle scenario with a shared setting $X$. Via maximal interruption the observed variable $X$ is split into three independent and identically distributed random variables, producing the network scenario (b). The allowed probability distributions of the original causal structure can be obtained from the network scenario by post-selection.}
    \label{fig:triangle_shared}
\end{figure}

In the case that an observable node has multiple children, one applies a similar splitting and post-selection procedure. Consider for example the triangle scenario with a shared setting $X$ depicted in Fig~\ref{fig:triangle_shared}(a). 
Via maximal interruption the observed variable $X$ is split up into three independent and identically distributed random variables $X_1^{\#}$, $X_2^{\#}$ and $X_3^{\#}$ as depicted in Fig~\ref{fig:triangle_shared}(b). 
The resulting causal structure is a network scenario. By post-selecting on $X_1^{\#} = X_2^{\#} = X_3^{\#} = x$ one can link the (conditional) probability distribution of the network scenario to the (conditional) probability distribution $P(A,B,C\mid X=x)$ of the triangle with a shared setting and find the allowed distributions in the original causal structure.

In this way, one can map all latent exogenous causal structures to network scenarios.
The general rule is then as follows: 
Whenever an observed node is not a leaf node and is directly connected to multiple other nodes, split the node into as many copies as there are outgoing arrows. 
Remove every outgoing arrow from the original node and attach it to a copy. 
If there are no incoming arrows to the original node, remove it. 
Check whether the setting-associated variables factorize into a product distribution.
Finally, analyze this causal structure, which is now a network scenario, and apply post-selection on the copies.

\subsubsection{Non-exogenous causal structures}
The second step in the generalization is more involved. 
The causal structures are now also allowed to have latent (quantum) variables with parents.
In classical causal structures it is possible to transform these \emph{non-exogenous} causal structures into exogenous ones, for which it is then possible to apply maximal interruption as described above, if necessary \cite{evans2016graphs}. However, for quantum causal structures this is in general not possible. 

Ref.~\cite{wolfe2021quantum} gives a clear example (credited there to Stefano Pironio), which is depicted in Fig.~\ref{fig:latent_non_exo}. In this example, the quantum system $\rho_{BC}$ in Fig.~\ref{fig:latent_non_exo}(a) is non-exogenous. 
The structure in Fig.~\ref{fig:latent_non_exo}(b) has been exogenized as if it were a classical causal structure, by removing $\rho_{BC}$ and drawing arrows from the parents of $\rho_{BC}$ to its children. 
It can be seen that in the original causal structure it is possible to maximally violate a Bell inequality for either the systems $A$ and $B$ or $A$ and $C$, based on the setting determined by $S$. 
However, after the exogenization depicted on the right, the setting $S$ cannot determine anymore which pair maximally violates a Bell inequality. 
Since it is impossible that both $A$ and $B$, as well as $A$ and $C$ maximally violate a Bell inequality due to monogamy of entanglement \cite{koashi2004monogamy}, the causal structure on the right cannot produce the same statistics as the one on the left. 

We will split the treatment of non-exogenous causal structures in two parts: First, we will consider unobservable systems with one or more observed parents and at most one unobservable parent. 
Secondly, we will regard unobservable systems with multiple unobservable parents, but no observed parents. 
These two solutions can be combined to form a general set of rules for treating non-exogenous causal structures. 
It will turn out that, even though we cannot directly apply the exogenization procedure for classical causal structures, each observable leaf node will instead get an index associated to each of its parents in the classically exogenized causal structure. 
These indices then contain the information about commutation relations and independence constraints for the causal structure.

In Ref.~\cite{wolfe2021quantum} unobservable systems are eliminated by instead regarding such a system as a quantum channel applied to its unobservable parents and regarding any observed parents as a classical control for this quantum channel. If there is no observed parent to a non-exogenous system, the quantum channel that replaces it does not have such a control variable.

We opt for a slightly different treatment of non-exogenous systems with a similar interpretation. 
Instead of acting with a quantum channel on a state, we alter the POVM elements. Consider again the causal structure of Fig.~\ref{fig:latent_non_exo}(a) as an example. 
The intermediate state $\rho_{BC}$ has the interpretation of redistributing the $S$ subsystem of $\rho_{AS}$ among Bob and Charlie, based on the observed variable $S$. Hence, for different outcomes $s$ of $S$, Bob and Charlie will perform measurements over different parts of the $S$ subsystem of $\rho_{AS}$. 
However, for every specific outcome $s$, the measurements operators will be given by commuting POVMs for Bob and Charlie. We therefore define for every outcome of $S$ the commuting algebras $\Bcal^{(s)}$ and $\Ccal^{(s)}$ with elements $\{F_b^{(s)}\}_b$ and $\{G_c^{(s)}\}_c$ respectively.
Elements from these algebras obey the commutation relations
\begin{align}
    [F_b^{(s)},G_c^{(s)}] &= 0,
\end{align}
but whenever two operators have different indices $s$ and $s'$, they are no longer required to commute. The difference between this description and the measurement settings in network scenarios is that in network scenarios all POVM elements of Bob commute with all POVM elements of Charlie, while that is no longer the case here. We therefore propose a new graphical notation for this exogenization procedure, in which the nodes of Bob and Charlie are initially combined and only become commuting POVMs after the measurement setting $s$ has been processed (see Fig.~\ref{fig:latent_non_exo}(c)). We will call such nodes \emph{endogenous} nodes.

\begin{figure}[tb!] 
    \centering
         (a)
         \includegraphics[width=0.26\textwidth]{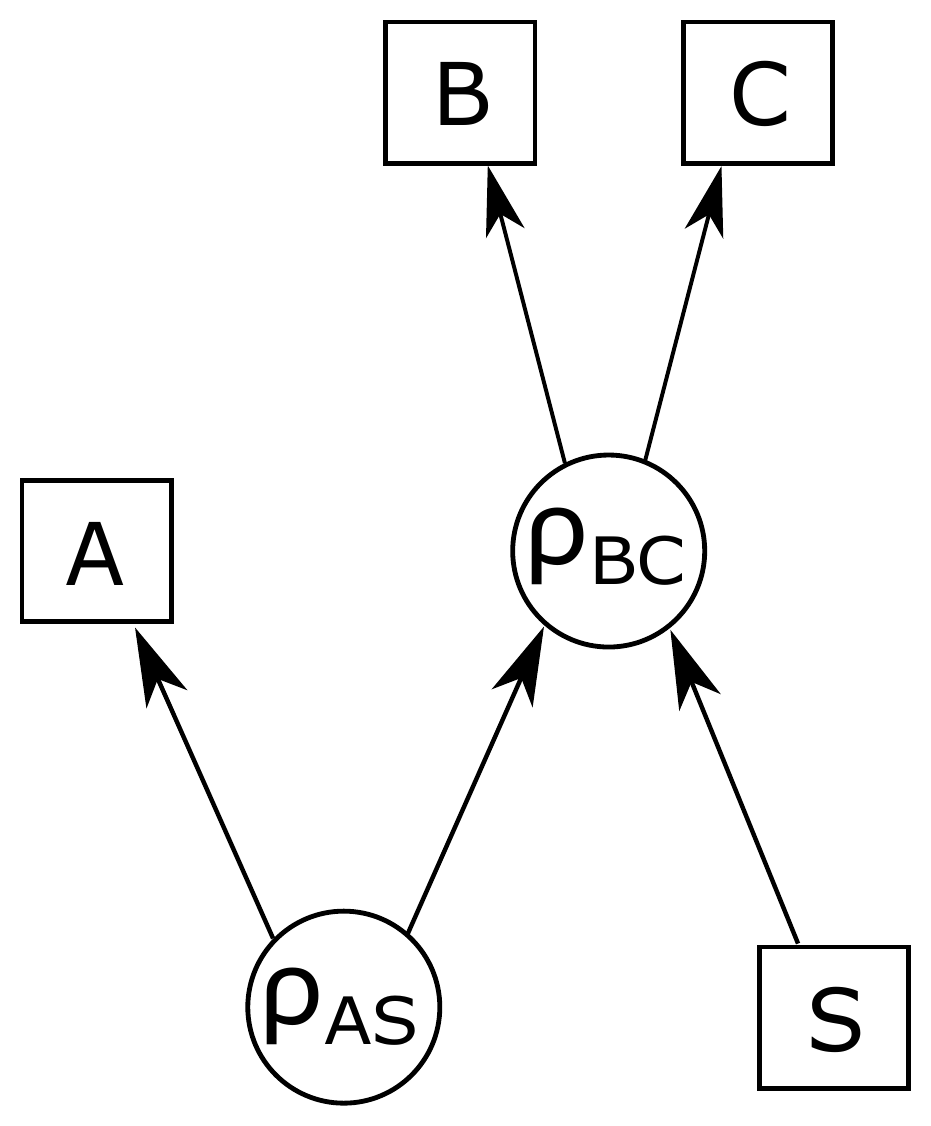} \hspace{1cm}
         (b)
         \includegraphics[width=0.24\textwidth]{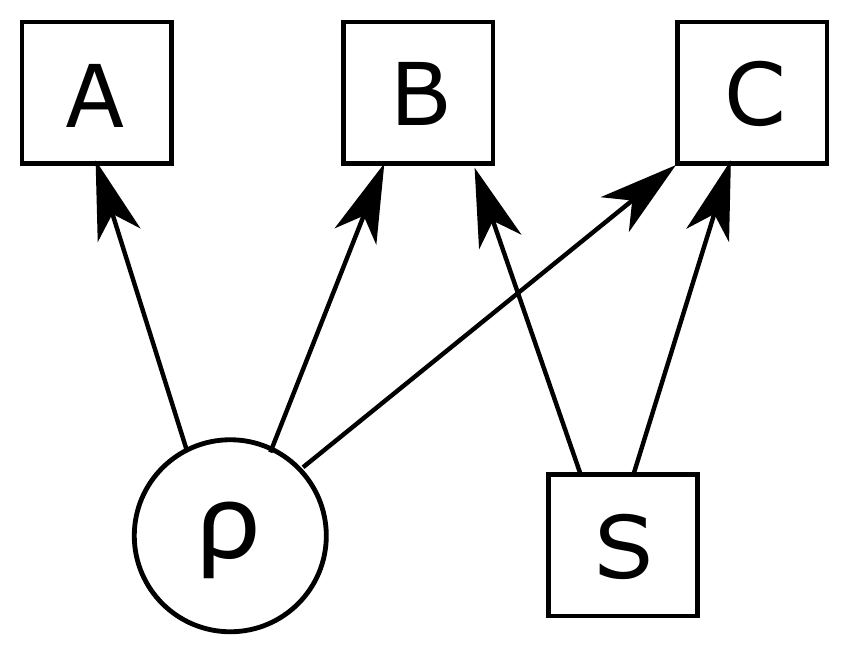}
         \hspace{1cm}
         (c)
         \includegraphics[width=0.21\textwidth]{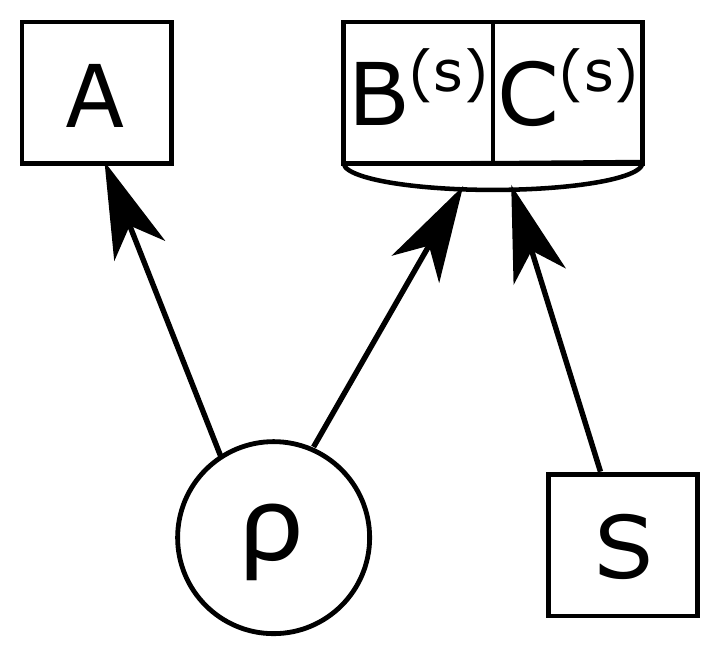}
    \caption{(a) An example of a non-exogenous causal structure. It is not possible to exogenize such a quantum causal structure as depicted in (b), like one would do for classical causal structures. This can be seen by noting that structure (a) can maximally violate a Bell inequality for $A$ and $B$ \emph{or} for $A$ and $C$ based on the measurement setting $S$. Due to monogamy of entanglement this is not possible for structure (b). In structure (c) this is solved by first regarding Bob and Charlie as one observer and only choosing the commuting algebras $\Bcal$ and $\Ccal$ after the setting $S$ has been received. This procedure is represented by a new type of node.}
\label{fig:latent_non_exo}
\end{figure}

One can still apply inflation to this causal structure as well. The algebras $\Acal$, $\Bcal^{(s)}$ and $\Ccal^{(s)}$ will be copied and get the index $i$ corresponding to the $i$'th copy of $\rho$. 
Different inflation levels will be modeled by commuting subalgebras with an exchange symmetry and one can show, using the de Finetti theorem, that in the limit a symmetric global state is separable across copies of $\rho$. The proofs of Theorem \ref{thm:hierarchy_convergence} and Corollary \ref{cor:causal_compatibility} then also follow. 

In general, one can apply the following rule to remove a non-exogenous system with observed parents and at most one latent parent, starting with non-exogenous systems that are closest to a leaf node: 
Split up each leaf node according to the structure of its local algebras, similar to the triangle scenario. Combine all leaf nodes that have a directed path from the non-exogenous system to that leaf node into one endogenous node. 
For every observed variable that is a parent of the non-exogenous node, introduce an index to the elements of the endogenous node. 
Elements of the algebra of the endogenous node commute if all such setting indices are the same and the elements originated from spatially separated systems (e.g. Bob and Charlie).

\hspace{0.2cm}

\noindent The final class of causal structures that has not been discussed yet, is the one where there are multiple unobserved parents to a latent variable. We will again first treat an example and then give the general rule.

Consider the causal structure in Fig.~\ref{fig:complicated_non-exogenous}(a). The intermediate node $\rho_{BC}$ is non-exogenous and has two latent parents. 
We start by splitting up the algebras $\Ccal$ and $\Dcal$ into their minus and plus sub-algebras, similar to the triangle scenario. Then we remove the non-exogenous node by taking those algebras together into an endogenous node that are its descendants, as was done in Fig.~\ref{fig:latent_non_exo}(c).
In this case, that will remove $\rho_{BC}$ and combine $\Bcal$ and $\Ccal_-$ into an endogenous node. 

When the causal structure is inflated, the root nodes are copied and given inflation indices $i,j,k$ respectively. 
The intermediate state $\rho_{BC}$ would then have gotten the two inflation indices $i,j$, which in turn are \emph{both} passed down to $\Bcal$ and $\Ccal_-$ (see Fig.~\ref{fig:complicated_non-exogenous}(b)). 
We thus have the following algebras after inflation: $\Acal^{i}, \Bcal^{ij}, \Ccal^{ij}_-, \Ccal^k_+, \Dcal^j_-$ and $\Dcal^k_+$. 
Operators from these algebras will be denoted in a similar way. 
Let $\Ecal^n$ be the algebra describing the level-$n$ inflated causal structure.
Though each of these algebras is a subalgebra of $\Ecal^n$, it is no longer true that $\Ecal^n$ is the tensor product of all these algebras, because some of them do not commute. 
This is due to the ``mixing'' of the root nodes $\rho_L^i$ and $\rho_M^j$ by the intermediate nodes $\rho^{ij}_{BC}$. 
In particular, for $B^{ij} \in \Bcal^{ij}$ and $C^{i'j'}_- \in \Ccal_-^{i'j'}$
\begin{align}
    [B^{ij}, C^{i',j'}_-] 
    \begin{cases}
    = 0 \qquad \text{if } i=i', j=j' \text{ or } i\neq i', j\neq j',\\
    \neq 0 \qquad \text{if } i=i', j\neq j' \text{ or } i\neq i', j=j'.
    \end{cases}
\end{align}
By requiring that the POVM elements of Bob and Charlie commute when they perform a measurement on the same state $\rho^{ij}_{BC}$, or on independent copies of the state, $\rho^{ij}_{BC}$ and $\rho^{i'j'}_{BC}$ with $i\neq i',\ j\neq j'$, we ensure that we still model spatially separated measurements in a physical scenario.

The question is now which independence relations hold and how to properly apply the quantum de Finetti theorem to show that these relations hold asymptotically in the inflation formalism.

To answer this question, note that the state $\rho_{BC}$ masks the independence of the parts of $\rho_L$ and $\rho_M$ that are sent to Bob and Charlie: The correlations in this causal structure could have also been produced by a four-partite state, of which the first and fourth subsystems are required to be independent, combined with the independent bi-partite state $\rho_R$.
The independence requirements that still have to hold after the quantum channel that produces $\rho_{BC}$ are thus
\begin{align}
    \label{eq:weird_independence_1}
    \rho(A^i B^{ij} C_-^{ij} C_+^k D_-^j D_+^k) &= \rho(A^i B^{ij} C_-^{ij} D_-^j) \rho(C_+^k D_+^k) \qquad\ \ \forall i,j,k,\\
    \label{eq:weird_independence_2}
    \rho(A^i D_-^j) &= \rho(A^i) \rho(D_-^j), \qquad \qquad \qquad \qquad \forall i,j\\
    \label{eq:weird_independence_3}
    \rho(\prod_{i=1}^n A^i B^{ii} C_-^{ii} C_+^i D_-^i D_+^i) &= \rho(\prod_{i=1}^n A^{\pi(i)} B^{\pi(i)\pi'(i)} C_-^{\pi(i)\pi'(i)} C_+^{\pi''(i)} D_-^{\pi'(i)} D_+^{\pi''(i)}) = \prod_{i=1}^n \rho(A^i B^{ii} C_-^{ii} C_+^i D_-^i D_+^i),
\end{align}
where the first two equalities signify independence within layers of inflation (namely the independence of $\rho_R$ with respect to $\rho_L$ and $\rho_M$, and the independence of $A$ and $D$), while the last equality corresponds to independence between layers of inflation.
Note, however, that the algebras $\Bcal^{ij}$ and $\Ccal^{ij}_-$ do not have this independence between inflation layers if the state is evaluated over products of operators of either algebra for which only one of the two indices coincides, e.g.\ $\rho(B^{ij} C_-^{i'j}) \neq \rho(B^{ij})\rho(C_-^{i'j})$ and $\rho(B^{ij} B^{i'j}) \neq \rho(B^{ij})\rho(B^{i'j})$.

Relaxing the independence conditions of Eqs.~\eqref{eq:weird_independence_1}-\eqref{eq:weird_independence_3} to their corresponding symmetry constraints and combining this with the commutation relations and the de Finetti Theorem, we can still asymptotically solve the (rank-constrained) causal compatibility problem even for this causal structure.

The general rule to map each inflation level of a causal structure with non-exogenous systems that have multiple latent parents, but no observed parents to an equivalent latent exogenous causal structure is then as follows: 
Start with the non-exogenous system that is closest to a leaf node. 
Split up each leaf node according to the structure of its local algebras. 
Combine all leaf nodes that have a directed path from the non-exogenous system to that leaf node into one endogenous node. 
For every root node that is an ancestor of the non-exogenous node, attach the inflation index of that root node to the elements of the endogenous node.
Elements of the algebra of the endogenous node commute if either (1) all inflation indices are pair-wise the same, i.e.\ $i=i', j=j',\ldots$, and the elements originated from spatially separated systems, e.g.\ Bob and Charlie, or if (2) all inflation indices are pair-wise different, i.e.\ $i\neq i', j\neq j',\ldots$.

\hspace{0.2cm}

\noindent
More succinctly: to reduce a non-exogenous causal structure to an exogenous one, we attach to every child of a non-exogenous node an index for all of the root nodes of the non-exogenous system, and apply the appropriate commutation relations.

Our approach is thus applicable to all relevant quantum causal structures, by sequentially applying the techniques outlined above, i.e.~by first turning a non-exogenous causal structure into an exogenous one with a new type of node that can also be treated in our model, and then applying maximal interruption to turn it into a network scenario. Checking the feasibility of the SDP hierarchy of this network scenario, as well as the required factorization of the setting-associated variables is then a necessary and sufficient procedure for checking causal compatibility.

\begin{figure}
    \centering
    (a)
    \includegraphics[width=0.3\linewidth]{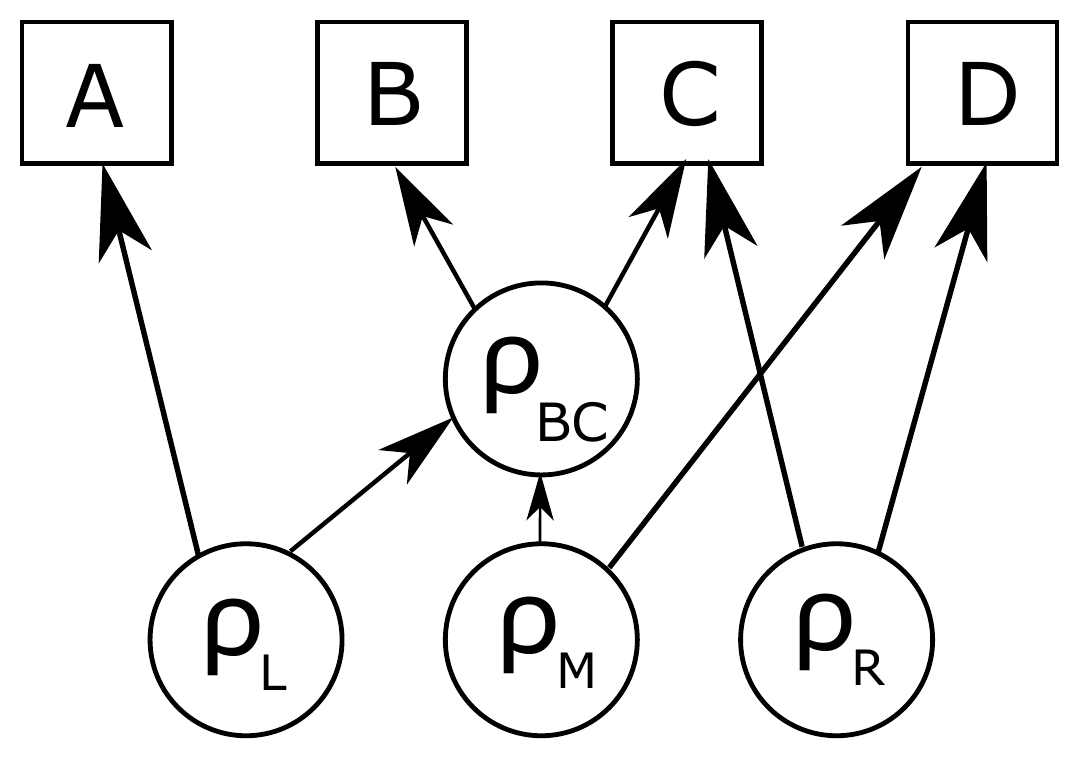}
    \hspace{2cm}
    (b)
    \includegraphics[width=0.32\linewidth]{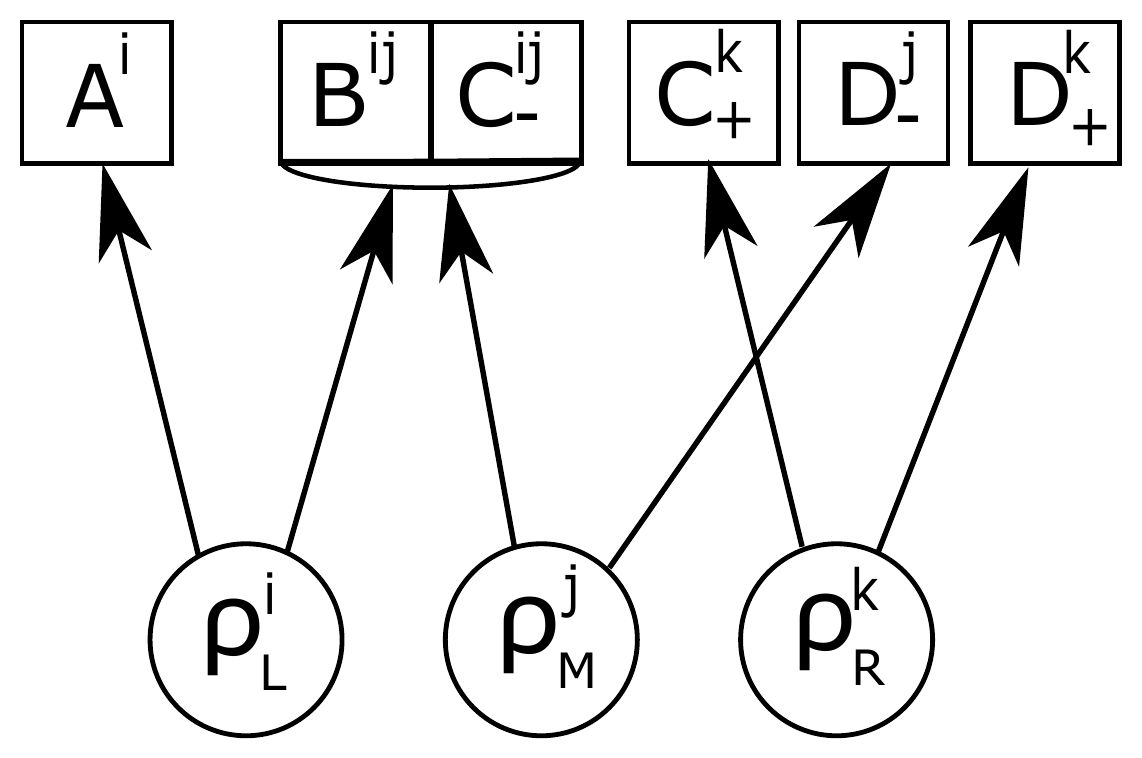}
    \caption{(a) A more complicated causal structure, where $\rho_{BC}$ has two latent parents and two observable children. The inflated version of this structure can alternatively be depicted as in Fig. (b). The algebras $\{\Bcal^{ij}\}_{ij}$ and $\{\Ccal^{ij}_-\}_{ij}$ are taken together and do not commute if there is only one overlapping index.}
    \label{fig:complicated_non-exogenous}
\end{figure}

\section{Conclusions and Outlook} \label{sec:Conclusions}

Building on the quantum inflation hierarchy of Ref.~\cite{wolfe2021quantum}, we have constructed a provably complete semidefinite programming hierarchy for the quantum causal polynomial optimization problem. 
Along the way, we have generalized the Quantum de Finetti Theorem for infinite systems \cite{raggio1989quantum} to arbitrary $C^*$-tensor products, 
and given a description of the NPO hierarchy \cite{pironio2010convergent} as an optimization procedure over states of a universal $C^*$-algebra.

A number of follow-up questions suggest themselves.

We could not prove completeness of the original quantum inflation hierarchy, due to the difficulty of constructing the local observable algebras ($\Acal_-, \Acal_+$, etc.).
To deal with this problem, we had to manually add generators for the algebra to the NPO program, and then manually impose norm constraints.
While we have argued that any constructive completeness proof will have to add elements to the algebra that is extracted from the output of the SDP hierarchy, it is not obvious that we have found the most economical way of handling the issue.
We also do not know whether there are a priori finite bounds on the norm of the local operators that combine to give the POVM elements.
Both questions merit further researcher.

We have focused mostly on the quantum causal compatibility problem, and have not investigated objective functions beyond the 2-norm distance to measured data.
While the examples given in \cite[Section~VII]{wolfe2021quantum} carry over to our formulation, it would be interesting to look into further applications.

It would be worth investigating what further properties of the universal observable algebra can be enforced with suitable constraints.
For example, one could impose that some of the subalgebras are Abelian in order to model partly classical behavior.

Numerical results for the SDP hierarchy described in this paper will be presented elsewhere.

\section{Acknowledgments}

We thank 
Johan \AA berg,
Nikolai Miklin,
Marc-Olivier Renou,
Reinhard Werner, and
Elie Wolfe
for insightful discussions. 
This work has been supported by Germany's Excellence Strategy -- Cluster of Excellence Matter and Light for Quantum Computing (ML4Q) EXC 2004/1 -- 390534769.

{\bf Data sharing.} Data sharing is not applicable to this article as no data sets were generated or analysed during the current study.

\bibliographystyle{sn-mathphys}
\bibliography{references}

\end{document}